\newcommand{\multiline}[1]{%
	\begin{tabularx}{\dimexpr\linewidth-\ALG@thistlm}[t]{@{}X@{}}
		#1
	\end{tabularx}
}
\algrenewcommand\alglinenumber[1]{ #1:}
\renewcommand{\algorithmicrequire}{{\tt \bf{Input:}}}
\renewcommand{\algorithmicensure}{{\tt \bf{Output:}}}
\newtheoremstyle{note}
{3pt}
{3pt}
{\upshape}
{}
{\bfseries}
{:}
{.5em}
{}
\theoremstyle{note}
\newtheorem{corollary}{Corollary}
\newtheorem{lemma}{Lemma}
\newtheorem{proposition}{Proposition}
\newtheorem{remark}{Remark}
\theoremstyle{plain}
\newtheoremstyle{noparentheses}
{\topsep}   
{\topsep}   
{\upshape}  
{0pt}       
{\bfseries} 
{.}         
{5pt plus 1pt minus 1pt} 
{\thmname{#1} \thmnumber{#2} \thmnote{#3}}  
\theoremstyle{noparentheses}
\newtheorem{theorem*}{Theorem}
\newtheorem{lemma*}{Lemma}
\newtheorem{definition*}{Definition}
\newtheorem{corollary*}{Corollary}
\newtheorem{proposition*}{Proposition}
\newtheorem{property*}{Property}
\newcommand{\epc}{\hspace{1pc}}
\newcommand{\Diag}{\mbox{\rm Diag}\, }
\newcommand{\vect}{\mbox{\rm vec}\, }
\newcommand{\ex}{{\mathbb E}}
\newcommand{\subto}{{\rm s.t.} }
\newcommand*{\trans}{{\mathsf{T}}}
\newcommand*{\ctrans}{{\mathsf{H}}}
\newcommand{\beq}{\begin{equation}}
	\newcommand{\eeq}{\end{equation}}
\newcommand{\bea}{\begin{array}*{20}{lll}}
	\newcommand{\eea}{\end{array}}
\newcommand{\disps}{\displaystyle}
\newcommand{\sss}{\scriptscriptstyle}
\newcommand{\upsmile}{\buildrel{\lower3pt\hbox{$\scriptscriptstyle\smile$}} 
\over}
\DeclareMathAlphabet\mathbfcal{OMS}{cmsy}{b}{n}
\def\LRT#1#2{\!
	\raisebox{.2ex}{$
		{{\scriptstyle\;#1}\atop{\displaystyle\gtrless}}
		\atop
		{\raisebox{-1.25ex}{$\scriptstyle\;#2$}}
		$}
	\!}
\newcommand{\bzero}{\mbox{\boldmath{$0$}}}
\newcommand{\bA}{\mbox{\boldmath{$A$}}}
\newcommand{\ba}{\mbox{\boldmath{$a$}}}
\newcommand{\bB}{\mbox{\boldmath{$B$}}}
\newcommand{\bb}{\mbox{\boldmath{$b$}}}
\newcommand{\bc}{\mbox{\boldmath{$c$}}}
\newcommand{\bD}{\mbox{\boldmath{$D$}}}
\newcommand{\bd}{\mbox{\boldmath{$d$}}}
\newcommand{\bh}{\mbox{\boldmath{$h$}}}
\newcommand{\bI}{\mbox{\boldmath{$I$}}}
\newcommand{\bM}{\mbox{\boldmath{$M$}}}
\newcommand{\bR}{\mbox{\boldmath{$R$}}}
\newcommand{\br}{\mbox{\boldmath{$r$}}}
\newcommand{\bs}{\mbox{\boldmath{$s$}}}
\newcommand{\bU}{\mbox{\boldmath{$U$}}}
\newcommand{\bu}{\mbox{\boldmath{$u$}}}
\newcommand{\bv}{\mbox{\boldmath{$v$}}}
\newcommand{\bx}{\mbox{\boldmath{$x$}}}
\newcommand{\by}{\mbox{\boldmath{$y$}}}
\newcommand{\bz}{\mbox{\boldmath{$z$}}}
\newcommand{\rmd}{{\rm d}}
\newcommand{\rmH}{{\rm H}}
\newcommand{\rmI}{{\rm I}}
\newcommand{\rmp}{{\rm p}}
\newcommand{\rmr}{{\rm r}}
\newcommand{\rmS}{{\rm S}}
\newcommand{\rmt}{{\rm t}}
\newcommand{\rmv}{{\rm v}}
\newcommand{\calF}{{\cal F}}
\newcommand{\calG}{{\cal G}}
\newcommand{\calO}{{\cal O}}
\newcommand{\calP}{{\cal P}}
\newcommand{\calS}{{\cal S}}
\newcommand{\bbC}{{\mathbb C}}
\newcommand{\bbR}{{\mathbb R}}
\newcommand{\bbH}{{\mathbb H}}
\newcommand{\balpha}{\mbox{\boldmath{$\alpha$}}}
\newcommand{\bgamma}{\mbox{\boldmath{$\gamma$}}}
\newcommand{\bPhi}{\mbox{\bm{{$\mit \Phi$}}}}
\newcommand{\bSigma}{\mbox{\bm{{$\mit \Sigma$}}}}
\newcommand{\bPsi}{\mbox{\bm{{$\mit \Psi$}}}}
\newcommand{\bmu}{\mbox{\boldmath{$\mu$}}}
\definecolor{RED}{rgb}{1,0,0}\definecolor{BLUE}{rgb}{0,0,1} 
\begin{document}
	\title{Radar Code Design for the Joint Optimization of Detection Performance and Measurement Accuracy in Track Maintenance\\
	\thanks{The work of Augusto~Aubry, Vincenzo~Carotenuto, and Antonio~De~Maio was supported in part by
		the European Union under the Italian National Recovery and Resilience Plan
		(NRRP) of NextGenerationEU, partnership on “Telecommunications of the Future” (PE00000001 - Program “RESTART”). The work of Tao~Fan, Xianxiang~Yu, and Guolong~Cui was supported in part by the National Natural Science Foundation of China under Grants 62271126 and 62101097, and in part by the 111 Project under Grant B17008. (\emph{Corresponding author: Antonio~De~Maio}.)}
		\thanks{Tao~Fan, Xianxiang~Yu, and Guolong~Cui are with the School of Information and Communication Engineering, University of Electronic Science and Technology of China, Chengdu 611731, China. Tao~Fan is also with the Department of Electrical and Information Technology Engineering, Universit\`a degli Studi di Napoli “Federico II”, I-80125 Napoli, Italy. (e-mail: thaumielfan@gmail.com; xianxiangyu@uestc.edu.cn;  cuiguolong@uestc.edu.cn).}	
		\thanks{Augusto~Aubry, Vincenzo~Carotenuto, and Antonio~De~Maio are with the Department of Electrical and Information Technology Engineering, Universit\`a degli Studi di Napoli “Federico II”, I-80125 Napoli, Italy (e-mail: augusto.aubry@unina.it; vincenzo.carotenuto@unina.it; 
		ademaio@unina.it).}}
	
	\author{
		\IEEEauthorblockN{Tao~Fan, Augusto~Aubry, \emph{Senior Member, IEEE}, Vincenzo~Carotenuto, \emph{Senior Member, IEEE}, \\Antonio~De~Maio, \emph{Fellow, IEEE}, Xianxiang~Yu, \emph{Member, IEEE}, and Guolong~Cui, \emph{Senior Member, IEEE}}
	}
	
	\maketitle
	\IEEEpeerreviewmaketitle
	
		\label{abstract}
	\begin{abstract}
		\boldmath
		This paper deals with the design of slow-time coded waveforms which jointly optimize the detection probability and the measurements accuracy for track maintenance in the presence of colored Gaussian interference. The output signal-to-interference-plus-noise ratio (SINR) and Cram\'er Rao bounds (CRBs) on time delay and Doppler shift are used as {figures of merit} to accomplish reliable detection as well as accurate measurements. The transmitted code is subject to radar power budget requirements and a similarity constraint. To tackle the resulting non-convex multi-objective optimization problem, a polynomial-time algorithm that integrates scalarization and tensor-based relaxation methods is developed. The corresponding relaxed multi-linear problems are solved by means of the maximum block improvement (MBI) framework, where the optimal solution at each iteration is obtained in closed form. Numeral results demonstrate the trade-off between the detection and the estimation performance, along with the acceptable Doppler robustness achieved by the proposed algorithm.  
	\end{abstract}

	\begin{IEEEkeywords}
		Radar code  design, track maintenance, Cram\'er Rao bound (CRB), tensor-based relaxation method, maximum block improvement (MBI).
	\end{IEEEkeywords}
	
	\section{Introduction}
	The advancements in flexible digital waveform generation technologies and high-speed signal processing equipments, have empowered radar systems to improve sensing performance by dynamically customizing waveforms to match the target and environment. Accordingly, numerous techniques have been proposed to address waveform optimization tailored for tasks involving detection, tracking, classification, and imaging\cite{gini2012waveform,farina2017impact, cui2020radar}.
	
	In the context of target tracking, bespoke waveforms for subsequent transmissions are designed to maintain a high-quality trajectory, leveraging the \emph{a priori} knowledge provided by the tracker regarding the target state and the sensing environment awareness \cite{haykin2006cognitive}. Two research streams have emerged. The former focuses on selecting the optimal waveform from a finite library of pre-configured waveforms to achieve the most accurate estimate of the target state for the next time step in the filtering process\cite{kershaw1994optimal,kershaw1997waveform,sen2010ofdm,benavoli2019joint}. The latter aims at synthesizing signals which ensure reliable target detection and accurate measurements for the tracker, potentially improving the tracking performance\cite{li2006signal,de2009design,aubry2012cognitive,aubry2013knowledge,stoica2011optimization,soltanalian2013joint,forsythe2005waveform,li2008range,huleihel2013optimal}. This paper frames along the second research line.
	
	There has been a great amount of research activities in waveform synthesis for point-like target track maintenance. The majority of existing work focuses on a single function perspective, prioritizing either the detection performance or the parameter estimation accuracy. For example, considering detection performance, the output signal-to-interference-plus-noise ratio (SINR) is commonly regarded as a design criterion. In \cite{li2006signal}, the maximization of the SINR under energy and similarity constraints was employed to enhance the detection probability for a stationary point-like target embedded in a colored disturbance with known covariance matrix. In \cite{de2009design}, the slow-time phased code ensuring a certain degree of similarity to a prescribed radar sequence was devised for a target with known Doppler shift. Along this line, in \cite{aubry2012cognitive}, the transmit waveform and receive filter with the same constraints as in \cite{de2009design} were jointly designed in the presence of signal-dependent clutter, whereas in \cite{aubry2013knowledge} the constant modulus constraint is replaced by an energy constraint to obtain additional design  degrees of freedom. In \cite{stoica2011optimization} and \cite{soltanalian2013joint}, under the assumption of discrete uncorrelated interference, transceivers were synthesized according to the mean-square error criterion forcing the transmitted waveform to exhibit either a constant modulus or a peak-to-average power ratio (PAPR) constraint. Considering spectrum congested environments\cite{aubry2019multi, aubry2021experimental}, suitable controls on the waveform spectrum during the design process were also studied in \cite{kay2007optimal, aubry2015new, aubry2016forcing, aubry2020design, carotenuto2020assessing, yu2022multispectrally,yang2022multi}. With a special emphasis on transmitter/transceiver synthesis to improve target detectability with known target measurements, \cite{cui2014mimo, wu2017transmit, yu2020mimoclutter, yu2020mimomulti} proposed efficient design procedures for multiple-input multiple-output (MIMO) radar systems operating in complex environments. Some other waveform design methods to enhance the detection performance leveraging on information-theoretic measures can be found in\cite{de2007design, tang2010mimo, tang2015relative,liang2015waveform,naghsh2017information, tang2019spectrally,fan2021spectrally,fan2020min}. Furthermore, robust design frameworks using worst-case SINR or average SINR for imprecisely known target delays, Doppler shifts, and azimuths were pursued in \cite{de2010doppler, naghsh2013doppler, aubry2015optimizing, zhao2017maximin, cui2017space, yao2023robust, fan2024joint}. 
	
	In the context of target parameters estimation, to infer accurately delay and Doppler, various radar signals with near-thumbtack-like ambiguity functions, including the Costas frequency coding, phase-coded, and multi-carrier phase-coded, were constructed\cite{levanon2004radar}. Waveform optimization for power constrained MIMO radar using Cram\'er Rao bound (CRB), was considered in \cite{forsythe2005waveform}, seeking to improve the accuracy of single target azimuth estimation in both spatial and temporal white noise conditions. In \cite{li2008range}, this approach was extended to the case of multiple targets in the presence of spatially colored noise. Three design criteria were investigated, involving the minimization of the trace, the determinant, and the largest eigenvalue of the CRB matrix. In \cite{huleihel2013optimal}, assuming unit target detection probability, the authors optimized the Bayesian CRB or the Reuven-Messer bound for a multi-step cognitive MIMO radar system to get adaptive waveforms. Some notable research on waveform design for target estimation under quality of service (QoS) constraints in dual-functional radar-communication systems was pursued in \cite{bicua2019radar,liu2021cramer}. The above mentioned techniques prioritize either the detection or the estimation performance independently, potentially leading to an uncontrollable impact on the other quality parameter. In particular, an excessive emphasis on detection performance may degrade estimation accuracy, resulting in a poor measurement-to-track data association. Conversely, without monitoring the detection performance may render a missed detection, leaving no measurement available for the tracking.
	
	As a matter of fact, few works have addressed the problem of waveform design to the trade-off between the detection performance and the estimation accuracy. In \cite{de2008code,de2007code}, considering a signal-independent noise environment, the slow-time code was devised by optimizing the output SINR while controlling the achievable values for the Doppler estimation accuracy and the similarity with a reference sequence. Moreover, the authors extended the approach in \cite{de2009code} to spatial-temporal processing accounting for an extra constraint for the azimuth (or the normalized spatial frequency) estimation accuracy. In \cite{de2011pareto}, aiming at maximizing the SINR and minimizing the CRB on the Doppler shift estimation accuracy, the problem of Pareto-optimal waveform design with the same constraints as in \cite{de2009code} was tackled by resorting to the scalarization technique. In \cite{sen2013papr}, the SINR and CRBs on Doppler, azimuth, as well as target-scattering coefficient estimation accuracies, for an orthogonal frequency division multiplexing MIMO (OFDM-MIMO) radar,  were taken as figure of merits along with the PAPR constraint. The non-dominated sorting genetic algorithm II was used to solve the result multi-objective optimization problem. 
	
	In this paper, a novel waveform design strategy for track maintenance is proposed, with the goal of optimizing both the detection performance and the accuracy of time delay and Doppler shift estimation. Specifically, a single-input single-output (SISO) radar tracking a point-like target in a colored noise environment is considered. The system emits a slow-time coded (both in amplitude and in phase) pulse train with each pulse modulated by a linear chirp signal. {Unlike most of the previous works\cite{de2007code,de2008code,de2009code,de2011pareto} that do not account for the delay-Doppler estimation accuracies, the SINR and the determinant of the CRB on delay and Doppler estimation are used as performance metrics.} Together with the power budget requirement and the similarity with a reference sequence, the design of slow-time signal is formulated as a constrained multi-objective optimization problem. In order to solve it, the scalarization technique \cite{de2011pareto} is first utilized to transform the original problem into general quartic programs with quadratic constraints (an inhomogeneous quartic polynomial in the objective function). Then, leveraging the tensor-based relaxation method in conjunction with the maximum block improvement (MBI) framework \cite{aubry2013ambiguity,chen2022generalized}, the transmitted code is constructed from the best block produced by a linear-improvement subroutine. The multi-linear function for the real-valued complex inhomogeneous quartic polynomial are formally derived, and the monotonicity of the tensor-based relaxation method is strictly guaranteed. The computational complexity of the procedure increases linearly with the number of iterations and squarely with the code length. At the analysis stage, some numeral results are provided to assess the performance of the new encoding algorithm in terms of convergence, Pareto curves, and Doppler sensitivity.
	
	To the authors best knowledge, this is the first study of a slow-time code synthesis technique that jointly optimizes the detection probability and the delay-Doppler estimation accuracies under energy and similarity constraints for colored Gaussian disturbance. In this respect, this paper offers the following main contribution
	\begin{enumerate}[a)]
		\item the development of a polynomial iterative procedure that encompasses scalarization, tensor-based relaxation, and the MBI approaches to address the resulting multi-objective optimization problem, and the investigation of the convergence properties for the relaxed procedure;
		\item the construction of a fourth order conjugate super-symmetric tensor for the real-valued complex inhomogeneous quartic function involved in tensor-based relaxation;
		\item the presentation of numerical results aimed at validating the theoretical achievements.
	\end{enumerate}
	
	The rest of the paper is organized as follows. Section \ref{SecII} introduces the signal model for target detection and parameters estimation and defines the radar code design problem. In Section \ref{SecIII}, an iterative method based on the scalarization technique and tensor-based relaxation is developed to synthesize the radar code. Section \ref{SecIV} offers some case studies to verify the performance. Finally, in Section \ref{SecV}, conclusions and some possible future research lines are provided. 
	
	\begin{table}[htbp]
		\centering  
		\renewcommand{\arraystretch}{1.1}
		\caption{Notations}
		\begin{tabular}{p{0.46\textwidth}}
				\hline  
				\begin{itemize}
				\item Bold letters, e.g., $\ba$ (lower case), and $\bA$ (upper case) represent vector and matrix, respectively, where the $i$th element  of $\ba$ and the $(i,j)$th entry of $\bA$ are, respectively, represented as $\ba(i)$ and $\bA(i,j)$.
				\item $\bI$ and $\bzero$ denote, respectively, the identity matrix and the matrix with zero entries (their size is determined from the context). 
				\item $\|\ba\|$ denotes the Euclidean norm of $\ba$.
				\item $\Diag(\ba)$ indicates the diagonal matrix whose $i$-th diagonal element is the $i$-th entry of $\ba$.
				\item $\text{eig}(\bA)$, $\lambda_{\min}(\bA)$, and $\lambda_{\max}(\bA)$ are the eigenvalues, the smallest, and largest eigenvalue of $\bA \in \bbH^N$, respectively. 
				\item $\det (\bA)$ denotes the determinant of $\bA$. 
				\item $\vect(\bA)$ denotes the column vector obtained by stacking the columns of $\bA$.
				\item For any $\bA\in\bbH^N$, $\bA \succ 0, \bA \succeq 0$  and $\bA \prec 0$ indicate that the matrix $\bA$ is positive definite, positive semi-definite and negative-definite, respectively. 
				\item Letter $\jmath$ represents the imaginary unit (i.e., $\jmath$ = $\sqrt{-1}$).
				\item $(\cdot)^\trans$, $(\cdot)^{*}$ and $(\cdot)^\ctrans$ are the transpose, complex conjugate, and conjugate transpose operators, respectively.
				\item $\mathbb{R}^N$, $\mathbb{C}^N$, $\mathbb{C}^{N\times N}$, $\bbH^{N}$, $\bbH_{++}^{N}$ are, respectively, the set of $N$-dimensional vectors of real numbers, complex numbers, $N\times N$ matrices, $N\times N$ Hermitian matrices, and Hermitian positive definite matrices.
				\item $\odot$ and $\otimes$ denote, respectively, the Hadamard product and the Kronecker product.
				\item For any complex number $x$, $\Re\{x\}$, $\Im\{x\}$, $|x|$, and $\arg(x)$ denote, respectively, the real part, imaginary part, modulus, and phase.
				\item $\Pi_k(i_1,i_2,\cdots,i_d)$ is the set of all $k$-permutations without repetition of the indices $\{i_1,i_2,\cdots,i_d\}$.
				\item Given two sets $A$ and $B$, $A\setminus B$ denotes the set of all elements in $A$ that are not in $B$.
				\item The statistical expectation is indicated as $\ex[\cdot]$.
				\item $\frac{\partial f(x)}{\partial x}\!$ means the derivative of $f(x)$ with respect to $x$.
				\end{itemize}\\
				\hline
		\end{tabular}
	\end{table}
	
	\section{Problem Formulation}\label{SecII}
	Let us consider a mono-static radar system that emits a burst of $M$ coherent pulses, modulated according to a slow-time coding; each pulse is a linear chirp signal $s(t)$ with bandwidth $B$ and pulsewidth $T_{\rm p}$, i.e., at baseband,
	\beq
		s(t)=\text{exp}\bigg[\jmath \pi \frac{B}{T_\rmp}\Big(t-\frac{T_\rmp}{2}\Big)^2\bigg]\text{rect}\Big(\frac{t}{T_\rmp}-\frac{1}{2}\Big),
	\eeq
	where $\text{rect}(t)$ is a rectangular function that is 0 outside of the interval $[-0.5, 0.5]$ and 1 within it. The baseband transmitted signal is 
	\beq
	x(t)= \sum_{m=0}^{M-1}c(m) s(t-mT_\rmr),
	\eeq
	where $T_\rmr$ denotes the pulse repetition interval (PRI), and $\bc=[c(0), c(1), \cdots, c(M-1)]^\trans \in \mathbb{C}^M$ is  the slow-time code. 
	
	The signal backscattered by {a single moving} point-like target and received by the radar is \cite{de2008code,de2009design}
	\beq
	r(t)= \alpha x{(t-\tau)} \text{exp}{[\jmath 2 \pi  (f_\rmd+f_0) (t-\tau)]},
	\eeq 
	where $\alpha$ is a complex parameter accounting for the target reflectivity and channel propagation, $\tau $ and $f_\rmd$ denote the round-trip time delay and the target Doppler frequency, and $f_0$ stands for the radar carrier frequency.  This signal is downconverted and sampled with a temporal step-size $\Delta t$. Hence, ignoring the intra-pulse Doppler shift, the fast-slow-time observation vector can be expressed as 
	
	\beq \label{echo}
		\br=\alpha \cdot (\ba(f_{\rm d}) \! \odot \! \bc) \! \otimes \! {\bs(\tau)} +   \bv,
	\eeq 
	where
	\begin{itemize}
		\item $\bs(\tau) \in \bbC^N$ is the sampled delayed signal associated with $s(t)$ and time-delay $\tau$, i.e., $\bs(\tau) =[s(\tau_0-\tau), s(\tau_0+\Delta t-\tau), \cdots, s(\tau_0+(N-1)\Delta t-\tau)]^\trans$, where $\tau_0\approx\tau$ is the first time sample where the echo from the range cell of interest occurs and $N\Delta t=T_\rmp$.
		\item $\ba(f_{\rm d})=[1,e^{\jmath2\pi f_{\rm d}T_\rmr}, \cdots, e^{\jmath2\pi f_{\rm d}(M-1)T_\rmr}]^\trans$ represents the temporal steering vector {(i.e., temporal signature)}.
		\item $\bv\in\bbC^{MN}$ denotes the signal-independent interference vector including additive noise jamming, hot clutter, {terrain-bounce or terrain scattered interference\cite{li2006signal,de2009design}}. It is modeled as a zero-mean complex circular Gaussian vector with positive definite covariance matrix $\mathbb{E}[\bv\bv^\ctrans]=\bSigma_{\rmv }$. Herein, the focus is on the slow-time domain, assuming that the signal-independent interference in the fast-time domain is white. Consequently, the covariance matrix can be cast as  $\bSigma_{\rmv}=\bSigma_{\rmt}\otimes \bI_N$, where $\bSigma_{{\rmt}}\in \bbH_{++}^{M}$ indicates the covariance matrix in slow-time.
	\end{itemize}
	
	\subsection{Performance Measures}
	\subsubsection{Detection Probability}
	The problem of detecting a target using the signal model in \eqref{echo} can be formulated as the following binary hypothesis test
	\beq \label{binhypo}
		\left\{ {\begin{array}{*{20}{l}}
				{H_0}: \br = \bv \\
				{H_1}: \br = \alpha (\ba(f_{\rm d}) \! \odot \! \bc) \! \otimes \! {\bs(\tau)} + \bv
		\end{array}} \right..
	\eeq
	Assuming that the interference covariance matrix $\bSigma_{\rmv}$ is known, the generalized likelihood ratio test (GLRT) detector over $\alpha$ for Problem \eqref{binhypo} is equivalent to the optimum test according to the Neyman-Pearson criterion under the assumption that the phase of $\alpha$ is uniformly distributed in $[0, 2\pi)$, and is given by 
	\beq
		|\br^\ctrans\bSigma_{\rmv}^{-1}((\ba(f_{\rm d}) \! \odot \! \bc) \! \otimes \! {\bs}(\tau))|^2\LRT{H_1}{H_0} \eta,
	\eeq
	where $\eta$ indicates the detection threshold determined by the desired value of the false alarm probability ($P_{\text{fa}}$). The Gaussian assumption for the interference component, implies that the detection probability can be written as
	\beq
		P_\rmd = Q_1\Big(\sqrt{2 \text{SINR}(\bc, \tau, f_\rmd)}, \sqrt{-2\ln P_{\text{fa}}}\Big), \nonumber
	\eeq
	where $Q_1(\cdot, \cdot)$ is the Marcum Q function of order 1, and $\text{SINR}(\bc, \tau, f_\rmd)$ is the output SINR, defined as 
	\begin{align}\label{SINR1}
		\!\!\text{SINR}(\bc, \tau, f_\rmd)&\!=\! |\alpha|^2[(\ba(f_{\rm d}) \!\! \odot \! \bc) \!\! \otimes \!\! {\bs(\tau)}]^\ctrans  \!\bSigma_{\rmv}^{-1}[(\ba(f_{\rm d}) \!\! \odot \! \bc) \!\! \otimes\! \! {\bs(\tau)}]\nonumber \\
		&\!=\!|\alpha|^2\|\bs(\tau)\|^2(\ba(f_{\rm d}) \! \odot\! \! \bc)^\ctrans\!\bSigma_{\rmt}^{-1} (\ba(f_{\rm d}) \! \odot \! \bc).\!\!\!
		\end{align}
	Under the assumption of $\tau_0 \approx \tau$, the term $\|\bs(\tau)\|^2$ in \eqref{SINR1} can be approximated by
	\begin{align}\label{stau2}
		\!\!\|\bs(\tau)\|^2&=\sum\limits_{n=0}^{N-1}|s(t)|^2_{t=\tau_0+n\Delta t-\tau}\nonumber\\
		&\approx \sum\limits_{n=0}^{N-1}|s(t)|^2_{t=n\Delta t} \approx \frac{1}{\Delta t}\int_{0}^{T_\rmp} |s(t)|^2 dt=N.
	\end{align}
	Then, the output SINR can be simplified as\footnote{For any finite sampling time, the reported equality holds true only approximately. However, as long as the sampling interval is short enough, the approximation becomes tighter and tighter.}
	\beq
		\text{SINR}(\bc, \tau, f_\rmd) = |\alpha|^2N\bc^\ctrans\bM_0\bc,
	\eeq
	where $\bM_0=\bSigma_{\rmt}^{-1} \odot (\ba(f_{\rm d})\ba(f_{\rm d})^\ctrans)\in \bbC^M$.

	The expression of $P_\rmd$ indicates that for a fixed $P_{\text{fa}}$, there is a monotonic relationship between SINR and $P_\rmd$. In other words, the maximization of $P_\rmd$ can be achieved optimizing the SINR over the slow-time radar code.
	
	\subsubsection{Delay and Doppler Accuracy}
	consider the estimation of $\bgamma=[\widetilde \balpha^\trans, \tau, f_{\rm d}]^\trans$, where $\widetilde \balpha\!=\!\big[\Re\{\alpha\}, \Im\{\alpha\}\big]^\trans$ is a nuisance parameter. Denoting by $\bmu({\bgamma})\!=\! \alpha (\ba(f_{\rm d})  \!\odot\!  \bc) \!\otimes\!  {\bs(\tau)}$, the Fisher information matrix $\mathbfcal{I}$ for estimating $\bgamma$ is \cite{dogandzic2001cramer}
	\beq
		\mathbfcal{I} = 2\Re\big\{\bPsi^\ctrans({\bgamma})\bSigma_{\rmv}^{-1}\bPsi({\bgamma})\big\},\label{FIM}
	\eeq
	where 
	\beq
	\bPsi(\bgamma) = \frac{\partial \bmu({\bgamma})}{\partial \bgamma^\trans}=\bigg[\frac{\partial \bmu({\bgamma})}{\partial \widetilde \balpha^\trans},\frac{\partial \bmu({\bgamma})}{\partial \tau},\frac{\partial \bmu({\bgamma})}{\partial f_\rmd}\bigg].\label{Dgamma}
	\eeq
	{Based on the Fisher information matrix, a lower bound on the covariance matrix of the estimation error in $\bgamma$ (for any unbiased estimator) can be derived, which is known as the CRB matrix. The following proposition establishes the CRB matrix for delay and Doppler estimation, expressed as a function of the slow-time code.} 
	\begin{proposition}\label{Prop1}
		Assuming $\alpha$ unknown, the CRB matrix for  $\tau$ and $f_\rmd$ is diagonal, and its diagonal elements are given by
		\begin{subequations}
			\!\!\!\!\!\!\begin{align}
				\begin{split}\label{CRBt}
					\!\!\!\![\textbf{CRB}_{\tau f_\rmd}(\bc)]_{11} &\!= \!\frac{3}{2|\alpha|^2N\pi^2 B^2} \displaystyle \frac{1}{\bc^\ctrans \bM_0 \bc},
				\end{split}\\
				\begin{split}\label{CRBf}
					\!\!\!\![\textbf{CRB}_{\tau f_\rmd}(\bc)]_{22} &\!= \!\frac{1}{2|\alpha|^2N} \displaystyle \frac{\bc^\ctrans \bM_0 \bc}{\bc^\ctrans \!\bM_0 \bc\bc^\ctrans \!\bM_2 \bc \!-\!|\bc^\ctrans \bM_1 \bc|^2},\!\!\!\!
				\end{split}
			\end{align}
		\end{subequations}
	where $\bM_1 = \Diag(\bb^*) \bM_0$, $ \bM_2 =\bb\bb^\ctrans \odot\bM_0 $, and $\bb=[0,\jmath2\pi T_\rmr,\cdots,\jmath2\pi (M-1)T_\rmr]^\trans$. 
	\end{proposition}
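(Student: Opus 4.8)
The plan is to evaluate the Fisher information matrix \eqref{FIM} in closed form, exploiting the Kronecker structure of $\bmu(\bgamma)$, and then to eliminate the nuisance amplitude $\widetilde\balpha$ by a Schur complement. First I would differentiate $\bmu(\bgamma)=\alpha(\ba(f_\rmd)\odot\bc)\otimes\bs(\tau)$ columnwise, as prescribed by \eqref{Dgamma}. Writing $\alpha=\Re\{\alpha\}+\jmath\Im\{\alpha\}$, the two amplitude columns are $(\ba(f_\rmd)\odot\bc)\otimes\bs(\tau)$ and $\jmath(\ba(f_\rmd)\odot\bc)\otimes\bs(\tau)$; the delay column is $\alpha(\ba(f_\rmd)\odot\bc)\otimes\dot\bs(\tau)$ with $\dot\bs(\tau)=\partial\bs(\tau)/\partial\tau$; and, since $\partial\ba(f_\rmd)/\partial f_\rmd=\bb\odot\ba(f_\rmd)$, the Doppler column is $\alpha(\bb\odot\ba(f_\rmd)\odot\bc)\otimes\bs(\tau)$. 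Thus every column of $\bPsi(\bgamma)$ factors as a Kronecker product of a slow-time vector (built from $\bc$) and a fast-time vector (built from $\bs(\tau)$).

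\textbf{Closing the FIM.} Because $\bSigma_\rmv^{-1}=\bSigma_\rmt^{-1}\otimes\bI_N$, the mixed-product rule makes every entry of $\bPsi^\ctrans\bSigma_\rmv^{-1}\bPsi$ factor as a slow-time quadratic form times a fast-time inner product. The slow-time factors I would identify, through $\bM_1=\Diag(\bb^*)\bM_0$ and $\bM_2=\bb\bb^\ctrans\odot\bM_0$, with $\bc^\ctrans\bM_0\bc$, $\bc^\ctrans\bM_1\bc$ and $\bc^\ctrans\bM_2\bc$ (the first already appearing in \eqref{SINR1}). The fast-time factors are $\|\bs(\tau)\|^2\approx N$ (cf.\ \eqref{stau2}), $\bs(\tau)^\ctrans\dot\bs(\tau)\approx 0$, and $\|\dot\bs(\tau)\|^2\approx\pi^2B^2N/3$. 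The last two follow from the chirp phase: on $[0,T_\rmp]$ one has $s(t)=e^{\jmath\phi(t)}$ with $\phi'(t)=2\pi(B/T_\rmp)(t-T_\rmp/2)$, so $\overline{s(t)}\,s'(t)=\jmath\phi'(t)$ is odd about $t=T_\rmp/2$ and integrates to zero, while $|s'(t)|^2=\phi'(t)^2$ integrates to $\pi^2B^2T_\rmp/3$; converting sums to integrals via $N\Delta t=T_\rmp$, exactly as in \eqref{stau2}, completes this. Forming $\mathbfcal{I}=2\Re\{\bPsi^\ctrans\bSigma_\rmv^{-1}\bPsi\}$ and using $\Re\{-\jmath z\}=\Im\{z\}$, I obtain a block structure in which the $\widetilde\balpha$ block is $2N(\bc^\ctrans\bM_0\bc)\bI_2$, the $(\tau,f_\rmd)$ block is diagonal with entries $\tfrac{2}{3}\pi^2B^2N|\alpha|^2\bc^\ctrans\bM_0\bc$ and $2N|\alpha|^2\bc^\ctrans\bM_2\bc$, and the $\widetilde\balpha$--$(\tau,f_\rmd)$ cross block has a zero first column; this last fact, together with the vanishing $(\tau,f_\rmd)$ off-diagonal, is precisely where $\bs(\tau)^\ctrans\dot\bs(\tau)\approx 0$ enters.

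\textbf{Schur complement.} The CRB matrix for $(\tau,f_\rmd)$ is the lower-right block of $\mathbfcal{I}^{-1}$, i.e.\ the inverse of the $(\tau,f_\rmd)$ block minus (cross block)$^\trans$(the $\widetilde\balpha$ block)$^{-1}$(cross block). Since the cross block has a zero first column, this correction only affects the Doppler entry and the result stays diagonal, which proves the diagonality claim; the correction equals $2N|\alpha|^2|\bc^\ctrans\bM_1\bc|^2/(\bc^\ctrans\bM_0\bc)$. Inverting the two diagonal entries then gives $[\textbf{CRB}_{\tau f_\rmd}(\bc)]_{11}=\frac{3}{2|\alpha|^2N\pi^2B^2}\,\frac{1}{\bc^\ctrans\bM_0\bc}$, which is \eqref{CRBt}, and $[\textbf{CRB}_{\tau f_\rmd}(\bc)]_{22}=\frac{1}{2|\alpha|^2N}\,\frac{\bc^\ctrans\bM_0\bc}{\bc^\ctrans\bM_0\bc\,\bc^\ctrans\bM_2\bc-|\bc^\ctrans\bM_1\bc|^2}$, which is \eqref{CRBf}.

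The main obstacle I anticipate is the second step: carrying out the chirp integrals that yield $\bs(\tau)^\ctrans\dot\bs(\tau)\approx 0$ and $\|\dot\bs(\tau)\|^2\approx\pi^2B^2N/3$ consistently with the small-sampling-step approximation of \eqref{stau2}, and keeping the Hermitian/conjugation bookkeeping straight so that the slow-time quadratic forms come out exactly as $\bc^\ctrans\bM_0\bc$, $\bc^\ctrans\bM_1\bc$ and $\bc^\ctrans\bM_2\bc$. Conceptually, the orthogonality $\bs(\tau)^\ctrans\dot\bs(\tau)\approx 0$ is the real point, since it decouples the delay from both the complex amplitude and the Doppler shift and thereby forces the CRB to be diagonal; the remainder is the routine nuisance-parameter elimination together with the positivity of $\bc^\ctrans\bM_0\bc$ and of $\bc^\ctrans\bM_0\bc\,\bc^\ctrans\bM_2\bc-|\bc^\ctrans\bM_1\bc|^2$ (Cauchy--Schwarz in the $\bSigma_\rmt^{-1}$ inner product), which guarantees invertibility of $\mathbfcal{I}$.
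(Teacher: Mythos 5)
Your proposal is correct and follows essentially the same route as the paper's proof: compute the columns of $\bPsi(\bgamma)$, exploit $\bSigma_\rmv^{-1}=\bSigma_\rmt^{-1}\otimes\bI_N$ to factor every FIM entry into a slow-time quadratic form ($\bc^\ctrans\bM_0\bc$, $\bc^\ctrans\bM_1\bc$, $\bc^\ctrans\bM_2\bc$) times a fast-time inner product, invoke the chirp approximations $\|\partial\bs(\tau)/\partial\tau\|^2\approx\pi^2B^2N/3$ and $\frac{\partial^\ctrans\bs(\tau)}{\partial\tau}\bs(\tau)\approx0$, and eliminate $\widetilde\balpha$ via the Schur complement, whose correction $2N|\alpha|^2|\bc^\ctrans\bM_1\bc|^2/(\bc^\ctrans\bM_0\bc)$ matches the paper's \eqref{JIJ}--\eqref{CRBft} computation exactly. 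All constants and the diagonality argument agree with the paper's Appendix \ref{proofCRB}.
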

	\begin{proof}
		See Appendix \ref{proofCRB} of the supplemental material.
	\end{proof}
	
	For multiple parameters estimation, the determinant of the CRB matrix is monotonically related to the volume of its error ellipsoid\cite{van2001detection}. As a result, the CRB matrix determinant, i.e.,  $\det({\textbf{CRB}_{\tau f_\rmd}(\bc)} )$, can be regarded as a valuable performance metric to control in order to improve delay and Doppler estimation accuracies; it is given by 
	\beq\label{detCRB}
	\begin{aligned}
		\det&({\textbf{CRB}_{\tau f_\rmd}(\bc)} ) \\
		&= \frac{3}{4|\alpha|^4N^2 \pi^2 B^2}  \displaystyle \frac{1}{\bc^\ctrans \bM_0 \bc\bc^\ctrans \bM_2 \bc-|\bc^\ctrans \bM_1 \bc|^2}.
	\end{aligned}
	\eeq
	
	\subsection{Problem Formulation}
	While the target is being tracked, a reliable measurement process entailing trustworthy target detection and accurate parameters estimation, namely, delay and Doppler, has the potential to enhance track maintenance performance. In this respect, it is of interest to simultaneously optimize the SINR and CRB over the radar code utilizing the predicted target's delay and Doppler information from the track file. {To comply with the radar power budget, an energy constraint is forced on the radar sequence: $\|\bc\|^2 = 1$. Furthermore, to bestow some desirable properties to the sought code, such as low peak-to-average power ratio (PAPR), low integrated sidelobe level\footnote{In the slow-time domain, the ISL reflects the capability to suppress range ambiguous returns. Generally, a lower ISL indicates better suppression of range ambiguities.} (ISL), and good Doppler resolution, a similarity constraint \cite{de2008code,cui2014mimo} is imposed, i.e., $\|\bc-\bc_0\|^2 \leq \zeta$}, where $\bc_0$ denotes a reference code with unit energy $\|\bc_0\|^2\!=\!1$ {sharing appropriate hallmarks} and $0\leq\zeta\leq 2$ is a real parameter ruling the degree of similarity\footnote{In the case $\|\bc\|^2=\|\bc_0\|^2=1$, the similarity constraint can be recast as $\Re\{\bc_0^\ctrans\bc\}\geq (2-\zeta)/2$\cite{de2009design}. When $\zeta>2$,  it can be regarded as inactive, since the objective functions are invariant to phase rotation and any feasible point $\bc$ satisfying  $\Re\{\bc_0^\ctrans\bc\}<0$ shares the same objective value as $e^{\jmath\pi}\bc$, which is feasible also for $\zeta=2$. Therefore, values of $\zeta>2$ are not of interest from an optimization standpoint.}. Incorporating the above design guidelines, {for a given delay-Doppler shift pair $(\tau,f_d)$ of interest}, the joint optimization of SINR and CRB can be formulated as a multi-objective optimization problem, expressed as follows:
	\beq \label{Pareto}	
	\calP\left\{\begin{array}{lll}
		\max \limits_{\bc} & (\det^{-1}({\textbf{CRB}_{\tau f_\rmd}(\bc)} ), \text{SINR}(\bc, \tau, f_\rmd))\\
		\subto & \|\bc\|^2 = 1\\
		& \|\bc-\bc_0\|^2 \leq \zeta
	\end{array}
	\right.\!\!\!\!.
	\eeq
	
	$\calP$ is a non-convex problem\footnote{A non-convex multi-objective optimization problem\cite{pardalos2017non} is any problem where at least one of the objective functions or the feasible region is non-convex.} due to the non-concavity of $\det^{-1}({\textbf{CRB}_{\tau f_\rmd}(\bc)} )$ and $\text{SINR}(\bc, \tau, f_\rmd)$ as well as the energy constraint. In the following section, an iterative method based on scalarization technique\cite{de2011pareto, pardalos2017non} and tensor-based relaxation\cite{aubry2013ambiguity} is developed to come up with a good quality solution to $\calP$ in a polynomial-time.
	
	\section{Radar Code Synthesis}\label{SecIII}
	Let us first use the scalarization technique to systematically transform the multi-objective optimization problem $\calP$ into a single-objective optimization problem. Specifically, the scalarized version for $\calP$ is
	\beq \label{Pareto1}
	\calP_\beta\left\{\begin{array}{lll}
		\max \limits_{\bc} & f(\bc)\\
		\subto & \|\bc\|^2 = 1\\
		& \|\bc-\bc_0\|^2 \leq \zeta
	\end{array}
	\right.,
	\eeq
	where 
	\beq\label{fc}
		f(\bc)\!=(1\!-\!\beta)(\bc^\ctrans \bM_0 \bc\bc^\ctrans \bM_2 \bc\!-\!|\bc^\ctrans \bM_1 \bc|^2)\!+\!\beta \bc^\ctrans \bM_0 \bc
	\eeq
	with $\beta \in[0, 1]$ being the scalarization weight. A worthwhile remark on the scalarization technique is now provided.
	\begin{remark}
		The parameter $\beta$ {controls} a trade-off between SINR and CRB, delineating the cost required to enhance one figure of merit at the expense of the other. Precisely, when $\beta=0$ in $\calP_\beta$, the focus is solely on CRB optimization. As $\beta$ increases, the emphasis gradually shifts towards SINR optimization, culminating at $\beta = 1$, where the focus entirely entails SINR optimization. 
	\end{remark}
	
	\begin{remark}
		Notice that $\calP_\beta$ remains, in general, a NP-hard problem\footnote{For a compact set, like that defined by the energy and similarity constraints in our study, quartic polynomial optimization yields NP-hard problems, in general\cite{he2014inhomogeneous}.}. Accordingly, its polynomial-time synthesized solution is an approximation of the Pareto point to $\calP$ associated with the chosen $\beta$ parameter. 
	\end{remark}
	
	In order to tackle the non-convex problem $\calP_\beta$, a tensor-based relaxation method is developed \cite{aubry2013ambiguity}. In this respect, it is first necessary to recast the objective function into an equivalent way adding two appropriate terms $\mu_1(\bc^\ctrans\bc)^2$, and $\mu_2(\bc^\ctrans\bc)$, where $\mu_1\geq 0$ and $\mu_2\geq 0$ are bespoke finite constants. This is just a technical trick that comes in handy to develop a solution technique leading to the synthesis of a high quality code. Hence, the objective function (owing to the forced constraints) can be equivalently recast as
	{\beq\label{fc1}
		\widetilde f(\bc) \!=f(\bc)\!+\!  \mu_1(\bc^\ctrans\bc)^2 \!+\! \mu_2(\bc^\ctrans\bc) ,
	\eeq}
	 with the resulting {equivalent} optimization problem
	\beq \label{Pareto2}
	{\overline \calP}_\beta\left\{\begin{array}{lll}
		\max\limits_{\bc} & \widetilde f(\bc) \\
		\subto & \|\bc\|^2 = 1\\
		& \|\bc-\bc_0\|^2 \leq \zeta
	\end{array}
	\right..
	\eeq
	
	To proceed further and handle ${\overline \calP}_\beta$ by means of the tensor-based relaxation method in \cite{aubry2013ambiguity}, let us represent the complex inhomogeneous quartic polynomial\footnote{The inhomogeneous quartic polynomial $\widetilde f(\bc)$ with respect to $\bc$ can be homogenized into a quartic function of $\widetilde \bc=[\bc^\trans, 1]^\trans$.} $\widetilde f(\bc)$, via a fourth order conjugate super-symmetric tensor \cite{sidiropoulos2017tensor}, i.e.,
	\begin{align}\label{Tensor_F}
		\widetilde f(\bc)=&\calF\bigg(\binom{\widetilde \bc}{\widetilde \bc^*}, \binom{\widetilde \bc}{\widetilde \bc^*}, \binom{\widetilde \bc}{\widetilde \bc^*}, \binom{\widetilde \bc}{\widetilde \bc^*}\bigg)
	\end{align}
	where $\widetilde \bc=[\bc^\trans, 1]^\trans\in\bbC^{M+1}$ and the tensor  $\calF\!\in\!\bbC^{(2M+\!2)\times(2M+\!2)\times(2M+\!2)\times(2M+\!2)}$ is defined in Equation \eqref{Fpi} (see Appendix \ref{proof_tensor} of the supplemental material). Consequently, ${\overline \calP}_\beta$ can be equivalently recast as 
	\beq \label{P_detCRB2_1}
	\widetilde \calP_\beta\left\{\begin{array}{lll}
		\!\!\!\max\limits_{\bc^1, \bc^2, \bc^3, \bc^4} & \calF\bigg(\disps\binom{\widetilde \bc^1}{\widetilde \bc^{1*}}, \binom{\widetilde \bc^2}{\widetilde \bc^{2*}}, \binom{\widetilde \bc^3}{\widetilde \bc^{3*}}, \binom{\widetilde \bc^4}{\widetilde \bc^{4*}}\bigg)\\[1.2em]
		\quad \; \subto & \|\bc^p\|^2 = 1\\
		& \|\bc^p-\bc_0\|^2 \leq \zeta, p=1,2,3,4\\
		& \bc^1=\bc^2=\bc^3=\bc^4
	\end{array}
	\right.\!\!\!\!\!,
	\eeq
	where $\widetilde \bc^p=[(\bc^p)^{\trans}, 1]^\trans,p=1,2,3,4$.
	
	The relaxation technique procedure to solve $\widetilde \calP_\beta$ is reported in Algorithm \ref{algo1}. Before proceeding further, it is worth mentioning the following key result whose proof is provided in Appendix C in \cite{aubry2013ambiguity}.
	\begin{theorem*}\label{theo1} 
		Suppose $\widetilde f(\bx)$ is a convex complex quartic function and let $\calF$ be the conjugate super-symmetric tensor form associated with $\widetilde f(\bx)$, then
		\begin{align}
		\!\!\!&\max\!\Big\{ \widetilde f(\bx^1), \widetilde f(\bx^2), \widetilde f(\bx^3), \widetilde f(\bx^4)\Big\}\nonumber  \\
		&\quad\quad\quad \geq \calF\bigg(\disps\binom{ \bx^1}{ \bx^{1*}}, \binom{ \bx^2}{ \bx^{2*}}, \binom{ \bx^3}{ \bx^{3*}}, \binom{ \bx^4}{ \bx^{4*}}\bigg).
		\end{align}
	\end{theorem*}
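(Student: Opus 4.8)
The plan is to translate the statement into an inequality for an ordinary convex quartic polynomial in four real variables and then to exploit the gauge (seminorm) structure that convexity forces. Putting $\by^i=\binom{\bx^i}{\bx^{i*}}$, and noting that $\bx^i\mapsto\by^i$ is $\bbR$-linear,
\[
\phi(t_1,t_2,t_3,t_4):=\widetilde f(t_1\bx^1+t_2\bx^2+t_3\bx^3+t_4\bx^4)=\calF\big(\textstyle\sum_i t_i\by^i,\,\sum_i t_i\by^i,\,\sum_i t_i\by^i,\,\sum_i t_i\by^i\big)
\]
is a homogeneous quartic polynomial on $\bbR^4$, and it is convex because $\widetilde f$ is convex and the inner map is linear. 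Since $\phi$ has degree four, $\partial_{t_1}\partial_{t_2}\partial_{t_3}\partial_{t_4}\phi$ is a constant: the mixed derivative kills every degree‑four monomial except $t_1t_2t_3t_4$, whose coefficient is $4!\,\calF(\by^1,\by^2,\by^3,\by^4)$ by the conjugate super-symmetry of $\calF$; meanwhile $\phi(\be_i)=\widetilde f(\bx^i)$. Hence the theorem is equivalent to the purely real claim: \emph{for every convex homogeneous quartic $\phi$ on $\bbR^4$, $\tfrac1{24}\,\partial_{t_1}\partial_{t_2}\partial_{t_3}\partial_{t_4}\phi\le\max_{i}\phi(\be_i)$.}

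Next I would extract two consequences of convexity. Restricting $\phi$ to a line through the origin shows $\phi\ge 0$ everywhere, since $s\mapsto s^{4}\phi(\bv)$ can be convex only if $\phi(\bv)\ge 0$. Moreover $h:=\widetilde f^{1/4}$ is a gauge: positive homogeneity is immediate, and subadditivity comes from the usual normalization trick — if $\widetilde f(\bu)=\widetilde f(\bv)=1$ then $\widetilde f(\theta\bu+(1-\theta)\bv)\le\theta+(1-\theta)=1$ for $\theta\in[0,1]$, which rescales to $h(\bu+\bv)\le h(\bu)+h(\bv)$. Thus $\widetilde f=h^{4}$ with $h$ a seminorm, and after normalizing so that $\max_i\widetilde f(\bx^i)=1$ we may assume $h(\bx^i)\le 1$ for all $i$ and must prove $\calF(\by^1,\by^2,\by^3,\by^4)\le 1$.

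A warm-up case illustrates how convexity enters. If one of the arguments, say $\bz$, is repeated (three or four times), then Euler's relation $\langle\nabla\widetilde f(\bz),\bz\rangle=4\widetilde f(\bz)$ together with the subgradient inequality $\widetilde f(\bx)\ge\widetilde f(\bz)+\langle\nabla\widetilde f(\bz),\bx-\bz\rangle$ gives $4\,\calF\big(\binom{\bz}{\bz^{*}},\binom{\bz}{\bz^{*}},\binom{\bz}{\bz^{*}},\binom{\bx}{\bx^{*}}\big)\le\widetilde f(\bx)+3\widetilde f(\bz)$, whence $\calF(\cdots)\le\tfrac14\big(\widetilde f(\bx)+3\widetilde f(\bz)\big)\le\max\{\widetilde f(\bx),\widetilde f(\bz)\}$; here $\nabla$ is the gradient with respect to the real and imaginary parts. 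This already gives the assertion whenever at most two of $\bx^1,\dots,\bx^4$ are distinct with one repeated.

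The main obstacle is the general case of four distinct arguments, where naive polarization is lossy: from $\calF(\by^1,\by^2,\by^3,\by^4)=\tfrac1{384}\sum_{\epsilon_i\in\{\pm1\}}\epsilon_1\epsilon_2\epsilon_3\epsilon_4\,\widetilde f(\epsilon_1\bx^1+\epsilon_2\bx^2+\epsilon_3\bx^3+\epsilon_4\bx^4)$, dropping the negative terms and using subadditivity of $h$ yields only the weaker bound $\calF(\by^1,\by^2,\by^3,\by^4)\le\tfrac{16}{3}\max_i\widetilde f(\bx^i)$, reflecting the classical polarization constant $d^{d}/d!$ of general $d$-forms. To recover the sharp constant $1$ I would argue at the extremum: by an approximation, reduce to a smooth, strictly convex $\widetilde f$ (so $h$ is a smooth norm), let $(\bar\bx^1,\dots,\bar\bx^4)$ maximize $\calF(\by^1,\by^2,\by^3,\by^4)$ over the unit $h$-sphere; the first-order (KKT) conditions then force $\tfrac14\nabla\widetilde f(\bar\bx^j)$ to be a supporting functional of the convex unit $h$-ball at $\bar\bx^j$ for each $j$, and substituting these relations back into the subgradient inequality for $\widetilde f$ pins the optimal value down to at most $1$; equivalently, one may establish the Hölder-type polarization inequality $\calF(\by^1,\by^2,\by^3,\by^4)\le\prod_i\widetilde f(\bx^i)^{1/4}$ and conclude by the arithmetic–geometric mean inequality. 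This is precisely what Appendix~C of \cite{aubry2013ambiguity} carries out, the delicate point being that for convex forms the polarization constant collapses from $d^{d}/d!$ to $1$.
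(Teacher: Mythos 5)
Your reduction to a convex homogeneous quartic $\phi$ on $\bbR^4$, the nonnegativity and gauge facts about $\widetilde f^{1/4}$, and the warm-up bound $4\calF(\bz,\bz,\bz,\bx)\le \widetilde f(\bx)+3\widetilde f(\bz)$ via Euler's relation and the gradient inequality are all correct, and you correctly diagnose that naive polarization only yields the constant $16/3$. But note that the paper itself contains no proof of this theorem: it simply points to Appendix~C of \cite{aubry2013ambiguity}. Your proposal ultimately does the same, and the part of the argument you would need to supply yourself --- the case of four distinct arguments, where the polarization constant must collapse to $1$ --- is asserted rather than proven, so as an independent proof the proposal has a genuine gap exactly at the theorem's only hard step.

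Concretely: (i) the KKT sketch does not ``pin the optimal value down to at most 1'' as described. At a maximizer $(\bar\bx^1,\dots,\bar\bx^4)$ of $\calF$ over the unit sphere of $h=\widetilde f^{1/4}$ with value $v$, stationarity in slot $j$ gives $\calF(\cdot,\bar\bx^2,\bar\bx^3,\bar\bx^4)=\frac{v}{4}\nabla\widetilde f(\bar\bx^1)$ (and similarly for the other slots); pairing with $\bar\bx^1$ and Euler's relation return only the tautology $v=v$, and feeding these relations into the subgradient inequality yields bounds of the form $4\calF(\bz,\bar\bx^2,\bar\bx^3,\bar\bx^4)\le v\,(\widetilde f(\bz)+3)$, which are tight at $\bz=\bar\bx^1$ and produce no contradiction with $v>1$; closing the argument needs second-order information or a genuinely different exchange/merging idea, none of which is given. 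The natural merging step is also unavailable: co-quadratic positivity only gives $\calF(\bu,\bu,\bv,\bv)\ge0$, whereas $\calF(\bu,\bu,\bz,\bw)$ with $\bz\ne\bw$ has no sign --- e.g.\ for the convex form $f(\bx)=\bx(1)^4+\bx(2)^4$ with $\bx^1=\bx^3=(1,1)$, $\bx^2=\bx^4=(1,-1)$ one has $\calF(\bx^1,\bx^2,\bx^3,\bx^4)=2$ while $\calF(\bx^1,\bx^1,\bx^3,\bx^4)=\calF(\bx^2,\bx^2,\bx^3,\bx^4)=0$. (ii) The proposed fallback, ``establish $\calF\le\prod_i\widetilde f(\bx^i)^{1/4}$ and conclude by AM--GM,'' is circular: by normalizing each argument by $\widetilde f(\bx^i)^{1/4}$ and using the per-slot homogeneity of $\calF$, that product inequality is equivalent to the statement being proven, so invoking it is not a proof. (iii) The closing appeal to Appendix~C of \cite{aubry2013ambiguity} is precisely the citation the paper uses in place of a proof, and your description of what that appendix does is not substantiated; either reproduce that argument (or the corresponding real-case proof for convex forms in the MBI literature) in full, or present the theorem as quoted from the reference rather than as proved.
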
 
	
	According to Theorem \ref{theo1}, if $\mu_1$ and $\mu_2$ can be determined such that $\widetilde f(\bc)\! =\! \mu_1(\bc^\ctrans\bc)^2 +\!\mu_2(\bc^\ctrans\bc)+\! f(\bc)$ is convex, then
	\begin{align}
	\widetilde f(\bc^\star) \geq \calF\bigg(\disps\binom{\widetilde \bc^{1,\star}}{(\widetilde \bc^{1,\star})^*}, \binom{\widetilde \bc^{2,\star}}{(\widetilde \bc^{2,\star})^*}, \binom{\widetilde \bc^{3,\star}}{(\widetilde \bc^{3,\star})^*}, \binom{\widetilde \bc^{4,\star}}{(\widetilde \bc^{4,\star})^*}\bigg),
	\end{align}
	where $\bc^\star\!=\!\arg \displaystyle \max\limits_{\bc^{p,\!\star}, \, p=1,2,3,4} \{\widetilde f(\bc^{p,\star})\}$. This key property ensures that a solution to $\widetilde \calP_\beta$ ensuring an objective value greater than or equal to the optimized value of $\widetilde \calP_{\beta}^\text{relax}$ can be obtained through the last step in Algorithm \ref{algo1}.
	
	According to these guidelines, in the following subsections, the focus is on developing an effective optimization algorithm capable of yielding  a good quality solution to $\widetilde \calP_{\beta}^\text{relax}$, as well as finding suitable $\mu_1$ and $\mu_2$ that endow convexity to $\widetilde f(\bc)$.

	\begin{algorithm}[t]
		\setstretch{1.15}
		\caption{ Relaxation method for solving $\widetilde \calP_\beta$.}\label{algo1} 
		\begin{algorithmic}[1]
			\State Drop the constraint $\bc^1=\bc^2=\bc^3=\bc^4$ in $\widetilde \calP_\beta$;
			\State Solve the following relaxed problem
			\beq \label{P_detCRB3}
			\!\!\!\widetilde \calP_{\beta}^\text{relax}\left\{\begin{array}{lll}
				\!\!\!\max\limits_{\bc^1, \bc^2, \bc^3, \bc^4} & \calF\bigg(\disps\binom{\widetilde \bc^1}{\widetilde \bc^{1*}}, \binom{\widetilde \bc^2}{\widetilde \bc^{2*}}, \binom{\widetilde \bc^3}{\widetilde \bc^{3*}}, \binom{\widetilde \bc^4}{\widetilde \bc^{4*}}\bigg)\\[1.2em]
				\quad \; \subto & \|\bc^p\|^2 = 1\\
				& \|\bc^p-\bc_0\|^2 \leq \zeta, p=1,2,3,4
			\end{array}
			\right.\!\!\!\!\!,
			\eeq
			and denote by $\bc^{1,\star}, \bc^{2,\star}, \bc^{3,\star}, \bc^{4,\star}$ the optimized solutions;\label{step2}%
			\State\label{step3} Construct an optimized solution to $\widetilde \calP_\beta$ as 
			\beq
				\bc^\star\!=\!\arg \displaystyle \max\limits_{\bc^{p,\!\star}, \, p=1,2,3,4} \{\widetilde f(\bc^{p,\star})\}.
			\eeq
		\end{algorithmic}
	\end{algorithm}
	
	\subsection{MBI for Solving \rm{$\widetilde \calP_{\beta}^\text{relax}$}}
	{Note that the constraints on the blocks $\bc^1, \bc^2, \bc^3, \bc^4$ are decoupled, making alternating optimization a viable approach for solving $\widetilde \calP_{\beta}^\text{relax}$. However, the corresponding updating rule fails to guarantee convergence to a stationary point, in general. To overcome this shortcoming, in this subsection, an iterative method based on the MBI framework is developed to solve $\widetilde \calP_{\beta}^\text{relax}$.} Specifically, at each iteration, every single block in $\{\bc^1, \bc^2, \bc^3, \bc^4\}$ is optimized while keeping the others fixed, and only the block yielding the maximum increase of the objective function is updated. Aimed at reducing the computational burden associated with tensor operations, let us introduce an equivalent expression of the objective function in  $\widetilde \calP_{\beta}^\text{relax}$. As shown in Appendix \ref{proof_tensor} of the supplemental material, 
	\begin{align}\label{f_cccc}
		&\calF\bigg(\disps\binom{\widetilde \bc^1}{\widetilde \bc^{1*}}, \binom{\widetilde \bc^2}{\widetilde \bc^{2*}}, \binom{\widetilde \bc^3}{\widetilde \bc^{3*}}, \binom{\widetilde \bc^4}{\widetilde \bc^{4*}}\bigg)\nonumber\\
		&\quad= \frac{1}{24} \!\!\sum\limits_{(p_1, p_2, p_3, p_4) \atop \in \Pi_4 (1, 2, 3, 4)}\!\!\!\! \Big\{\mu_1 (\bc^{p_1})^\ctrans\bc^{p_2}(\bc^{p_3})^\ctrans\bc^{p_4}\nonumber\\
		&\quad\quad\quad\quad\quad\quad +\!(1-\beta)(\bc^{p_1})^\ctrans\bM_0\bc^{p_2}(\bc^{p_3})^\ctrans\bM_2\bc^{p_4}\nonumber\\
		&\quad\quad\quad\quad\quad\quad-\!(1-\beta)(\bc^{p_1})^\ctrans\bM_1\bc^{p_2}(\bc^{p_3})^\ctrans\bM_1^\ctrans\bc^{p_4}\Big\}\nonumber\\
		&\quad\quad+\frac{1}{12}\!\!\sum\limits_{(p_1, p_2) \atop \in \Pi_2 (1, 2, 3, 4)}\!\!\!\!\!\!\!\Big\{\mu_2(\bc^{p_1})^\ctrans\bc^{p_2}\!+\!\beta(\bc^{p_1})^\ctrans\bM_0\bc^{p_2}\Big\}\nonumber\\
		&\quad=\widetilde f_{_\text{ML}}(\bc^1, \bc^2, \bc^3, \bc^4).
	\end{align}
	Needless to say, $\widetilde f_{_\text{ML}}(\bc^1, \bc^2, \bc^3, \bc^4) \!\!=\!\! \widetilde f(\bc)$, if $\bc^1\!\!=\!\!\bc^2\!\!=\!\!\bc^3\!\!=\!\!\bc^4\!\!=\!\!\bc$.
	
	In line with MBI-based optimization process, let $\bc^p_{(n)}, p=1,2,3,4$ be the $p$-th block of at the $n$-th iteration; then, the problem for the update of $p$-th block at the $(n\!+\!1)$-th iteration is given by
		\beq \label{P_cp} 
		\begin{array}{lll}
			\disps \max\limits_{\bc^p_{}} & \widetilde f_{_\text{ML}}(\bc^p; \{\bc^{\tilde p}_{(n)}\}_{\tilde p \in \calS_p})\\
			\subto & \|\bc^p\|^2 = 1\\
			& \|\bc^p-\bc_0\|^2 \leq \zeta
		\end{array},
		\eeq
		where $\calS_p=\{1,2,3,4\}\setminus \{p\}$, and 
		\beq
			\widetilde f_{_\text{ML}}(\bc^p; \{\bc^{\tilde p}_{(n)}\}_{\tilde p \in \calS_p})= \Re\big\{(\bd_{(n)}^p)^\ctrans\bc^p\big\}+d_{(n)}^p
		\eeq
		is the restriction of the function $\widetilde f_{_\text{ML}}(\bc^1, \bc^2, \bc^3, \bc^4)$ to $\bc^p$, with the other blocks fixed at the values of the previous iteration,  $\bd_{(n)}^p$ and $d_{(n)}^p$ specified in \eqref{dnp} as detailed in Appendix \ref{proofP_inMBI} of the supplemental material. Utilizing the unit energy $\|\bc^p\|^2=1$ to recast the similarity constraint as $2\Re\{\bc_0^\ctrans\bc^p\}\geq 2- \zeta$ \cite{de2009design}, the equivalent optimization problem for Problem \eqref{P_cp} is 
		\beq \label{P_cp1}
		\begin{array}{lll}
			\disps \max\limits_{\bc^p_{}} &  \widetilde f_{_\text{ML}}(\bc^p; \{\bc^{\tilde p}_{(n)}\}_{\tilde p \in \calS_p})\\
			\subto & \|\bc^p\|^2 = 1\\
			& 2\Re\{\bc_0^\ctrans\bc^p\}\geq 2- \zeta
		\end{array}.
		\eeq
		
		Problem \eqref{P_cp1} is hidden-convex\cite[Section III.B]{aubry2013knowledge}, and its optimal solution can be obtained leveraging the following lemma.
		
		\begin{lemma}\label{lemma3_1}
			 The optimal solution to Problem \eqref{P_cp1} can be constructed from the optimal solution to Problem \eqref{P_cp2}.
				\beq \label{P_cp2}
				\begin{array}{lll}
					\disps \max\limits_{\bc^p_{}} & \widetilde f_{_\text{ML}}(\bc^p; \{\bc^{\tilde p}_{(n)}\}_{\tilde p \in \calS_p})\\
					\subto & \|\bc^p\|^2 \leq 1\\
					& 2\Re\{\bc_0^\ctrans\bc^p\}\geq 2- \zeta
				\end{array}.
				\eeq
		\end{lemma}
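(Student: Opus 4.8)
The plan is to show that an optimal solution of the relaxed problem \eqref{P_cp2}, in which the unit-norm equality is weakened to the ball constraint $\|\bc^p\|^2\le 1$, automatically lies on the boundary of the ball, hence is feasible (and optimal) for \eqref{P_cp1}. Since the objective $\widetilde f_{_\text{ML}}(\bc^p;\{\bc^{\tilde p}_{(n)}\}_{\tilde p\in\calS_p})=\Re\{(\bd_{(n)}^p)^\ctrans\bc^p\}+d_{(n)}^p$ is affine in $\bc^p$, and the feasible set of \eqref{P_cp2} is convex and compact, an optimizer exists; call it $\hat\bc^p$. The two cases to dispatch are whether $\bd_{(n)}^p=\bzero$ or not.

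First I would handle the generic case $\bd_{(n)}^p\neq\bzero$. Suppose, for contradiction, that $\|\hat\bc^p\|^2<1$. Then for sufficiently small $\epsilon>0$ the perturbed point $\hat\bc^p+\epsilon\,\bd_{(n)}^p$ still satisfies $\|\hat\bc^p+\epsilon\,\bd_{(n)}^p\|^2\le 1$, and it also satisfies the linear similarity constraint provided $\Re\{\bc_0^\ctrans\bd_{(n)}^p\}\ge 0$; if instead $\Re\{\bc_0^\ctrans\bd_{(n)}^p\}<0$ one perturbs along a direction in the cone of feasible directions — e.g. $\bd_{(n)}^p$ projected to keep $2\Re\{\bc_0^\ctrans\bc^p\}\ge 2-\zeta$ active or inactive as appropriate. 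In any event the objective strictly increases, $\Re\{(\bd_{(n)}^p)^\ctrans(\hat\bc^p+\epsilon\,\bd_{(n)}^p)\}=\Re\{(\bd_{(n)}^p)^\ctrans\hat\bc^p\}+\epsilon\|\bd_{(n)}^p\|^2$, contradicting optimality of $\hat\bc^p$. Hence $\|\hat\bc^p\|^2=1$, so $\hat\bc^p$ is feasible for \eqref{P_cp1}; since the feasible set of \eqref{P_cp1} is contained in that of \eqref{P_cp2}, $\hat\bc^p$ is a fortiori optimal for \eqref{P_cp1}.

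The degenerate case $\bd_{(n)}^p=\bzero$ must be treated separately: then every feasible point of \eqref{P_cp2} attains the same objective $d_{(n)}^p$, so in particular any unit-norm point satisfying $2\Re\{\bc_0^\ctrans\bc^p\}\ge 2-\zeta$ — for instance $\bc_0$ itself, which is feasible because $\zeta\ge 0$ — is optimal for both problems, and the claimed construction is trivially available. A cleaner, unified way to write the argument, which I would prefer for the final text, is to invoke the fact that a linear functional over the convex compact set $\{\|\bc^p\|^2\le 1\}\cap\{2\Re\{\bc_0^\ctrans\bc^p\}\ge 2-\zeta\}$ attains its maximum at an extreme point, and every extreme point of this set has $\|\bc^p\|^2=1$ (the open ball contains no extreme points, and the relative interior of the spherical cap obtained by intersecting the sphere with the half-space also contributes extreme points only on the sphere). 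This is precisely the hidden-convexity structure already exploited in \cite[Section III.B]{aubry2013knowledge}.

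The main obstacle is the careful bookkeeping in the perturbation argument when the similarity constraint is active at $\hat\bc^p$: one must ensure the ascent direction stays inside the intersection of the ball and the half-space, which requires either a projection of $\bd_{(n)}^p$ onto the tangent cone of the active half-space or a short case analysis on the sign of $\Re\{\bc_0^\ctrans\bd_{(n)}^p\}$. This is routine but is the only place where something could go wrong; the extreme-point phrasing sidesteps it entirely, which is why I would adopt that formulation and relegate the constructive recovery of the solution (normalizing and, if needed, rotating to respect $2\Re\{\bc_0^\ctrans\bc^p\}\ge 2-\zeta$) to a one-line remark.
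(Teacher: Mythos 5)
Your primary (perturbation) argument has a genuine gap, and it occurs exactly at the case your "routine bookkeeping" remark glosses over. The claim that an optimizer $\hat\bc^p$ of Problem \eqref{P_cp2} must satisfy $\|\hat\bc^p\|^2=1$ is false in general: when $\bd_{(n)}^p=-\varrho\bc_0$ with $\varrho>0$, the objective $\Re\{(\bd_{(n)}^p)^\ctrans\bc^p\}$ is maximized by making $\Re\{\bc_0^\ctrans\bc^p\}$ as small as allowed, so every feasible point with $2\Re\{\bc_0^\ctrans\bc^p\}=2-\zeta$ is optimal, including $\bc^p=\tfrac{2-\zeta}{2}\bc_0$, which has norm strictly less than one for $\zeta>0$. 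In this situation the projection of $\bd_{(n)}^p$ onto the tangent cone of the active half-space is zero, so no feasible ascent direction exists and no contradiction can be derived; only the weaker statement "some optimizer lies on the unit sphere" is true. This is not a pathological corner: it is precisely case two of the closed form \eqref{bcp_opt}, where the solution must be completed with a component along $\bc_0^{\perp}$ to reach unit norm. Relatedly, your proposed one-line recovery ("normalizing and, if needed, rotating") does not work: scaling a norm-deficient optimizer up to the sphere rescales the linear objective and, in the very case above (optimal value negative), strictly worsens it, while also failing to keep the similarity constraint at its original level in general.

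Your fallback extreme-point formulation is sound and does prove what the lemma is used for: the feasible set of \eqref{P_cp2} is convex and compact, an affine objective attains its maximum at an extreme point, and every extreme point of the ball--half-space intersection has unit norm (points with $\|\bc^p\|<1$, whether the half-space is active or not, are midpoints of feasible segments). This gives existence of a unit-norm optimizer of \eqref{P_cp2}, which is then feasible and hence optimal for \eqref{P_cp1}. It is, however, a different and purely existential route from the paper's, which is constructive: for any feasible $\breve\bc^p$ of \eqref{P_cp2}, the paper forms $\breve\bc^p+\sqrt{1-\|\breve\bc^p\|^2}\,(-1)^i\bc_0^{\perp}/\|\bc_0^{\perp}\|$ with $\bc_0^{\perp}$ orthogonal to both $\bc_0$ and $\breve\bc^p$ and the sign $i$ chosen according to the sign of $\Re\{(\bd_{(n)}^p)^\ctrans\bc_0^{\perp}\}$; this reaches the sphere, leaves $\Re\{\bc_0^\ctrans\bc^p\}$ unchanged, and never decreases the objective, so it converts any optimizer of \eqref{P_cp2} into one of \eqref{P_cp1}. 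If you adopt the extreme-point phrasing, either drop the constructive claim from your write-up or replace your normalization remark with this orthogonal-completion construction; as written, the constructive part of your proposal is incorrect and the perturbation part is incomplete.
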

		\begin{proof}
			See Appendix \ref{prooflemma3_1} of the supplemental material.
		\end{proof}
		
		Since Problem \eqref{P_cp2} is convex and the constraints satisfy the Slater’s condition\cite{boyd2004convex}, the strong duality holds. Hence, an optimal solution, i.e., a solution that fulfills the energy equality can be obtained via the Karush–Kuhn–Tucker (KKT) optimality conditions, is given by
		\begin{align}\label{bcp_opt}
		\bc^{p} \!\!= \!\!\disps \left\{\!\!\!\begin{array}{lll}
			\bc_0, &\!\!\!\! \text{if} \; \bd_{(n)}^p\!=\!\bzero_M, \\
			\!\!\delta \bc_0 \!+\! \sqrt{1\!-\!\delta^2}\disps \frac{\bc_{0}^{\perp}}{\|\bc_0^{\perp}\|}\!, &\!\!\!\! \text{if} \; \bd_{(n)}^p\!=\!-\varrho\bc_0, \varrho\!>\!0, \\
			\displaystyle \frac{\bd_{(n)}^p}{\|\bd_{(n)}^p\|}, &\!\!\!\! \text{if} \; {2\Re \!\big\{ {{\bc}_0^\ctrans\bd_{(n)}^p} \!\big\}\!+\! \|\bd_{(n)}^p\|(\zeta\!-\!2)\! \geq \! 0}, \\[1.2em]
			\disps \frac{{\bd_{(n)}^p\!+\!2\lambda_2\bc_0}}{\|\bd_{(n)}^p\!+\!2\lambda_2\bc_0\|}, &\!\!\!\! \text{otherwise},
		\end{array}
		\right.\!\!\!\!\!\!
		\end{align}
		where $\delta=(2-\zeta)/2$, and $\lambda_2$ specified in \eqref{pos_lam2}  of Appendix \ref{proofsolutionP_cp} of the supplemental material.
	
	\begin{algorithm}[t]
		\setstretch{1.15}
		\caption{ MBI for solving $\widetilde \calP_{\beta}^\text{relax}$.}\label{algo2}
		\algorithmicrequire{ Reference code $\bc_0$, similar parameter $\zeta$, initial feasible code $\bc_{(0)}$, the minimum required improvement $\epsilon$, and the maximum iteration number $N_{\rm iter}$ {to facilitate possible real-time implementation.}} \\
		\algorithmicensure{ Optimized solutions $\bc^{1, \star},\bc^{2, \star}, \bc^{3, \star}, \bc^{4, \star}$.} 
		\begin{algorithmic}[1]
			\State \multiline{(Initialization): \\
				Set $n=0$; Let $\bc_{(0)}^p=\bc_{(0)}, p=1,2,3,4$, and $\upsilon_{(0)} = \widetilde f_{_\text{ML}}(\bc_{(0)}^1, \bc_{(0)}^2,\bc_{(0)}^3,\bc_{(0)}^4)$;}
			\While { $|\upsilon_{(n+1)}-\upsilon_{(n)}|$ $\ge \epsilon$ or $n\leq N_{\rm iter}$ }
			\State \multiline{(Block Improvement): For each $p\!=\!1,2,3,4$, obtain the solution to Problem \eqref{P_cp1} via \eqref{bcp_opt}, and let $\widehat \bc_{(n+1)}^p$ be the solution, $\upsilon_{(n+1)}^p \!\!=\!\! \widetilde f_{_\text{ML}}(\bc_{(n)}^1,\cdots, \widehat \bc_{(n+1)}^p,\cdots\!,\bc_{(n)}^4)$ be the corresponding optimized objective value;}
			\State \multiline{(Maximum Improvement): Let $  \hat p\!=\!\arg \max\limits_{1\le p\le4}\! \{\upsilon_{(n+1)}^p\}$, and update
				\vspace{-4mm}
				\begin{subequations}
					\begin{eqnarray}
						\upsilon_{(n+1)} &\!\!\!= \!\!\!&\upsilon_{(n+1)}^{\hat p},\\
						\bc_{(n+1)}^{\hat p} &\!\!\!=\!\!\!& \widehat \bc_{(n+1)}^{\hat p}, \\
						\bc_{(n+1)}^p &\!\!\!= \!\!\!& \bc_{(n)}^p, \forall p \ne \hat p;
					\end{eqnarray}	
				\end{subequations}\vspace{-4mm}}
			\State Set $n = n+1$;
			\EndWhile 
			\State (Output): $\bc^{p, \star} \!\!= \!\bc^{p}_{(n+1)}, p\!=\!\!1,2,3,4$.
		\end{algorithmic}
	\end{algorithm}
	
	The iterative procedure for solving $\widetilde \calP_{\beta}^\text{relax}$ is summarized in Algorithm \ref{algo2}. As to the computational complexity of Algorithm \ref{algo2}, it is linear with the number of iterations. Specifically, in each iteration, it requires constructing $\bd_{(n)}^p, d_{(n)}^p, p=1,2,3,4$ in \eqref{dnp}, which involves multiple matrix multiplications with a dominating computational complexity of $\calO(M^2)$. {Remarkably, the iterative procedure based on the MBI framework can be triggered using the optimized solution obtained from the plain alternating rule, and executed in parallel, thereby achieving a more favorable performance and computational complexity\cite[Section III]{aubry2018new}.}
	
	The convergence of Algorithm \ref{algo2} is now discussed. At the $n$-th iteration, denote the multi-linear form objective value as 
	\beq\label{vn}
		\upsilon_{(n)} \!=\! \widetilde f_{_\text{ML}}(\bc_{(n)}^1,\bc_{(n)}^2, \bc_{(n)}^3,\bc_{(n)}^4).
	\eeq
	According to \cite[Theorem 2]{razaviyayn2013unified}, the sequence $\upsilon_{(n)}$ is monotonically increasing and converges, and any cluster point of the sequences of solutions produced by Algorithm \ref{algo2} is a stationary point to $\widetilde \calP_{\beta}^\text{relax}$. 
	
	\subsection{Method to Find Suitable $\mu_1$ and $\mu_2$}
	In this subsection, guidelines for a bespoke selection of the parameters $\mu_1$ and $\mu_2$ are provided. A key {role} is played by the following proposition.
	\begin{proposition}\label{Propo2}
		 Let $\bD \!\in\! \bbH^{M} $, and either $\bA_j, \bB_j \!\in\! \bbH^{M}$ or $\bA_j\!\!=\!\! \bB_j\!\!\in\!\!\bbC^{M\times M},j\!\!=\!\!1,2,\cdots,J$. For a real-valued complex inhomogeneous quartic function $g_{_\rmI}(\bx) \!=\! \alpha_0\bx^\ctrans \bD\bx\!+\!\sum\limits_{j=1}^{J}\alpha_j\bx^\ctrans\bA_j\bx\bx^\ctrans\bB_j^\ctrans\bx$ with $\alpha_0,\alpha_{j}\!\!\in\!\bbR$, $\widetilde g(\bx)\!=\!\mu_1(\bx^\ctrans\bx)^2\!+\!\mu_2(\bx^\ctrans\bx)\!+\!g_{_\rmI}(\bx)$ is convex, if the following condition holds
		\begin{subequations}
		\begin{align}
			\mu_1 &\!\geq\; \displaystyle \max\limits_{\|\by\|^2=1, \|\bz\|^2=1} -\frac{1}{2}\sum\limits_{j=1}^{J}{\alpha_j} h(\by, \bz, \bA_j, \bB_j), \\
			\mu_2 &\!\geq\; \max \{0, -\lambda_{\min}(\alpha_0\bD)\},
		\end{align}
		\end{subequations}
		where 
		\beq \label{h_yz}
		\begin{aligned}
			\!\!h(\by, \bz, \bA_j, \bB_j) =  &\by^\ctrans\! \bA_j \by \bz^\ctrans\! \bB_j^\ctrans \bz + \by^\ctrans\! \bA_j \bz \by^\ctrans \!\bB_j^\ctrans \bz \\
			&+ \by^\ctrans\! \bA_j \bz \bz^\ctrans\! \bB_j^\ctrans \by+ \bz^\ctrans\! \bA_j \by \by^\ctrans\! \bB_j^\ctrans \bz \\
			&+ \bz^\ctrans\! \bA_j \by \bz^\ctrans\! \bB_j^\ctrans \by + \bz^\ctrans\! \bA_j \bz \by^\ctrans\! \bB_j^\ctrans \by, \\
			&\quad\quad\quad\quad\quad\quad\quad\quad\quad\quad\forall \by,\bz\in\bbC^M. \!\!\!\!
		\end{aligned}
		\eeq 
	\end{proposition}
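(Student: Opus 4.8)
The plan is to show convexity of $\widetilde g(\bx)$ by exhibiting that its complex Hessian (in the Wirtinger sense, i.e.\ the Hessian with respect to the real and imaginary parts of $\bx$) is positive semidefinite, equivalently that the second-order directional derivative along every complex direction is nonnegative. Concretely, for $\bx,\bh\in\bbC^M$ consider $\phi(t)=\widetilde g(\bx+t\bh)$ with $t\in\bbR$ and require $\phi''(0)\ge 0$. Since $\widetilde g$ is real-valued, this is both necessary and sufficient for convexity on $\bbC^M$ identified with $\bbR^{2M}$. First I would split $\widetilde g$ into the three pieces: the quartic term $\mu_1(\bx^\ctrans\bx)^2$, the quadratic term $\mu_2(\bx^\ctrans\bx)+\alpha_0\bx^\ctrans\bD\bx$, and the indefinite quartic term $\sum_j\alpha_j\bx^\ctrans\bA_j\bx\,\bx^\ctrans\bB_j^\ctrans\bx$, and compute $\phi''(0)$ for each separately.

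For the quadratic part, $\frac{d^2}{dt^2}\big[(\bx+t\bh)^\ctrans(\mu_2\bI+\alpha_0\bD)(\bx+t\bh)\big]=2\bh^\ctrans(\mu_2\bI+\alpha_0\bD)\bh$, which is $\ge 0$ for all $\bh$ precisely when $\mu_2\bI+\alpha_0\bD\succeq 0$, i.e.\ $\mu_2\ge\max\{0,-\lambda_{\min}(\alpha_0\bD)\}$; this gives the stated bound on $\mu_2$. For the term $\mu_1(\bx^\ctrans\bx)^2$, a direct expansion of $\|\bx+t\bh\|^4$ in powers of $t$ yields $\phi''(0)=\mu_1\big(8\,|\bh^\ctrans\bx|^2 + 4\,\|\bx\|^2\|\bh\|^2 + \text{(real terms)}\big)$; more cleanly, one uses the known fact that $\bx\mapsto\|\bx\|^4$ is convex with second directional derivative $4\|\bx\|^2\|\bh\|^2+8\Re\{(\bh^\ctrans\bx)\}^2+\dots\ge 4\|\bx\|^2\|\bh\|^2$, so this piece contributes a nonnegative multiple of $\mu_1$ that dominates a quantity comparable to $\|\bx\|^2\|\bh\|^2$. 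For the indefinite quartic, expanding $\bx^\ctrans\bA_j\bx\,\bx^\ctrans\bB_j^\ctrans\bx$ and collecting the $t^2$ coefficient produces exactly the symmetrized combination that, after substituting $\by,\bz$ as suitably normalized versions of $\bx$ and $\bh$, matches the six-term expression $h(\by,\bz,\bA_j,\bB_j)$ in \eqref{h_yz}. Thus $\phi''(0)$ for the full function equals $\mu_1$ times a term bounded below by (a constant times) $\|\bx\|^2\|\bh\|^2$, plus $\sum_j\alpha_j$ times a term controlled by $h(\cdot,\cdot,\bA_j,\bB_j)\cdot\|\bx\|^2\|\bh\|^2$, plus the nonnegative quadratic contribution; requiring the sum of the two quartic contributions to be nonnegative for all $\bx,\bh$ reduces, by homogeneity, to the worst case over $\|\by\|=\|\bz\|=1$, yielding the stated lower bound on $\mu_1$.

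The main obstacle I expect is bookkeeping the $t^2$-coefficient of the indefinite quartic term correctly and verifying that it collapses to precisely the six-term symmetric form in \eqref{h_yz}: one must differentiate $(\bx+t\bh)^\ctrans\bA_j(\bx+t\bh)\cdot(\bx+t\bh)^\ctrans\bB_j^\ctrans(\bx+t\bh)$ twice, keep track of which factors are conjugate-linear versus linear in the perturbation, and use the hypothesis that either $\bA_j,\bB_j$ are Hermitian or $\bA_j=\bB_j$ to ensure the cross terms combine into a real symmetric expression (so that the worst-case direction may be taken real and the $\by,\bz$ normalization is legitimate). A secondary technical point is confirming that $\|\bx\|^2\|\bh\|^2$ (rather than, say, $|\bh^\ctrans\bx|^2$) is the right common factor to pull out; this follows from the homogeneity of degree two in $\bx$ and degree two in $\bh$ of every term, after which one divides by $\|\bx\|^2\|\bh\|^2$ and optimizes over the unit sphere. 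Once the identification with \eqref{h_yz} is made, the convexity conclusion is immediate from the two displayed inequalities, completing the proof.
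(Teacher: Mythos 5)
Your proposal is correct and takes essentially the same approach as the paper: restriction of $\widetilde g$ to real lines, nonnegativity of the second directional derivative, the quadratic piece giving the $\mu_2$ condition via $\mu_2\bI_M+\alpha_0\bD\succeq 0$, and identification of the second derivative of each quartic term with the six-term form $h(\bh,\bx,\bA_j,\bB_j)$, followed by the homogeneity reduction to unit vectors that yields the $\mu_1$ bound. The only differences are cosmetic: the paper obtains the same $\psi''$ by first rewriting $g_{_\rmI}$ in conjugate-homogeneous form and invoking \cite[Theorem 3.3, Corollary 3.4]{aubry2013ambiguity}, whereas you expand the product rule directly (and note the cross term from $\mu_1\|\bx\|^4$ is $8\big(\Re\{\bh^\ctrans\bx\}\big)^2$ rather than $8|\bh^\ctrans\bx|^2$, an imprecision that does not affect the lower bound $4\mu_1\|\bx\|^2\|\bh\|^2$ you actually use).
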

	\begin{proof}
	See Appendix \ref{proof_propo2} of the supplemental material.
	\end{proof}
	Leveraging Proposition \ref{Propo2} with $J=2$, $\bD=\beta\bM_0\succeq 0$, $\alpha_0=1$, $\alpha_1=1-\beta, \alpha_2=-1+\beta$, $\bA_1=\bM_0,\bB_1=\bM_2$, and $\bA_2=\bB_2=\bM_1$, $\widetilde f(\bc)$ is convex provided that $\mu_2\geq0$, and $\mu_1$ is greater than or equal to the optimal value of
	\beq \label{P_findmu}
	\left\{\begin{array}{lll}
		\max\limits_{\by, \bz} & \frac{1}{2}(1-\beta)\widetilde h(\by, \bz)\\
		\subto & \|\by\|^2 = 1\\
		& \|\bz\|^2 = 1
	\end{array}
	\right.,
	\eeq
	where $\widetilde h(\by, \bz)=h(\by, \bz, \bM_1, \bM_1)-h(\by, \bz, \bM_0, \bM_2)$. Problem \eqref{P_findmu} is in general NP-hard, making it difficult to find the global maximum of the objective function. To ensure that the value of $\mu_1$ guarantees the desired property, i.e., the convexity of $\mu_1(\bc^\ctrans\bc)^2 \!+\! \mu_2(\bc^\ctrans\bc) \!+\! f(\bc)$, it is enough to determine the maximum value of an upper bound to the function $\widetilde h(\by, \bz)$. {Leveraging the majorization framework\cite{sun2016majorization}}, an upper-bound function for $\widetilde h(\by, \bz)$ is given by
	\beq\label{h_up}
		\widetilde h_{\text{up}}(\by, \bz) = \lambda_{\max}(\bPhi) (\|\by\|^2+\|\bz\|^2)^2,
	\eeq
	where $\bPhi$ is provided in \eqref{Phi} of Appendix \ref{proofupbound} of the supplemental material. The corresponding optimization problem is
	\beq \label{p_bound}
	\left\{\begin{array}{lll}
		\max\limits_{\by, \bz} & \frac{1}{2}(1-\beta)\lambda_{\max}(\bPhi) (\|\by\|^2+\|\bz\|^2)^2\\
		\subto & \|\by\|^2 = 1\\
		& \|\bz\|^2 = 1
	\end{array}
	\right..
	\eeq
	
	The optimal value for \eqref{p_bound} is $2(1-\beta)\lambda_{\max}(\bPhi)$, thus, $\mu_1$ can be chosen as this same value.
	\section{Numerical Results}\label{SecIV}
	Numerical experiments are conducted to evaluate the performance of the proposed method for radar code design. 
	The transmitted pulse train exploits $M =$ 32 pulses with PRI $T = $ 250 ${\upmu}$s. The bandwidth $B$ and pulsewidth $T_\rmp$ of the chirp signal are set to 5 MHz and 10 $ \upmu$s, respectively. The sampling time is $\Delta=1/2B$ leading to $N=100$. The signal-independent interference covariance matrix in the slow-time domain ${\bSigma_{\sss \rmS}}$ is modeled as\cite{de2008code, de2009design}. 
	\beq
	{\bSigma_{\sss \rmS}}(i, j) = \rho^{|i-j|},
	\eeq 
	where the one-lag correlation coefficient is $\rho = 0.8$. {To guide radar code design, two similarity sequences with constant modulus and low ISL are considered: P3 \cite[pp. 118-122]{levanon2004radar} and generalized Barker code \cite[pp. 109-113]{levanon2004radar}.} Unless otherwise stated, the P3 code serves as the default reference code at the design stage due to its linear phase shift. The received power and normalized Doppler frequency of the target are set to $|\alpha|^2=-$20 dB and $f_\rmd T_\rmr=$ 0.15, and $P_{\text {fa}}=10^{-6}$. The coefficients of the constant term are constructed as $\mu_1=2(1-\beta)\lambda_{\max}(\bPhi)$, $\mu_2=$ 0. Algorithm \ref{algo2} is {initialized} by the reference code $\bc_0$ with stopping conditions $\epsilon=10^{-7}$ and $N_{\rm iter}=$ 6000. 
	
	
	As benchmark codes, let us consider the sequences which maximize $\det^{-1}({\textbf{CRB}_{\tau f_\rmd}(\bc)} )$ or $\text{SINR}(\bc, \tau, f_\rmd)$ under energy constraint only, that is, 
	\begin{align}
		\bc_{\sss \text{BM}}^{\text{S}} &= \arg \displaystyle \max\limits_{\|\bc\|^2=1} \;\bc^\ctrans\bM_0\bc, \\[0.2em]
		\bc_{\sss \text{BM}}^{\text{C}} &= \arg \displaystyle \max\limits_{\|\bc\|^2=1} \;\{\bc^\ctrans \bM_0 \bc\bc^\ctrans \bM_2 \bc-|\bc^\ctrans \bM_1 \bc|^2\}.
	\end{align}
	According to the Rayleigh quotient, $\bc_{\sss \text{BM}}^{\text{S}}$ is the normalized eigenvector corresponding to the maximum eigenvalue of the matrix $\bM_0$. A high quality $\bc_{\sss \text{BM}}^{\text{C}}$ can be found using the proposed algorithm letting $\beta=$ 0, and $\zeta= 2$ (the similarity constraint vanishes). Therefore, the benchmark $P_\rmd$, CRB for Doppler and delay estimation, and CRB determinant are, respectively, defined as
	\beq
	\begin{aligned}
		P_\rmd^{\text{BM}} &= Q\Big(\sqrt{2 |\alpha|^2N(\bc_{\sss \text{BM}}^{\text{S}})^\ctrans\bM_0\bc_{\sss \text{BM}}^{\text{S}}}, \sqrt{-2\ln P_{\text{fa}}}\Big),\\[0.2em]
		\text{CRB}_{\tau}^{\text{BM}} &= [\textbf{CRB}_{\tau f_\rmd}(\bc_{\sss \text{BM}}^{\text{C}})]_{11}, \\[0.2em]
		\text{CRB}_{f}^{\text{BM}} &= [\textbf{CRB}_{\tau f_\rmd}(\bc_{\sss \text{BM}}^{\text{C}})]_{22},\\[0.2em]
		\text{detCRB}^{\text{BM}} &=\text{CRB}_\tau^{\text{BM}}\cdot \text{CRB}_{f}^{\text{BM}}.		
	\end{aligned}\nonumber
	\eeq
	
	In order to compare the performance of our algorithm with that of the similarity code (generalized Barker code), the detection probability, CRB for Doppler and delay estimation, and CRB determinant corresponding to $\bc_0$ are denoted as $P_\rmd^{0}$, $\text{CRB}_f^{0}$,  $\text{CRB}_\tau^{0}$, and $\text{detCRB}^{0}$, respectively.

	
	
	\subsection{Convergence of the Proposed Relaxed Method}\label{sub_convergence}

	Figure \ref{f1} depicts the relaxed multi-linear objective value $\upsilon_{(n)}$ achieved by the MBI method versus the iteration number for $\zeta=$ 0.1, 0.4, 1, and $\beta=$ 0, 0.001, 0.01. The corresponding converged relaxed objective value ($\upsilon_{\text{relax}}^\star$), and the output objective value ($\upsilon_o^\star$) for the solution constructed in the last step of {Algorithm }\ref{algo1} are reported in Table \ref{table1}. As expected, $\upsilon_{(n)}$ monotonically increases as the number of iterations increases, and the output objective  $\upsilon_o^\star$ is greater than or equal to the converged relaxed objective value $\upsilon_{\text{relax}}^\star$. This reveals that, under our selection for $\mu_1, \mu_2$, the relaxation method is capable of providing radar codes with better and better performance along the iterations. Moreover, regardless of $\beta$, the larger the similarity parameter $\zeta$ the higher the achieved objective values due to the enlarged feasible set of $\widetilde \calP_{\beta}^\text{relax}$. 
	\begin{figure}[htbp]
		\centering
		\includegraphics[width=3.3in]{./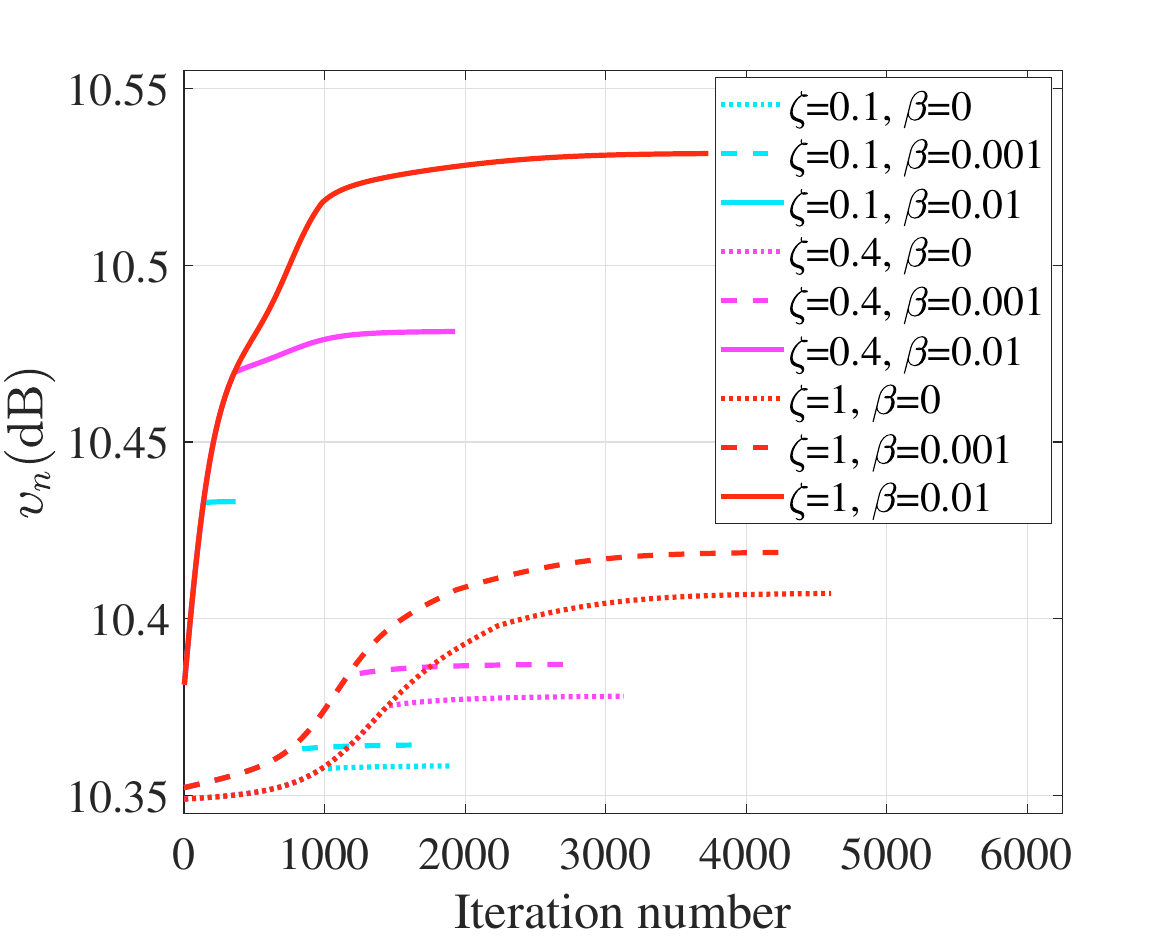}
		\caption{ \small Objective value $\upsilon_{n}$ versus iteration number for $\zeta=$ 0.1, 0.4, 1, and $\beta=$ 0, 0.001, 0.01. }\label{f1}
	\end{figure}

	\begin{table}[htbp]
		\centering  
		\renewcommand{\arraystretch}{1.2}
		\caption{\small Achieved objective values (in dB) for $\zeta=$ 0.1, 0.4, 1, and $\beta=$ 0, 0.001, 0.01.}\label{table1}  
		\setlength{\tabcolsep}{1mm}
		\begin{tabular}{|c|| c|c|c|c|c|c|}
			\hline  
			\multirow{2}{*}{$\beta$} & \multicolumn{2}{c|}{$\zeta=$ 0.1} & \multicolumn{2}{c|}{$\zeta=$ 0.4} & \multicolumn{2}{c|}{$\zeta=$ 1} \\
			\cline{2-7}  
			& $\upsilon_{\text{relax}}^\star$ &  $\upsilon_o^\star$ & $\upsilon_{\text{relax}}^\star$&  $\upsilon_o^\star$ & $\upsilon_{\text{relax}}^\star$ &  $\upsilon_o^\star$  \\
			\hline    
			\hline 
			0 & 10.358    & 10.358 & 10.378 &10.378& 10.407 & 10.407 \\ 
			\hline
			0.001 & 10.364 & 10.364& 10.387&10.387&10.419&10.419\\ 
			\hline
			0.01 & 10.433 &  10.433 &10.48&10.48 &10.532&10.532\\
			\hline
		\end{tabular}
	\end{table}

	\subsection{Detection and Estimation Performance}\label{sub_performance}

	
	Changing $\beta$ in the interval $[0, 1]$ and denoting by $\bc^\star_\beta$ the optimized solution to $\widetilde \calP_\beta$, the curve
	\begin{align}
		\left\{\!\!\!\!\!\!\!\!\!\begin{array}{lll}
			&\!\!\!\!\!\!{\text {SINR}}(\beta) \!\!=\!\!\hat \alpha_1 {\bc_\beta^{\star\ctrans}}\! \bM_{\!0} \bc_\beta^\star \\
			\epc \\
			&\!\!\!\!\!\!\det({\textbf{CRB}})^{-1}(\beta) =\!\!\displaystyle \hat \alpha_2 \big(\bc_\beta^{\star\ctrans} \!\bM_{\!0} {\bc_\beta^\star}\bc_\beta^{\star\ctrans} \!\bM_{\!2} \bc_\beta^\star
			 -|\bc_\beta^{\star \ctrans} \!\bM_{\!1} \!{\bc_\beta^\star}|^2\big)
		\end{array}
		\right.\!\!\!\!\!,\nonumber
	\end{align}
	with $\hat \alpha_1=|\alpha|^2N, \hat \alpha_2 = {\!4|\alpha|^4\!N^2 \pi^2\! B^2\!\!}/{3}$, is referred to as the Pareto curve. It is also known as the Pareto frontier, describing the boundary of achievable outcomes in a system under specific constraints.

	Figure \ref{f2} plots the Pareto curves obtained for $\zeta=$0.1, 0.4, and 1, respectively, and displaying also the Pareto points obtained with $\beta=$ 0, 0.0001, 0.001, 0.01, 0.1, and 1 (see the markers along the curves). The performance of the generalized Barker and the P3 code is also provided. The curves indicate a trade-off between the detection performance and the delay-Doppler estimation accuracy, emphasizing the role of the weight $\beta$ in the determination of the Pareto point and the cost paid for increasing one component while penalizing the other. In particular, better detection performance arising by increasing $\beta$, will inevitably bring about a decrease in estimation performance, and vice versa. In addition, the curve gradually expands outward as the similarity parameter $\zeta$ increases because of the larger feasible set. Finally, the designed code is superior to the existing non-adaptive counterparts in terms of SINR, and can also achieve a better estimation performance suitably selecting $\beta$; in particular for the case study at hand both P3 and generalized Barker are not Pareto optimal. 
	
	\begin{figure}[htbp]
		\centering
		\centering
		\includegraphics[width=3.3in]{./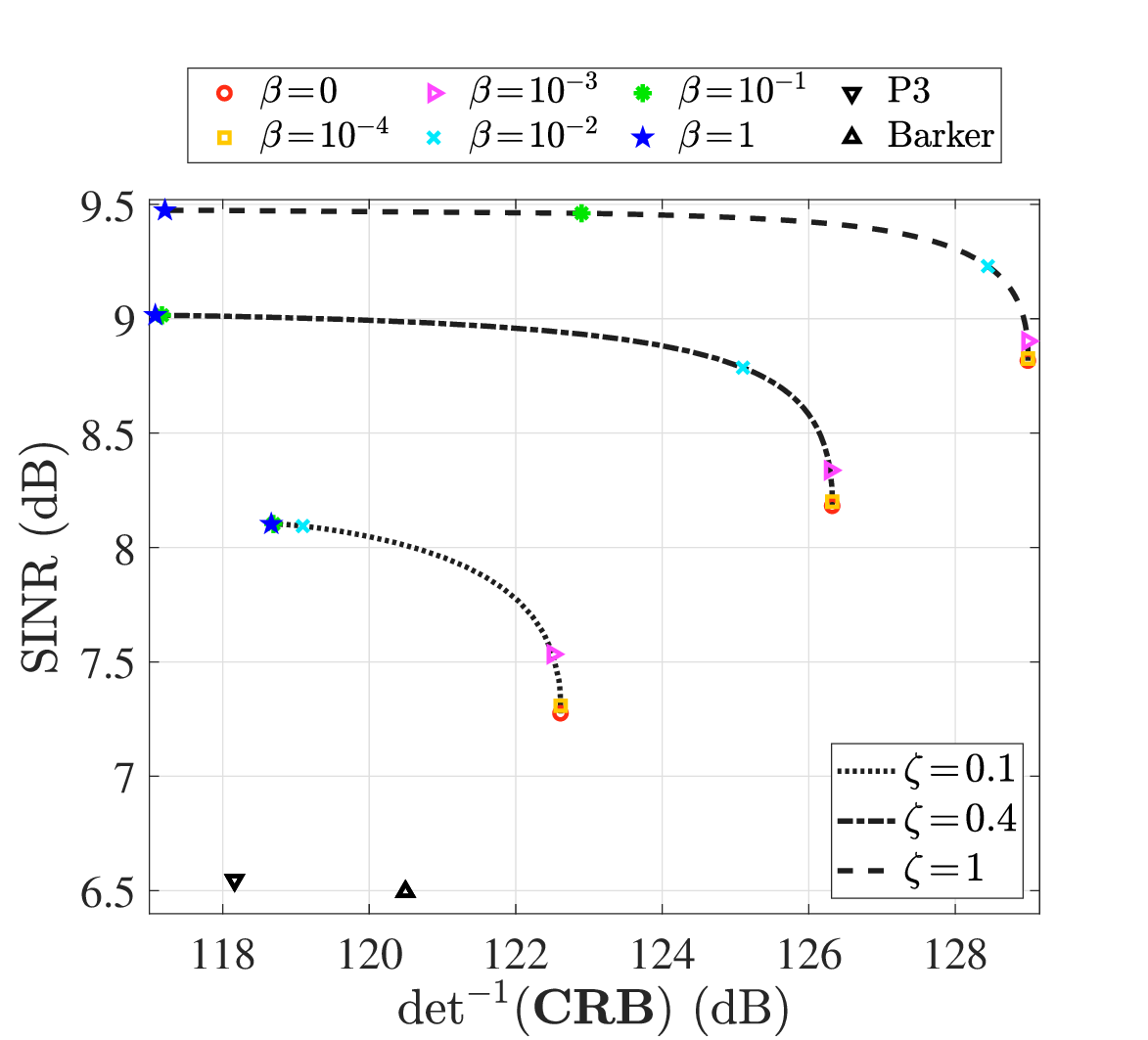}
		\caption{ \small Pareto curves for $\zeta=$ 0.1, 0.4, and 1.}\label{f2}
	\end{figure}
	
	Figures \ref{f3}\subref{f3a}, \subref{f3b}, \subref{f3c}, and \subref{f3d} depict the CRB for delay estimation, the CRB for Doppler estimation, detection probability, and the CRB determinant versus beta, respectively, for the optimized radar code (under $\zeta=$ 0.4), the reference code, and benchmarks. The curves show that, increasing $\beta$, higher values of $P_\rmd$, $\det({\textbf{CRB}_{\tau f_\rmd}(\bc)} )$, and ${\text {CRB}}_f$ are obtained, but ${\text {CRB}}_\tau$ values becomes lower and lower. This was expected, as a higher $\beta$ imposes a stronger relevance of the SINR term in the objective function leading to a possible improvement of the detection probability at the expense of the estimation accuracy, i.e.,  $P_\rmd$ and $\det({\textbf{CRB}_{\tau f_\rmd}(\bc)} )$ are increased accordingly. On the other hand, $\text{CRB}_{\tau}$ defined in \eqref{CRBt} is proportional to the inverse of the SINR, thus it decreases as $\beta$ increases. In particular, for $\beta=1$, the ${\text {CRB}}_\tau$ achieved via our devised code is close but not smaller than the benchmark, because only the energy constraint is forced into the optimization problem to obtain the benchmark; the optimized CRB determinant is worse than that of the reference code since the CRB determinant disappears in the objective function for $\beta=$ 1. In addition, the code designed with $\beta=$ 0 yields the best Doppler estimation among the different values of $\beta$, albeit slightly inferior to the benchmark code owing to the similarity restriction. Additionally, the $P_\rmd$ of the designed sequence experiences a degradation, but it still remains greater than that of the reference code (P3 code) for the same value of $|\alpha|^2$. This is explainable because, even though the SINR is removed from the objective function when $\beta=0$, there are still positively correlated terms with SINR in $\det^{-1}({\textbf{CRB}_{\tau f_\rmd}(\bc)} )$ as described in \eqref{detCRB}, hence maximizing $\det^{-1}({\textbf{CRB}_{\tau f_\rmd}(\bc)} )$ as a by-product improves also the SINR. Finally, it is evident that the higher the echo power, i.e., $|\alpha|^2$, the better both detection probability and parameter estimation accuracy.

	The effects of the similarity parameter $\zeta$ are analyzed in Figs. \ref{f4}\subref{f4a}, \subref{f4b}, \subref{f4c}, and \subref{f4d} for $\beta=$ 0.01. The curves show that increasing $\zeta$ leads to lower $\det({\textbf{CRB}_{\tau f_\rmd}(\bc)} )$, in general lower individual CRBs, and higher $P_\rmd$ because of the larger size of the similarity region and the presence of positively correlated terms with SINR in $\det^{-1}({\textbf{CRB}_{\tau f_\rmd}(\bc)} )$. Indeed, for $\zeta=$ {0.01}, 0.1, the attained Doppler accuracies are worse than that of the reference code, whereas the delay accuracy and  $\det({\textbf{CRB}_{\tau f_\rmd}(\bc)} )$ are better. This happens because the CRB determinant is optimized instead of the individual CRBs for delay or Doppler estimation. As expected, for $\zeta=$ 2, the optimized CRB determinant is slightly worse than the benchmark, because $\beta$ is small but not zero. 
	
	{To illustrate the effectiveness of the similarity constraint in preserving specific properties of the reference code, the PAPR of the designed sequence and ISL of its auto-correlation function \cite{he2012waveform} are reported in Table \ref{table2} for $\beta=$ 0.01 as a function of the similarity level. For the reference P3 code, the PAPR and ISL are 1 and -9.62 dB, respectively. As anticipated, tightening the similarity constraint (indicated by a decrease in $\zeta$) results in the PAPR and ISL of the devised sequence progressively approaching the corresponding values of the P3 code.

	\begin{table}[htbp]                                                                                                                                                                        
		\centering  
		\renewcommand{\arraystretch}{1.2}
		\caption{\small PAPR and ISL of the devised code for $\beta=$ 0.01.}\label{table2}  
		\setlength{\tabcolsep}{2mm}
		\begin{tabular}{|c|| c|c|c|c|c|c|}
			\hline  
			{$\zeta$} & {0.01} & {0.1} & {0.2} & {0.4}&1&2\\
			\cline{2-7}  
			\hline    
			\hline 
			PAPR & {1.29} & {2.09} & {2.46} & {2.71}&3.38 & 4.64\\ 
			\hline
			ISL (dB) & -8.70 & -3.28& -0.10&3.34&7.74&7.83\\ 
			\hline
		\end{tabular}
	\end{table}}
			
	\begin{figure*}[htbp]
		\centering
		\subfigure[]{\label{f3a}
			\includegraphics[width=3in]{./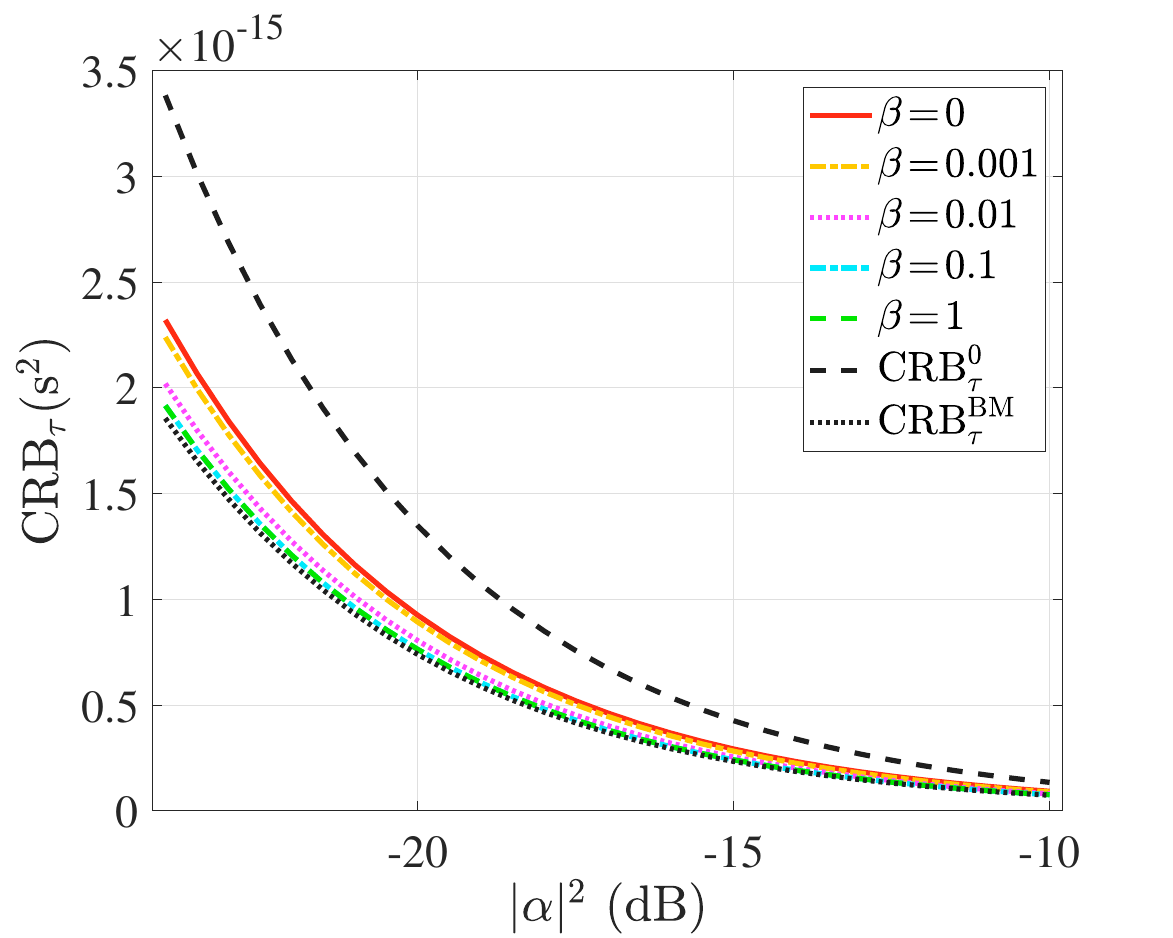}
		}\hspace{-6mm}
		\subfigure[]{\label{f3b}
			\includegraphics[width=3in]{./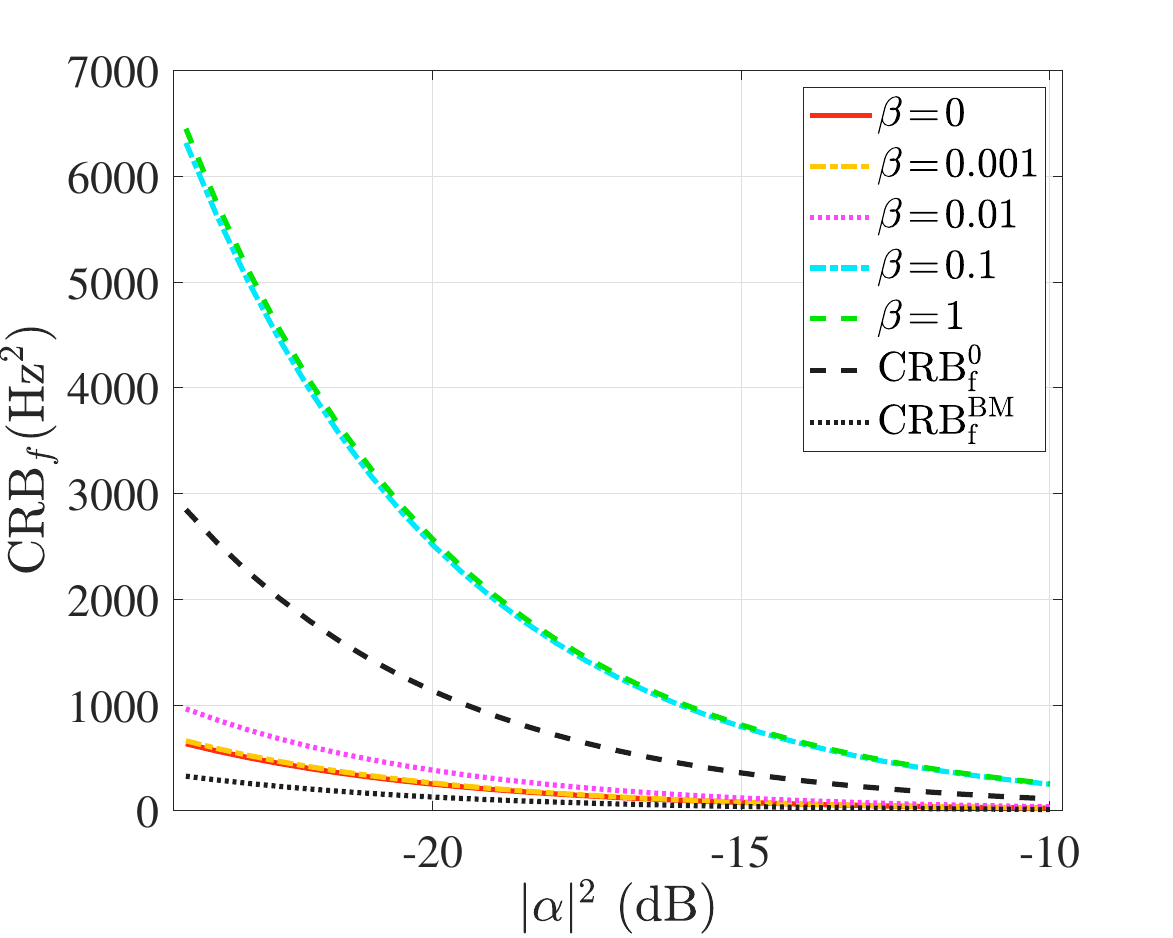}
		}\hspace{-6mm}\vspace*{-3.6mm} \\
		\subfigure[]{\label{f3c}
			\includegraphics[width=3in]{./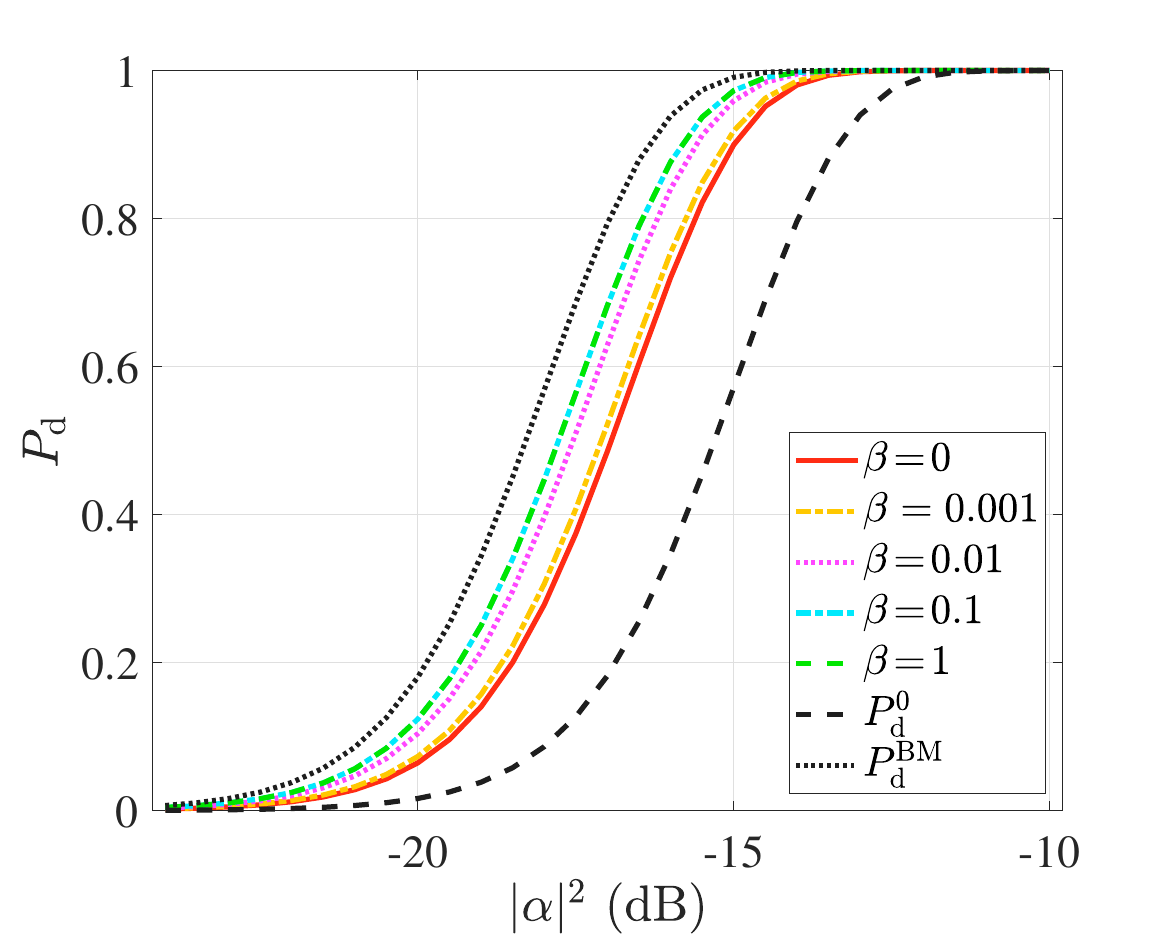}
		}\hspace{-6mm}
		\subfigure[]{\label{f3d}
			\includegraphics[width=3in]{./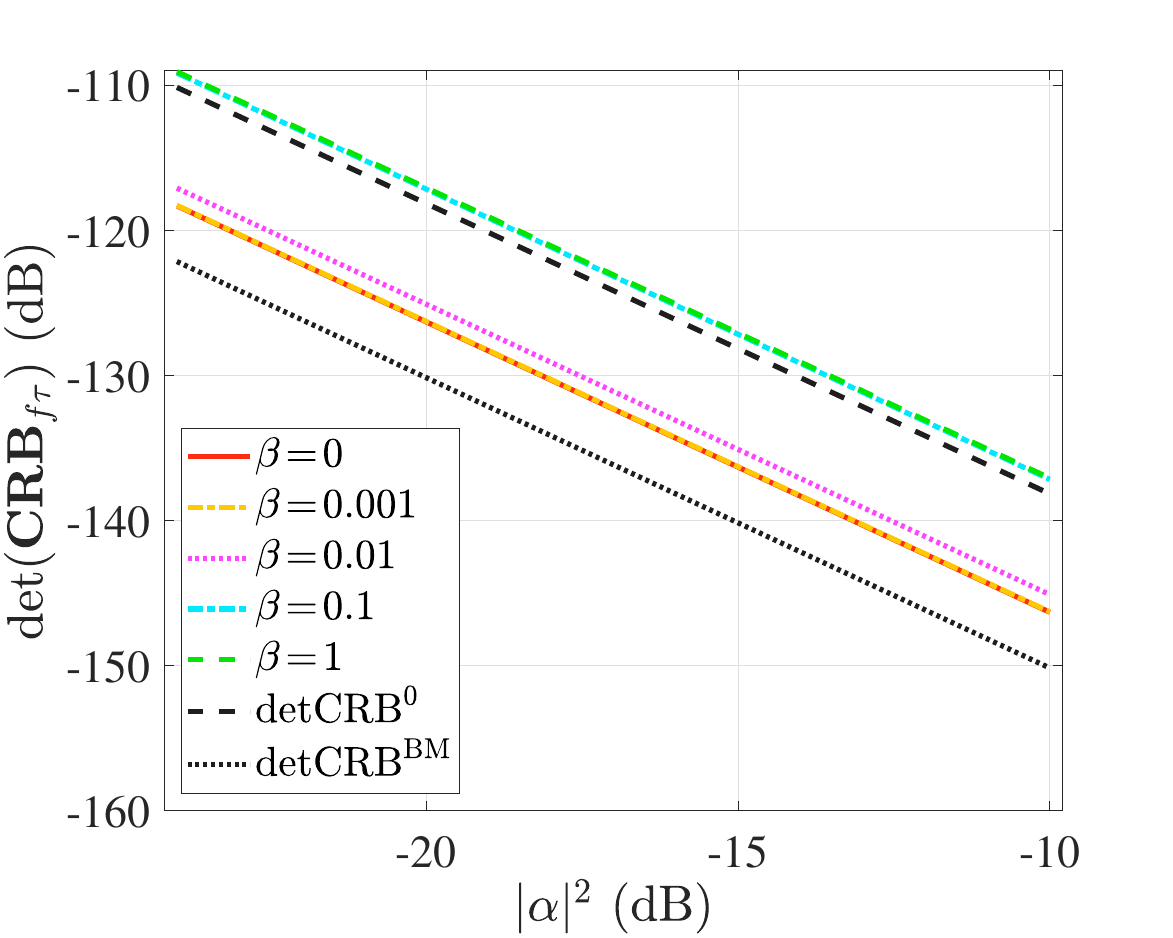}
		}\hspace{-6mm}
		\vspace{-3mm}
		\caption{\small Behavior of the code designed with $\zeta=$ 0.4 versus $|\alpha|^2$ in dB: (a) CRB for delay estimation, (b) CRB for Doppler estimation, (c) $P_\rmd$, and (d) CRB determinant.}\label{f3}
	\end{figure*}
	
	\begin{figure*}[htbp]
		\centering
		\subfigure[]{\label{f4a}
			\includegraphics[width=3in]{./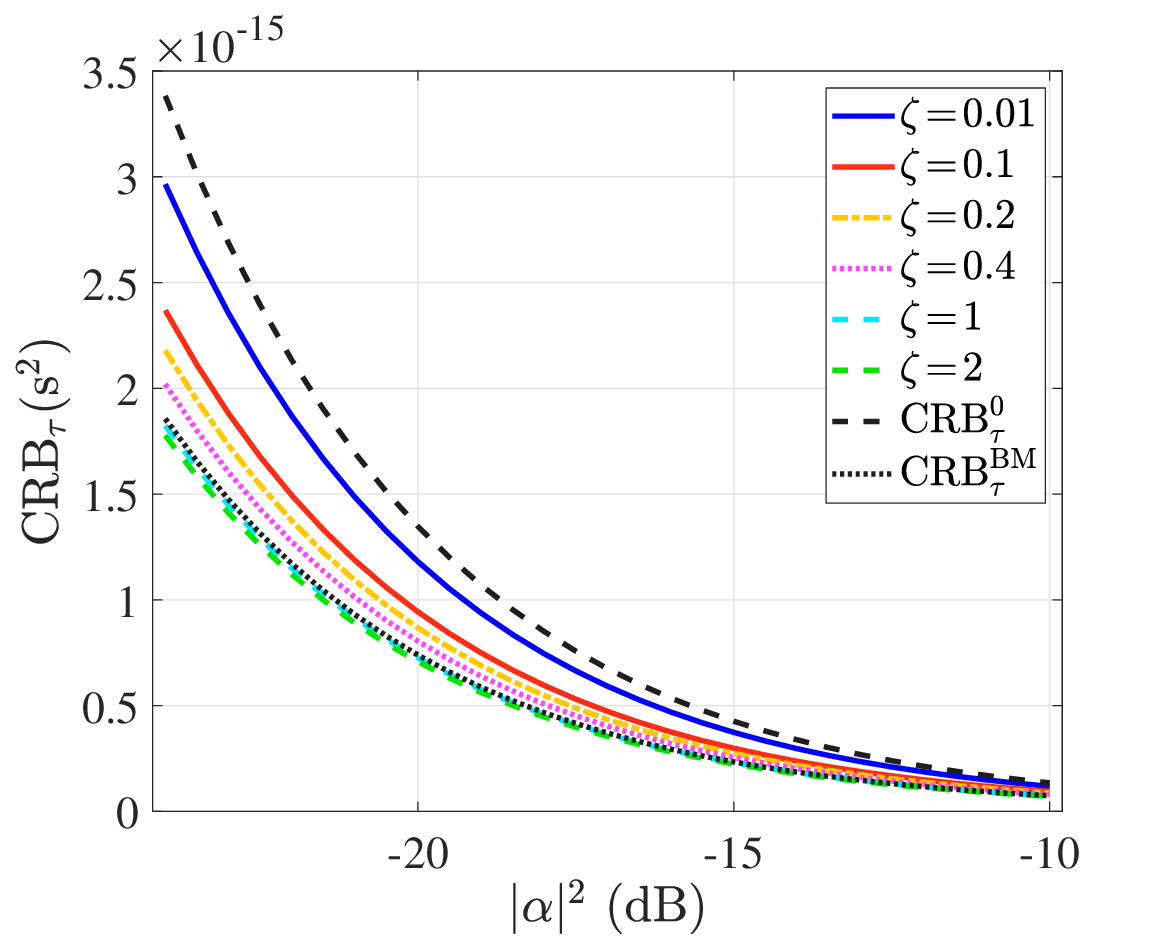}
		}\hspace{-6mm}
		\subfigure[]{\label{f4b}
			\includegraphics[width=3in]{./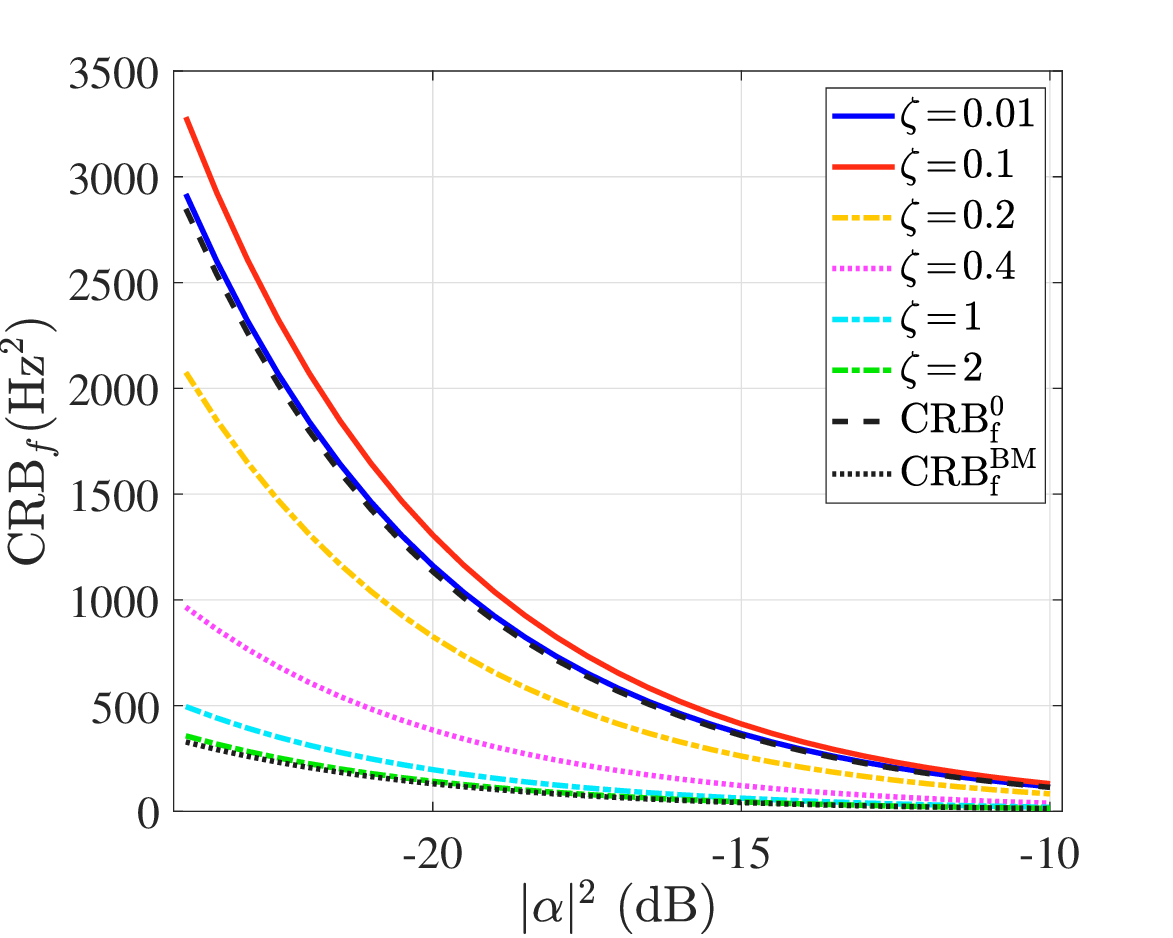}
		}\hspace{-6mm}\vspace{-3.6mm} \\
		\subfigure[]{\label{f4c}
			\includegraphics[width=3in]{./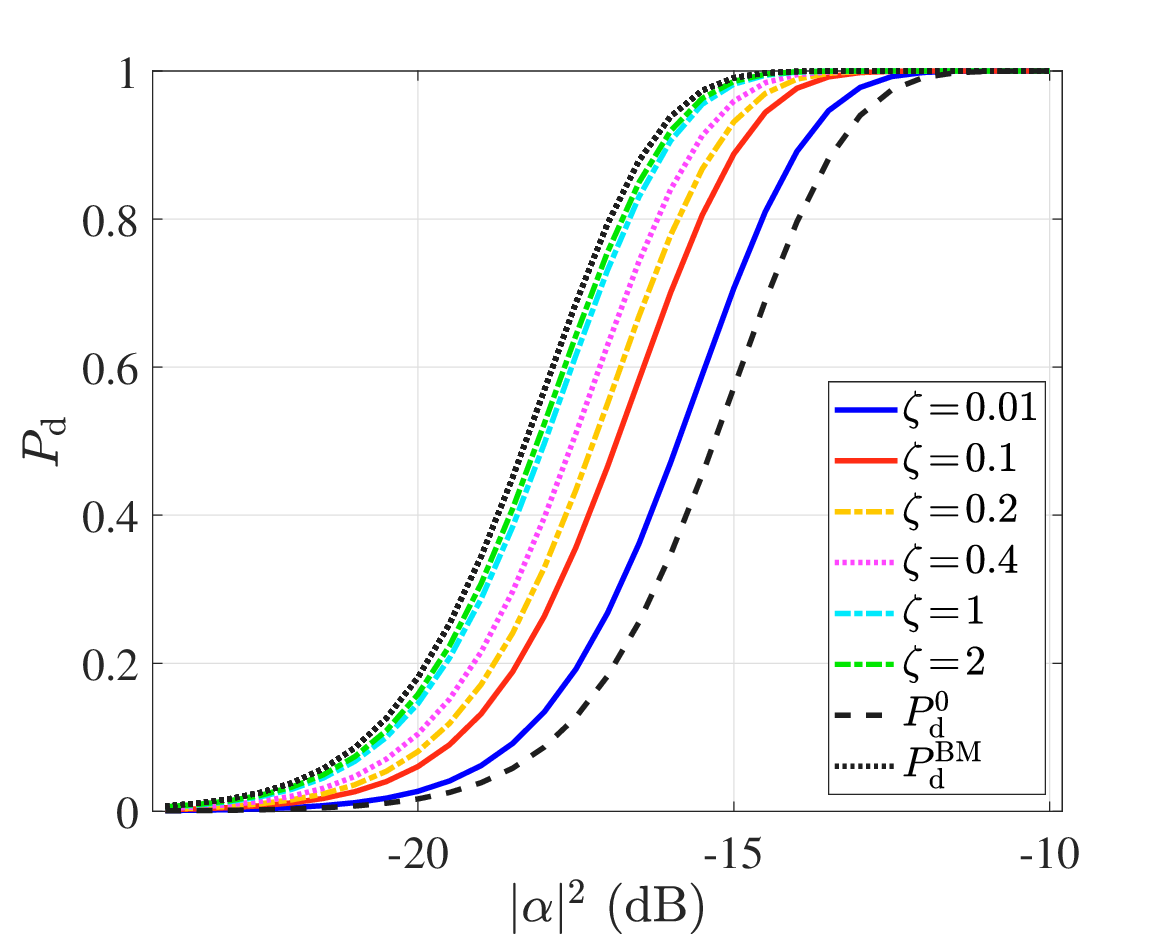}
		}\hspace{-6mm}
		\subfigure[]{\label{f4d}
			\includegraphics[width=3in]{./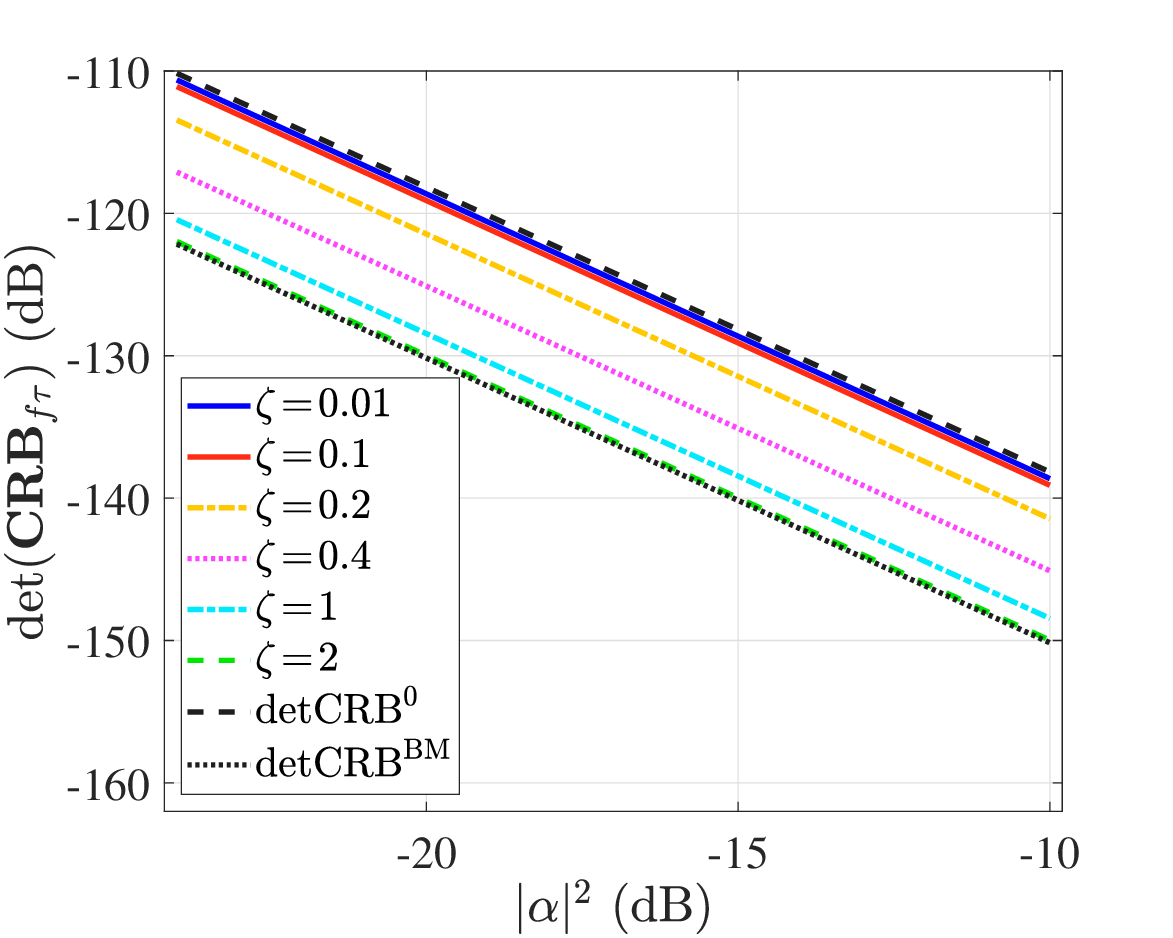}
		}\hspace{-6mm}
		\vspace{-3mm}
		\caption{\small Behavior of the code designed with $\beta=$ 0.01 versus $|\alpha|^2$ in dB: (a) CRB for delay estimation, (b) CRB for Doppler estimation, (c) $P_\rmd$, and (d) CRB determinant.}\label{f4}
	\end{figure*}

	\subsection{Doppler Sensitivity Analysis}
	{In this subsection, the behavior of the proposed algorithm is studied when a mismatch exists between the nominal and actual Doppler frequencies. In other word, the Doppler frequency assigned in \eqref{Pareto} at the design stage may differ from the actual frequency. Two reference codes are employed assuming a nominal Doppler frequency $f_\rmd=$ 600 Hz (corresponding to a normalized Doppler frequency $\nu_\rmd = f_\rmd T_\rmr=$ 0.15).} Specifically, the generalized Barker and the P3 are used. To quantify the mismatch loss, the CRBs and the detection probability, achieved by the designed code in the presence of a target with an actual Doppler frequency $\nu\in[-0.5,0.5]$, are adopted as figure of merits. Specifically, their normalized values with respect to the corresponding nominal performance levels are considered, namely,
	\beq
		\begin{aligned}
			\overline {\text{CRB}}_\tau(\nu) &= \frac{\bc^\ctrans \bM_0(\nu) \bc}{\bc^{\star \ctrans} \bM_0(\nu_\rmd ) \bc^\star},\\[0.5em]
			\overline {\text{CRB}}_{f}(\nu) &= \frac{\bc^\ctrans \bM_0(\nu) \bc\bc^\ctrans \bM_2(\nu) \bc \!-\!|\bc^\ctrans \bM_1(\nu) \bc|^2}{\bc^{\star \ctrans} \bM_0(\nu_\rmd) \bc^\star\bc^{\star \ctrans} \bM_2(\nu_\rmd) \bc^\star \!-\!|\bc^{\star \ctrans} \bM_1(\nu_\rmd) \bc^\star|^2}, \nonumber \\[0.5em]
			\overline {P}_\rmd(\nu) &= \frac{Q\Big(\sqrt{2 |\alpha|^2N\bc^\ctrans\bM_0(\nu)\bc}, \sqrt{-2\ln P_{\text{fa}}}\Big)}{Q\Big(\sqrt{2 |\alpha|^2N\bc^{\star \ctrans}\bM_0(\nu_\rmd)\bc^{\star}}, \sqrt{-2\ln P_{\text{fa}}}\Big)}, \nonumber
		\end{aligned}
	\eeq
	where $\bc^\star$ is the devised code, whereas $\bc$ can be either the devised or the reference code, $\bM_0(\nu) \!= \!\bSigma_{\sss \rmS}^{-1} \!\odot\! \ba(\nu T_\rmr)\ba(\nu T_\rmr)^\ctrans, \bM_1(\nu) \!=\! \Diag(\bb^*) \bM_0(\nu)$, and $ \bM_2(\nu) \!=\! \bb\bb^\ctrans \odot\bM_0(\nu)$. Based on the aforementioned definitions, the threshold for a 10\% performance loss (with respect to the matched condition) is 1.1 for estimation accuracy, and 0.9 for detection performance. 
	
	Figures \ref{f5}\subref{f5a}, \subref{f5b}, and \subref{f5c} depict the normalized CRB for delay, Doppler estimation, and normalized detection probability, respectively, assuming the P3 code as similarity sequence and setting $\zeta=$ 0.4, $\beta=$ 0.01. The dashed line refers to 10\% performance loss. The curves highlight that the designed code outperforms the P3 code at $\nu=$ 0.15 in terms of CRBs and $P_\rmd$. In particular, the estimation accuracies and detection probability loss do not exceed 10\% if the  normalized target Doppler frequency falls within $[0.05, 0.25]$. {Notably, compared to the reference sequence, the bespoke code enhances estimation accuracies and detection probability at the nominal Doppler frequency, but this improvement comes at the cost of reduced detection performance and estimation accuracies in Doppler regions less critical for the target under tracking (i.e., corresponding to Doppler values far from that of the target), as well as increased PAPR and ISL as reported in Table \ref{table2}.}
	
	Letting the generalized Barker code as a reference, Figs. \ref{f6}\subref{f6a}, \subref{f6b}, and \subref{f6c} plot the estimation accuracies and detection probability curves for the devised code and the generalized Barker sequence. The nominal Doppler frequency is indicated by the vertical dash-dotted line, and the 10\% performance losses are shown as the dashed lines. The estimation accuracy and detection performance of the synthesized code is significantly better than the generalized Barker in the neighborhood of the nominal Doppler frequency. The results also emphasize that, if the absolute value of the Doppler offset (with respect to the nominal value) is smaller than or equal to 0.1, i.e., $\nu \in[0.09, 0.28]$, the designed sequence exhibits a performance loss smaller than 10\% for both estimation accuracy and detection probability. 
	
	In conclusion, utilizing both the P3 and the generalized Barker code as reference signatures, the designed sequences provide an improved detection probability and estimation accuracy at the nominal Doppler frequency, while maintaining a satisfactory performance for the normalized Dopplers whose deviations are in the order of 0.1 from the nominal value.
	
	\begin{figure*}[htbp]
		\centering
		\subfigure[]{\label{f5a}
			\includegraphics[width=2.2in]{./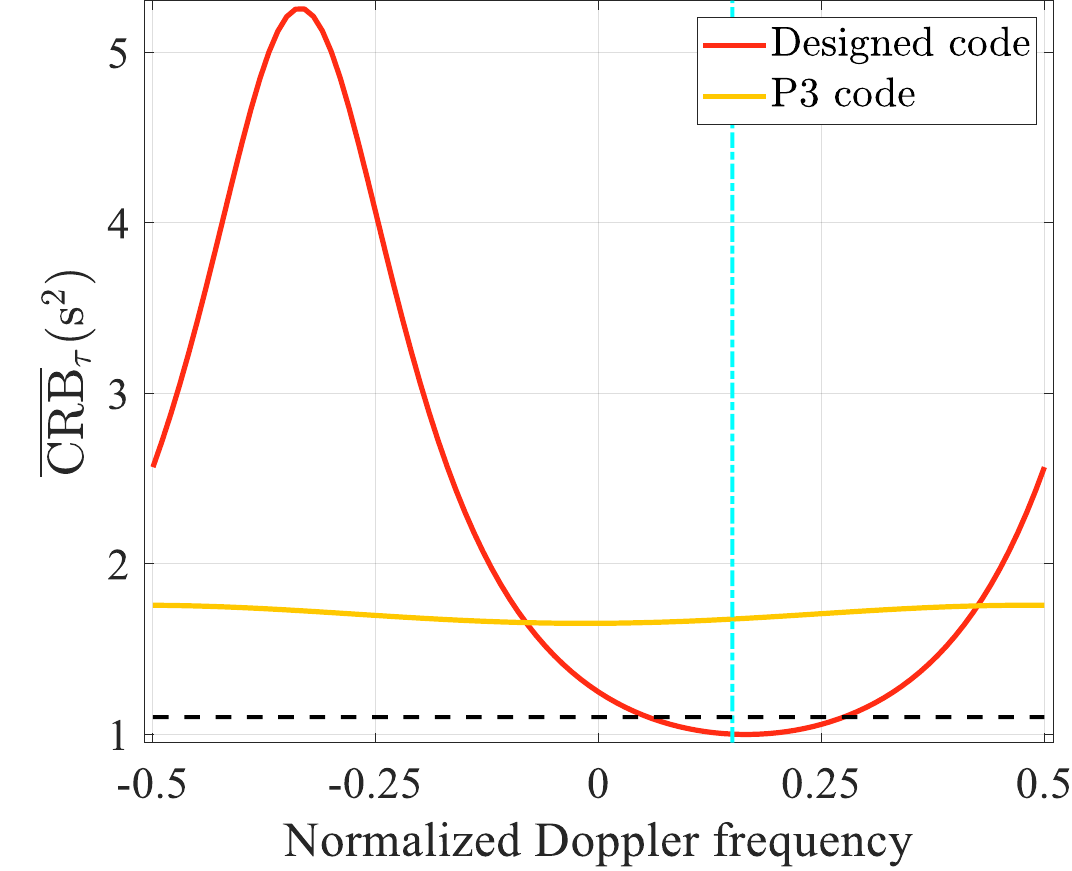}
		}\hspace{-3mm}
		\subfigure[]{\label{f5b}
			\includegraphics[width=2.2in]{./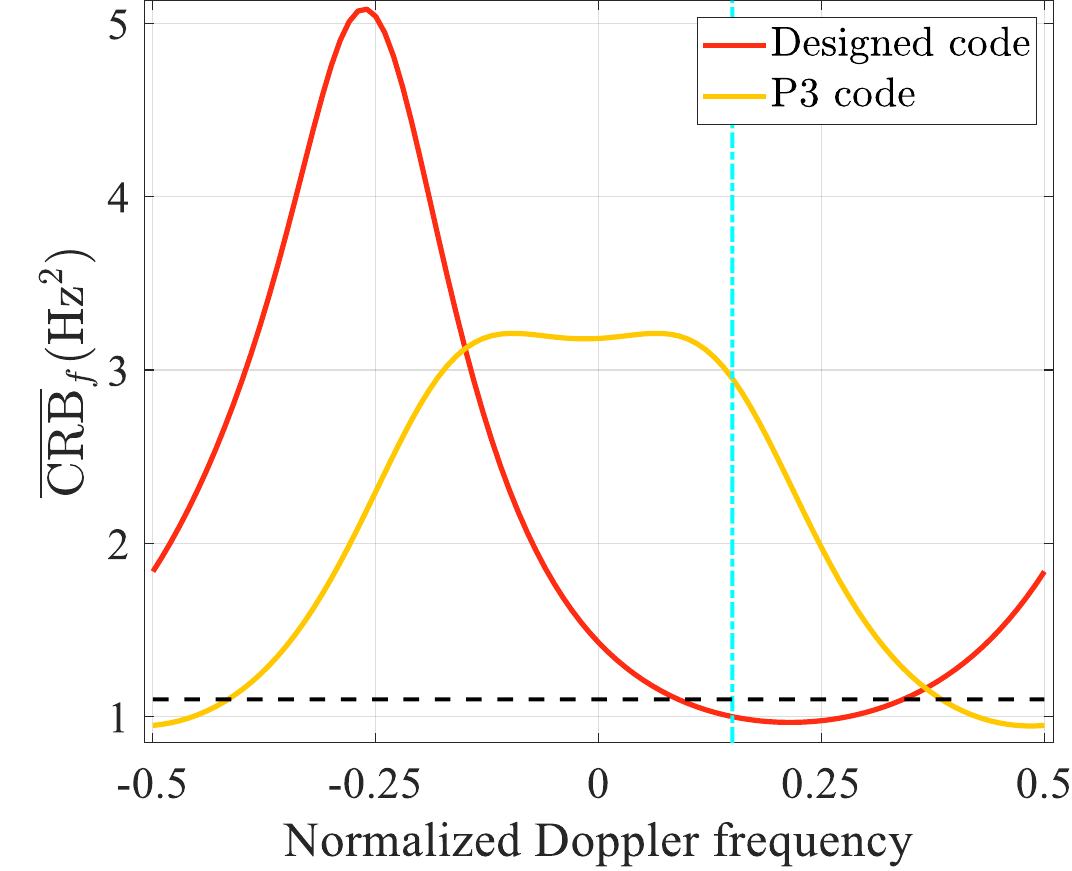}
		}\hspace{-3mm}
		\subfigure[]{\label{f5c}
			\includegraphics[width=2.2in]{./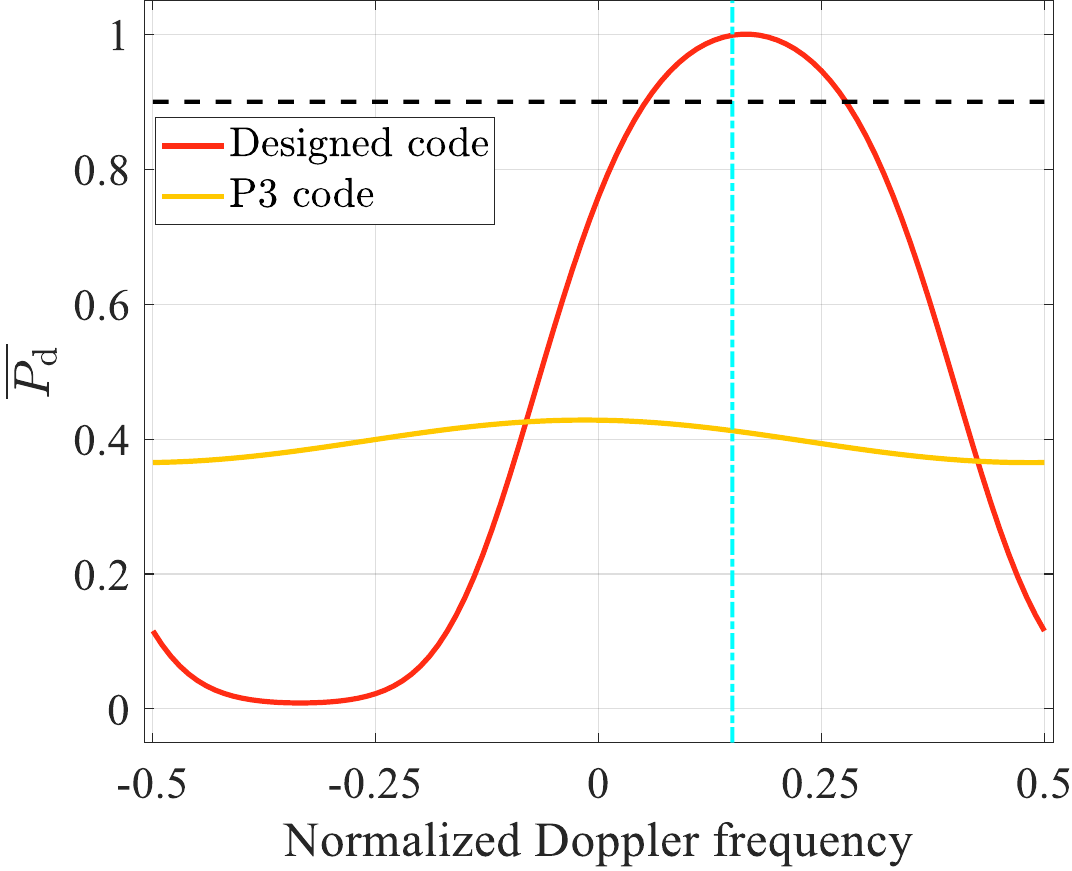}
		}\hspace{-3mm}
		\caption{ \small Behavior of the synthesized sequence obtained with $\beta\!=$ 0.01, $\zeta=$ 0.4, $\nu_\rmd=$ 0.15, and P3 as reference, versus the normalized Doppler frequency: (a) normalized CRB for delay estimation, (b) normalized CRB for Doppler estimation, and (c) normalized $P_\rmd$.}\label{f5}
	\end{figure*}
	
	\begin{figure*}[htbp]
		\centering
		\subfigure[]{\label{f6a}
			\includegraphics[width=2.2in]{./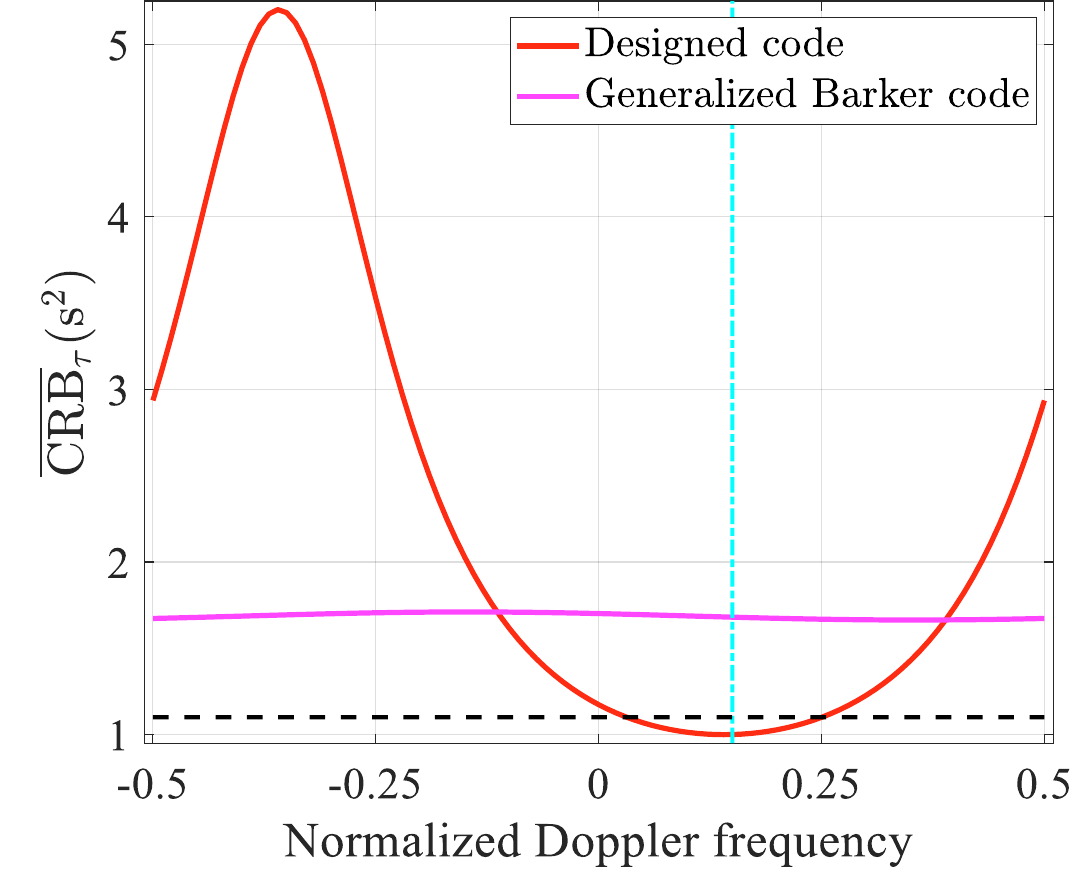}
		}\hspace{-3mm}
		\subfigure[]{\label{f6b}
			\includegraphics[width=2.2in]{./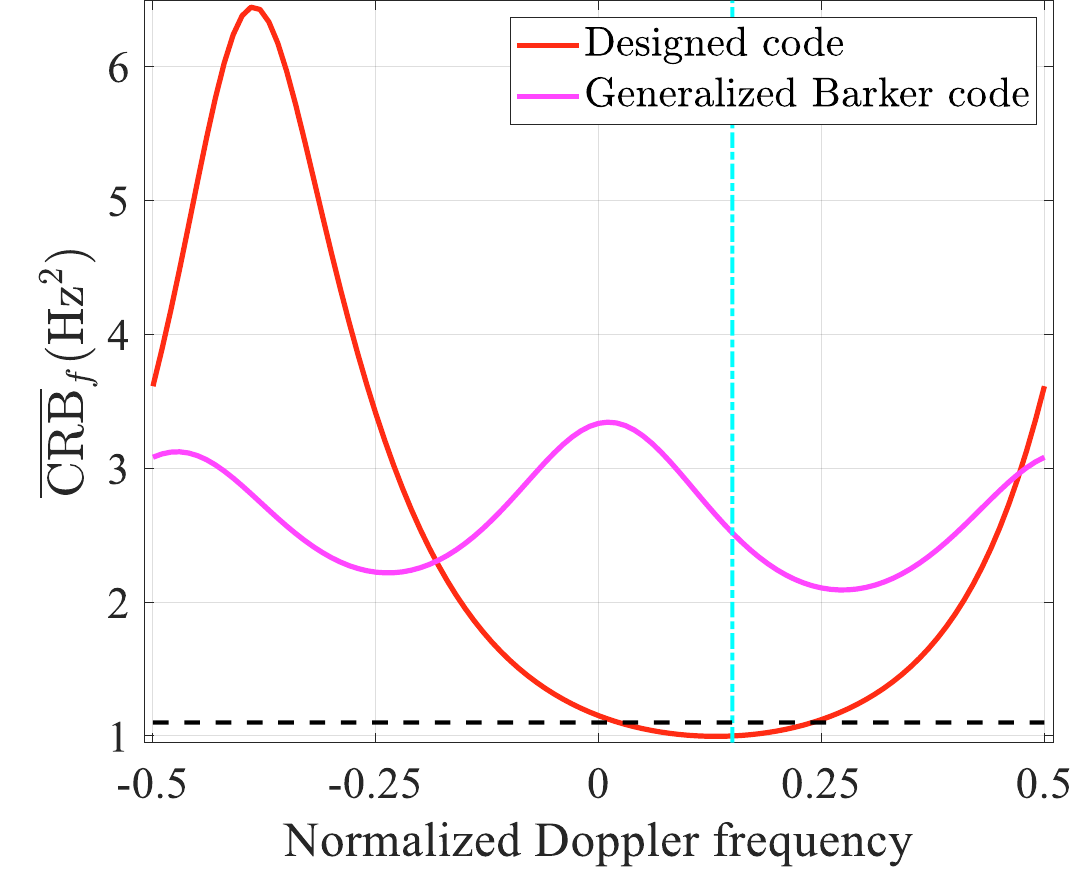}
		}\hspace{-3mm}
		\subfigure[]{\label{f6c}
			\includegraphics[width=2.2in]{./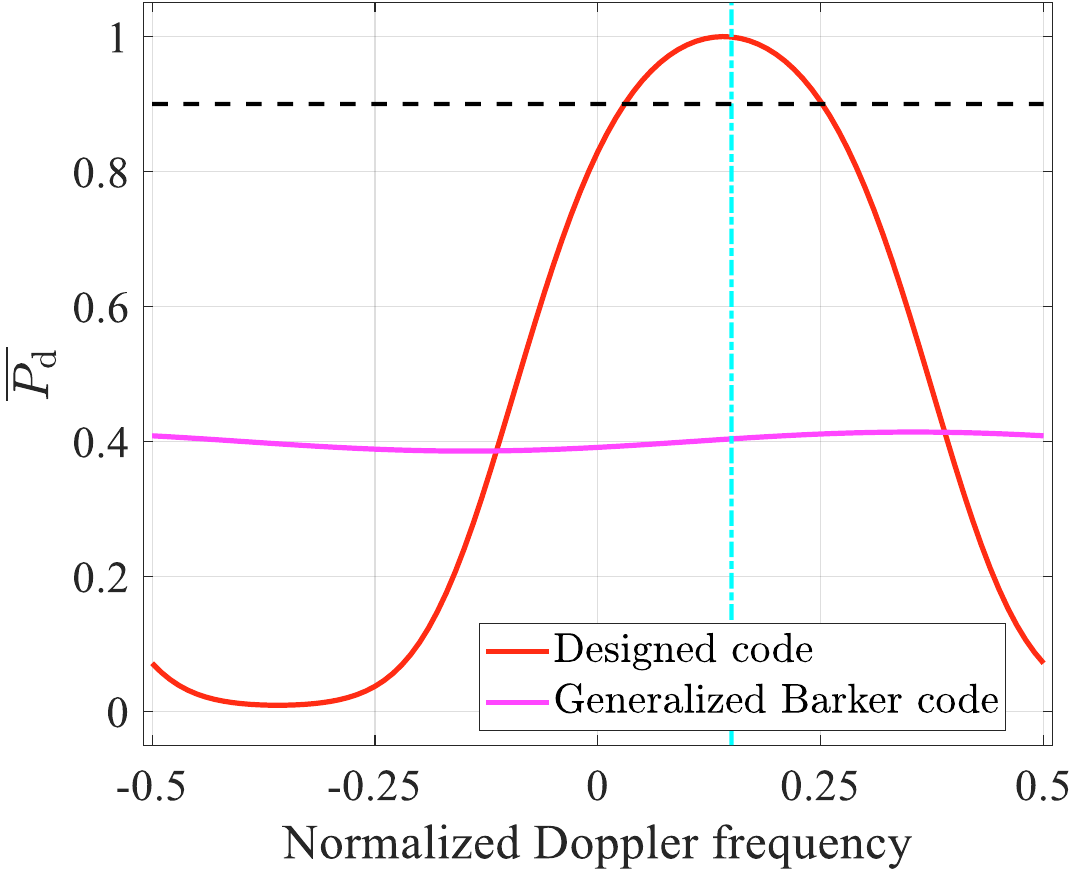}
		}\hspace{-3mm}
		\caption{ \small Behavior of the devised sequence obtained by $\beta\!=\!$ 0.01, $\zeta\!=\!$ 0.4, $\nu_\rmd\!=\!$ 0.15, and generalized Barker as reference, versus the normalized Doppler frequency: (a) normalized CRB for delay estimation, (b) normalized CRB for Doppler estimation, and (c) normalized $P_\rmd$.}\label{f6}
	\end{figure*}
	
	\section{Conclusion}\label{SecV}
	The synthesis of radar encoded waveforms under energy and similarity constraints in the presence of colored Gaussian disturbance to improve track maintenance has been explored in this paper. The design criterion relies on the joint constrained optimization of the detection performance and CRBs on delay and Doppler estimation accuracies. Specifically, the problem has been formulated as a non-convex, multi-objective optimization program subject to two quadratic constraints. With the guideline of the scalarization procedure, tensor-based relaxation, and the MBI framework, an iterative procedure with polynomial complexity and ensured convergence properties has been developed to design the slow-time code. In particular, the fourth order conjugate super-symmetric tensor for the real-valued complex inhomogeneous quartic polynomial involved in the tensor-based relaxation has been constructed. Besides, each inner problem involved in the MBI iteration has been shown to enjoy hidden-convexity and remarkably a closed-form solution. Notably, the computational complexity of the procedure is linear with the iteration number and quadratic with the radar code length. Performance evaluation at the analysis stage has included assessing the monotonicity of the objective value, detection performance, CRBs for delay and Doppler estimation accuracies, and sensitivity to target Doppler. Moreover, the Pareto curves have been analyzed to illustrate the impact of Pareto weights on the trade-off between detection performance and delay-Doppler estimation accuracy. In summary, the main contributions of the paper are:
	\begin{enumerate}[a)]
			\item the introduction of a slow-time code synthesis technique for jointly optimizing SINR, delay-Doppler accuracy under colored Gaussian disturbance;
			\item the development of a polynomial-time iterative procedure combining scalarization, tensor-based relaxation, and MBI approaches for the  multi-objective optimization problem, as well as the investigation of convergence properties of the relaxed procedure;
			\item the construction of the fourth order tensor for the involved real-valued complex inhomogeneous quartic polynomial;
			\item the presentation of numerical results to validate the theoretical achievements.
	\end{enumerate}
	
	
	Potential future research avenues might include the extension of the approach to {multi-target scenario\cite{fan2024airborne}, signal-dependent clutter scenario\cite{sen2010ofdm,aubry2013knowledge}}, space-time processing in multi-channel systems (e.g. frequency diversity arrays \cite{lan2023adaptive, lan2024fda}, MIMO\cite{cui2014mimo, yu2022integrated}, and polarimetric radars\cite{cheng2017robust, jin2021quasi}). Moreover, it would be worth to account for some additional radar measurement impairments (such as phase noise\cite{aubry2016radar1, aubry2016radar2}) as well as deal with spectrum contested and congested environments via appropriate and specific constraints\cite{aubry2020design}.
	
	\section*{Acknowledgment}
		This research activity has been conducted during the visit of Tao Fan at the Universit\`a degli Studi di Napoli “Federico II”, DIETI under the local supervision of Prof. A. De Maio and  Prof. A. Aubry. 
	
			\appendix
	\subsection{Proof of Proposition \ref{Prop1}} \label{proofCRB}
	Since $\bmu(\bgamma)=\alpha (\ba(f_{\rm d})  \odot  \bc) \otimes  {\bs(\tau)}$, the terms $\disps\frac{\partial \bmu({\bgamma})}{\partial \widetilde \balpha^\trans}$, $\disps\frac{\partial \bmu({\bgamma})}{\partial \tau}$, and $\disps\frac{\partial \bmu({\bgamma})}{\partial f_\rmd}$ in \eqref{Dgamma} are, respectively, given by $\frac{\partial \bmu({\bgamma})}{\partial \widetilde \balpha^\trans} = [1, \jmath]\! \otimes \! [(\ba(f_{\rm d}) \! \odot \! \bc) \! \otimes \! {\bs(\tau)}]$, $\frac{\partial \bmu({\bgamma})}{\partial \tau}= \alpha (\ba(f_{\rm d}) \! \odot \! \bc) \! \otimes \!\frac{\partial  {\bs(\tau)}}{\partial \tau}$, and $\disps\frac{\partial \bmu({\bgamma})}{\partial f_\rmd} = \alpha (\ba(f_{\rm d}) \! \odot \! \bc\! \odot \! \bb) \! \otimes \! {\bs(\tau)}$, where $\bb\!=\![0,\jmath2\pi T_\rmr,\cdots,\jmath2\pi (M\!-\!1)T_\rmr]^\trans$. Then, according to the Slepian-Bangs formula\cite{stoica1997intro}, the Fisher information matrix $\mathbfcal{I}$  is
	\beq
	\mathbfcal{I}=\disps \left[ {\begin{array}{*{20}{l}}
			{\disps \mathbfcal{I}_{ \widetilde \balpha\widetilde \balpha}}&{\mathbfcal{I}_{\tau { \widetilde \balpha}}^\trans}&{\mathbfcal{I}_{{ f}{\widetilde \balpha}}^\trans}\\
			{\mathbfcal{I}_{\tau { \widetilde \balpha}}}&{\mathcal{I}_{\tau \tau}}&{\mathcal{I}_{f\tau }^\trans}\\
			{\mathbfcal{I}_{f{ \widetilde \balpha}}}&{\mathcal{I}_{ f\tau }}&{\mathcal{I}_{ff}}
	\end{array}} \right],
	\eeq
	where 
	\begin{subequations}
		\begin{eqnarray}
			\!\!\!\mathbfcal{I}_{\widetilde \balpha\widetilde\balpha} &\!\!\!\!\!\!=\!\!\!\!\!\!& 2\|{\bs(\tau)}\|^2(\ba(f_{\rm d}) \! \odot \! \bc)^\ctrans\bSigma_{\rmt}^{-1}(\ba(f_{\rm d}) \! \odot \! \bc)  \bI_2, \nonumber \\
			\mathbfcal{I}_{\tau \widetilde \balpha}\;&\!\!\!\!\!=\!\!\!\!\!\!& 2\Re\!\left\{\![1, \jmath]\! \otimes \! \alpha^*(\ba(f_{\rm d}) \! \odot \! \bc)^\ctrans\bSigma_{\rmt}^{-1}(\ba(f_{\rm d}) \! \odot \! \bc) \frac{\partial^\ctrans \! {\bs(\tau)}}{\partial \tau}{\bs(\tau)}\!\right\}\!\!, \nonumber \\
			\mathbfcal{I}_{f\widetilde\balpha}\; &\!\!\!\!\!\!=\!\!\!\!\!\!&  2\|{\bs(\tau)}\|^2\Re\big\{[1, \jmath]\! \otimes \!\alpha^*(\ba(f_{\rm d}) \! \odot \! \bc\! \odot \! \bb)^\ctrans\bSigma_{\rmt}^{-1}(\ba(f_{\rm d}) \! \odot \! \bc)\big\}, \nonumber\\
			\mathcal{I}_{\tau \tau}\;&\!\!\!\!\!\!=\!\!\!\!\!\!& 2|\alpha|^2 \Big\|\frac{\partial  {\bs(\tau)}}{\partial \tau}\Big\|^2(\ba(f_{\rm d}) \! \odot \! \bc)^\ctrans\bSigma_{\rmt}^{-1}(\ba(f_{\rm d}) \! \odot \! \bc), \nonumber\\
			\mathcal{I}_{ f\tau}\;&\!\!\!\!\!\!=\!\!\!\!\!\!& 2|\alpha|^2\Re\!\left\{\!(\ba(f_{\rm d}) \! \odot \! \bc)^\ctrans\bSigma_{\rmt}^{-1}(\ba(f_{\rm d}) \! \odot \! \bc \! \odot \! \bb)  \frac{\partial^\ctrans\!  {\bs(\tau)}}{\partial \tau} {\bs(\tau)}\!\right\}\!\!, \nonumber\\
			\mathcal{I}_{ff}\;&\!\!\!\!\!\!=\!\!\!\!\!\!& 2|\alpha|^2 \|{\bs(\tau)}\|^2  (\ba(f_{\rm d}) \! \odot \! \bc \! \odot \! \bb) ^\ctrans\bSigma_{\rmt}^{-1}(\ba(f_{\rm d}) \! \odot \! \bc \! \odot \! \bb). \nonumber
		\end{eqnarray}
	\end{subequations}
	
	As a consequence, the CRB for the Doppler shift and two-way delay is 
	\beq \label{CRBft}
	\begin{aligned}
		\!\!\!\!\textbf{CRB}_{\tau f_\rmd}(\bc)\!\!=\!\!\left\{\!\left[ {\begin{array}{*{20}{l}}
				\!\!{\mathcal{I}_{\tau \tau}}&\!\!{\mathcal{I}_{f \tau }^\trans}\!\!\\
				\!\!{\mathcal{I}_{f \tau }}&\!\!{\mathcal{I}_{ff}}\!\!
		\end{array}} \right]\!-\! \mathbfcal{J}\mathbfcal{I}_{\widetilde\balpha\widetilde\balpha}^{-1}\mathbfcal{J}^\trans\right\}^{\!\!\!-1}\!\!\!\!=\!\!\left[ {\begin{array}{*{20}{l}}
				\!\!{\mathcal{\tilde I}_{\tau \tau}}&\!\!{\mathcal{\tilde I}_{f \tau }^\trans}\!\!\\
				\!\!{\mathcal{\tilde I}_{f \tau }}&\!\!{\mathcal{\tilde I}_{ff}}\!\!
		\end{array}} \right]^{\!-\!1}
	\end{aligned}\!\!\!\!\!,
	\eeq
	where $\mathbfcal{J}=[\mathbfcal{I}_{\tau\widetilde\balpha}^\trans, \mathbfcal{I}_{f\widetilde\balpha}^\trans]^\trans$. It follows that 
	\beq\label{JIJ}
	\begin{aligned}
		\!\!\!\!\mathbfcal{J}\mathbfcal{I}_{\widetilde\balpha\widetilde\balpha}^{-1}\mathbfcal{J}^\trans=&\frac{1}{2\|{\bs(\tau)}\|^2(\ba(f_{\rm d}) \! \odot \! \bc)^\ctrans\bSigma_{\rmt}^{-1}(\ba(f_{\rm d}) \! \odot \! \bc)   }\\
		&\cdot \left[ {\begin{array}{*{20}{l}}
				{\Re\{P_{\tau\widetilde\balpha}P_{\tau\widetilde\balpha}^*\}}&{\Re\{P_{\tau\widetilde\balpha}P_{f\widetilde\balpha}^*\}}\\
				{\Re\{P_{f\widetilde\balpha}P_{\tau \widetilde\balpha}^*\}}&{\Re\{P_{f\widetilde\balpha}P_{f\widetilde\balpha}^*\}}
		\end{array}} \right]
	\end{aligned},
	\eeq
	where $P_{\tau \widetilde\balpha} = 2\alpha^*(\ba(f_{\rm d}) \! \odot \! \bc)^\ctrans\bSigma_{\rmt}^{-1}(\ba(f_{\rm d}) \! \odot \! \bc) \frac{\partial^\ctrans\!  {\bs(\tau)}}{\partial \tau}{\bs(\tau)}$ and $P_{f\widetilde\balpha} = 2\|{\bs(\tau)}\|^2\alpha^*(\ba(f_{\rm d}) \! \odot \! \bc\! \odot \! \bb)^\ctrans\bSigma_{\rmt}^{-1}(\ba(f_{\rm d}) \! \odot \! \bc)$.
	Substituting equation \eqref{JIJ} into equation \eqref{CRBft}, yields $\mathcal{\tilde I}_{f \tau }=0$, $
	\mathcal{\tilde I}_{\tau \tau} = 2|\alpha|^2(\ba(f_{\rm d}) \! \odot \! \bc)^\ctrans \bSigma_{\rmt}^{-1} (\ba(f_{\rm d}) \! \odot \! \bc)\frac{\partial^\ctrans\!  {\bs(\tau)}}{\partial \tau}\bA_{\tau \tau}(\bgamma, \bc)\frac{\partial  {\bs(\tau)}}{\partial \tau}$, and $\mathcal{\tilde I}_{ff}= 2|\alpha|^2 \|{\bs(\tau)}\|^2\nonumber (\ba(f_{\rm d}) \! \odot \! \bc \! \odot \! \bb) ^\ctrans \bA_{ff}(\bgamma, \bc) (\ba(f_{\rm d}) \! \odot \! \bc \! \odot \! \bb)$, where $\bA_{\tau \tau}(\bgamma, \bc) =   \bI_N - \frac{ {\bs(\tau)} {\bs^\ctrans(\tau)}}{\|{\bs(\tau)}\|^2}$ and $\bA_{ff}(\bgamma, \bc) \!=\!\bSigma_{\rm S}^{-1}\!-\!\frac{\bSigma_{\rm S}^{-1} (\ba(f_{\rm d})  \odot  \bc) (\ba(f_{\rm d})  \odot  \bc)^\ctrans \bSigma_{\rm S}^{-1}}{(\ba(f_{\rm d})  \odot  \bc)^\ctrans \bSigma_{\rm S}^{-1} (\ba(f_{\rm d})  \odot  \bc)}$.
	
	Assuming $\tau_0 \approx \tau$ in \eqref{stau2}, one has \cite{dogandzic2001cramer} $\big\|\frac{\partial  {\bs(\tau)}}{\partial \tau}\big\|^2\!\approx\!\frac{1}{\Delta t}\int_{0}^{T_\rmp} \frac{4\pi^2B^2}{T_\rmp^2}\big(t\!-\!\frac{T_\rmp}{2}\big)^2 dt\!=\!\frac{\pi^2B^2N}{3}$ and $\frac{\partial^\ctrans\!  {\bs(\tau)}}{\partial \tau}{\bs(\tau)}\!\approx\!-\!\frac{1}{\Delta t}\int_{0}^{T_\rmp} \big(\frac{\partial s(t)}{\partial t}\big)^*s(t) dt\!=\!0$. Then, $\mathcal{\tilde I}_{\tau \tau}$ and $\mathcal{\tilde I}_{f f}$ can be simplified as $\mathcal{\tilde I}_{\tau \tau} = \frac{2|\alpha|^2 \pi^2 B^2 N}{3} \bc^\ctrans \bM_0 \bc$ and $\mathcal{\tilde I}_{ff} = 2|\alpha|^2 N \left[\bc^\ctrans \bM_2 \bc -  \frac{|\bc^\ctrans \bM_1 \bc|^2}{\bc^\ctrans \bM_0 \bc }\right]$, respectively, where $\bM_0 = \bSigma_{\rmt}^{-1} \odot \ba(f_{\rmd})\ba(f_{\rm d})^\ctrans$, $\bM_1 = \Diag(\bb^*)\bM_0$, and $ \bM_2 =\bb\bb^\ctrans \odot\bM_0 $. 
	
	Therefore, the CRB matrix for $\tau$ and $f_\rmd$  is 
	\beq
	\textbf{CRB}_{\tau f_\rmd}(\bc) =\frac{1}{2|\alpha|^2N}\left[\!\!\! {\begin{array}{*{20}{c}}
			\!{ \frac{\pi^2 B^2}{3}\bc^\ctrans \bM_0 \bc}&\!\!\!{0}\\
			{0}\!\!\!&\!\!\!{\bc^\ctrans \bM_2 \bc \!-\! \displaystyle \frac{|\bc^\ctrans \bM_1 \bc|^2}{\bc^\ctrans \bM_0 \bc }  }
	\end{array}} \!\!\!\right]^{-1}\!\!\!\!\!\!\!\!.
	\eeq
	
	\subsection{The Multi-linear Form for Describing $\widetilde f(\bc)$}\label{proof_tensor}
	
	Before proceeding further, let us introduce the definition of the fourth order conjugate super-symmetric tensor and a key corollary ( see \cite{aubry2013ambiguity,chen2022generalized} for details) that will be used in the subsequent steps.
	
	\begin{definition*}\label{tensor_def} 
		A fourth order even dimensional complex tensor $\calF\!\in\!\bbC^{2M\!\times2M\!\times\!2M\!\times\!2M}$ is conjugate super-symmetric, if (i) $\calF$ is super-symmetric, i.e., $\calF_\pi \!=\!\! \calF_{iqkl}$, $ \pi\! \in\! \Pi_4(i,q,k,l)$,	$(i,q,k,l) \!\in \!\{1,2,\cdots,2M\}^4$; and (ii) $\calF_{i_1i_2i_3i_4} =  \calF^*_{j_1j_2j_3j_4}$ if $|i_k-j_k|\!=\!M$, for all $ 1\!\leq k\! \leq 4$.
	\end{definition*}
	
	\begin{corollary}\label{coro1}
		Given a real-valued conjugate homogeneous quartic function $g(\bx) \!\!=\!\! \sum\limits_{r=1}^{R}\alpha_r\bx^\ctrans\bR_r\bx\bx^\ctrans\bR_r^\ctrans\bx$ with $\bx\!\in\! \bbC^M$, $\bR_r\!\in\!\bbC^{M\times M}$ and $\alpha_r\!\in\!\bbR$, then 
		\begin{enumerate}[(a)]
			\item\label{coro1_i} the fourth order conjugate super-symmetric tensor $\calG\in\bbC^{2M\times2M\times2M\times2M}$ corresponding to $g(\bx)$ is given by 
			\beq\label{Gpi}
			\!\!\!\!\!\!\!\!\calG_\pi \!\!=\!\! \left\{\!\!\begin{array}{lll}
				\!\!\!\displaystyle\frac{a_{iqkl}}{\text{Card}(\Pi_4\!(M\!\!+\!\!i, \!q, \!M\!\!+\!\!k, \!l))}, & {\begin{array}{l}
						\!\!\!\!\!\! {\text {if}}\; \pi \!\in\! \Pi_4\!(M\!\!+\!\!i, \!q, \!M\!\!+\!\!k, \!l), \\
						\!\!\!\!\!\!(i,\!q,\!k,\!l) \!\in\! \{1,\!2,\cdots,\!M\}^4,
				\end{array}} \\
				0, & \!\!\!\text{otherwise},
			\end{array}
			\right. \nonumber
			\eeq
			where $a_{iqkl}=\sum\limits_{r=1}^{R}\sum\limits_{(i',k')\in \Pi_2(i,k)\atop(q',l')\in \Pi_2(q,l)} \!\!\!\!\!\!\!\!\alpha_r\bR_r(i',q')\bR_r^*(l',k')$, and ensures that $g(\bx)=\calG\big( \binom{ \bx}{ \bx^*}, \binom{ \bx}{ \bx^*}, \binom{ \bx}{ \bx^*}, \binom{ \bx}{ \bx^*}\big)$.
			\item\label{coro1_ii} the multi-linear polynomial function associated to $g(\bx)$, derived from the super-symmetry of $\calG$, can be expressed as $g_{_\text{ML}}(\bx^1, \bx^2, \bx^3, \bx^4) = \frac{1}{24}\! \sum\limits_{r=1}^{R} \sum\limits_{(p_1, p_2, p_3, p_4) \atop \in \Pi_4 (1, 2, 3, 4)}\!\!\!\!\!\!\!\!\alpha_r(\bx^{p_1})^\ctrans\bR_r\bx^{p_2}$ $ (\bx^{p_3})^\ctrans\bR_r^\ctrans\bx^{p_4}$, where $g_{_\text{ML}}(\bx^1, \bx^2, \bx^3, \bx^4)\!\!=\!\!g(\bx)$, if $\bx^1\!\!=\!\!\bx^2\!\!=\!\!\bx^3\!\!=\!\!\bx^4\!\!=\!\!\bx$.
		\end{enumerate}
	\end{corollary}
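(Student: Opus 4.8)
The plan is to establish (a) by expanding $g(\bx)$ explicitly as a sum of degree-four monomials in the entries of the stacked vector $\bw=\binom{\bx}{\bx^*}\in\bbC^{2M}$ and reading off the (super-symmetrized) coefficients, and then to deduce (b) from (a) by polarization. For (a), I would first use $\bx^\ctrans\bR_r\bx=\sum_{i,q=1}^{M}\overline{\bx(i)}\,\bR_r(i,q)\,\bx(q)$ together with $\bR_r^\ctrans(k,l)=\overline{\bR_r(l,k)}$ to obtain
\[
g(\bx)=\sum_{r=1}^{R}\alpha_r\sum_{i,q,k,l=1}^{M}\bR_r(i,q)\,\overline{\bR_r(l,k)}\;\overline{\bx(i)}\,\bx(q)\,\overline{\bx(k)}\,\bx(l).
\]
Since $\bw(j)=\bx(j)$ for $1\le j\le M$ and $\bw(j)=\overline{\bx(j-M)}$ for $M<j\le 2M$, each monomial equals $\bw(M\!+\!i)\,\bw(q)\,\bw(M\!+\!k)\,\bw(l)$, i.e.\ it is attached to a $4$-tuple of indices with exactly two entries in $\{M\!+\!1,\dots,2M\}$ and two in $\{1,\dots,M\}$.

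Next, because a product of scalars is invariant under reordering its factors, the terms of the double sum that feed the monomial indexed by the unordered set $\{M\!+\!i,q,M\!+\!k,l\}$ are precisely those obtained by independently swapping the conjugated labels $i\leftrightarrow k$ and the plain labels $q\leftrightarrow l$, and their total is exactly $a_{iqkl}$. I would therefore render $\calG$ super-symmetric by spreading $a_{iqkl}$ uniformly over the $\mathrm{Card}(\Pi_4(M\!+\!i,q,M\!+\!k,l))$ distinct orderings of that tuple and zeroing every remaining entry; this is the claimed $\calG_\pi$. It then remains to check the three properties: super-symmetry holds by construction; grouping $\calG(\bw,\bw,\bw,\bw)=\sum_\pi\calG_\pi\prod_m\bw(\pi_m)$ by multiset, all $\mathrm{Card}$ orderings give the same monomial, so the coefficients re-sum to $a_{iqkl}$ and reproduce the displayed expansion, giving the contraction identity $g(\bx)=\calG(\bw,\bw,\bw,\bw)$; and flipping every index by $\pm M$ interchanges the $\bx$ and $\overline{\bx}$ roles of the four slots, sending $(M\!+\!i,q,M\!+\!k,l)$ to an ordering of $(M\!+\!q,i,M\!+\!l,k)$, where a short reindexing of the defining double sum of $a$ — using $\alpha_r\in\bbR$ and the pairing of $\bR_r$ with $\bR_r^\ctrans$ — shows the coefficient becomes $\overline{a_{iqkl}}$ (the repetition pattern, hence $\mathrm{Card}$, being unchanged), which is property (ii).

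For (b), super-symmetry of $\calG$ makes $(\bw^1,\dots,\bw^4)\mapsto\calG(\bw^1,\bw^2,\bw^3,\bw^4)$ a symmetric multilinear form whose diagonal along stacked-conjugate vectors is $g$, so I would invoke polarization: $\calG(\bw^1,\dots,\bw^4)=\tfrac{1}{24}\sum_{\emptyset\ne S\subseteq\{1,2,3,4\}}(-1)^{4-|S|}\calG(\bw_S,\bw_S,\bw_S,\bw_S)$ with $\bw_S=\sum_{j\in S}\bw^j=\binom{\sum_{j\in S}\bx^j}{(\sum_{j\in S}\bx^j)^*}$, whence $\calG(\bw_S,\bw_S,\bw_S,\bw_S)=g(\sum_{j\in S}\bx^j)$ by the contraction identity just proved. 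Expanding $g(\sum_{j\in S}\bx^j)=\sum_r\alpha_r\sum_{j_1,j_2,j_3,j_4\in S}(\bx^{j_1})^\ctrans\bR_r\bx^{j_2}(\bx^{j_3})^\ctrans\bR_r^\ctrans\bx^{j_4}$ and applying inclusion--exclusion over $S$, only the index tuples that exhaust $\{1,2,3,4\}$ survive, each with total weight one; this yields $g_{_\text{ML}}(\bx^1,\dots,\bx^4)=\tfrac{1}{24}\sum_r\sum_{(p_1,p_2,p_3,p_4)\in\Pi_4(1,2,3,4)}\alpha_r(\bx^{p_1})^\ctrans\bR_r\bx^{p_2}(\bx^{p_3})^\ctrans\bR_r^\ctrans\bx^{p_4}$, and setting $\bx^1=\cdots=\bx^4=\bx$ recovers $g(\bx)$.

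The main obstacle is the bookkeeping with repeated indices in (a): when $i=k$ or $q=l$ the cardinality $\mathrm{Card}(\Pi_4(\cdot))$ drops below $24$ and some of the four terms defining $a_{iqkl}$ coincide, so one must verify case by case that the uniform redistribution of $a_{iqkl}$ over the \emph{distinct} orderings still reconstructs the correct monomial coefficient, and that $\mathrm{Card}$ is preserved under the index flip used for property (ii); everything else reduces to routine relabelling of summation indices.
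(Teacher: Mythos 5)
Your proposal is correct, but note that the paper never proves Corollary~\ref{coro1} itself: it is imported as a known result from \cite{aubry2013ambiguity,chen2022generalized}, so your argument is necessarily a self-contained substitute rather than a retracing of an in-text proof. For part (a) you follow the same spirit as the cited construction: expand $g(\bx)$ into degree-four monomials in the stacked vector $\binom{\bx}{\bx^*}$, identify the total coefficient of the multiset $\{M\!+\!i,q,M\!+\!k,l\}$ with $a_{iqkl}$ (the sum over the swaps $i\leftrightarrow k$ and $q\leftrightarrow l$), and super-symmetrize by spreading it uniformly over the $\mathrm{Card}(\Pi_4(M\!+\!i,q,M\!+\!k,l))$ distinct orderings; the degenerate cases $i=k$ and/or $q=l$ that you flag as the main obstacle do close under the set semantics of $\Pi_2$ and $\Pi_4$ that the $\mathrm{Card}$ normalization presupposes (e.g.\ for $i=k$, $q\neq l$ the monomial is produced by exactly two tuples of the raw expansion and $a_{iqil}$ has exactly two summands), and your reindexing argument for the conjugation property is sound, so this is routine verification rather than a gap. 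Where you genuinely depart from the source is part (b): instead of reading the multilinear form off the explicit tensor entries, you use the polarization identity for the symmetric form $\calG(\cdot,\cdot,\cdot,\cdot)$ together with the fact that stacked-conjugate vectors are closed under addition, so every diagonal value needed is $g\big(\sum_{j\in S}\bx^j\big)$, and the inclusion--exclusion cancellation $(1-1)^{4-|J|}=0$ leaves exactly the $24$ permutation terms with weight $1/24$. This buys a clean, entry-free derivation of the $\tfrac{1}{24}\sum_{\Pi_4}$ formula (and automatically the diagonal property), at the harmless price of only describing $\calG$ on arguments of stacked-conjugate type, which is all the paper ever uses (Appendix~\ref{proof_tensor} and the MBI updates). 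One presentational point: state explicitly that $g_{_\text{ML}}(\bx^1,\ldots,\bx^4)$ in the corollary denotes $\calG$ evaluated at $\binom{\bx^p}{\bx^{p*}}$, $p=1,\ldots,4$, since that identification is what your polarization step actually computes.
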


	Now, observe that $\widetilde f(\bc)$ is the superposition of multiple real-valued complex quartic and quadratic functions. To proceed further, let us focus on the class of real-valued inhomogeneous complex quartic functions
	\beq\label{fI}
	f_{\rmI}(\bc) =\alpha_{0}\bc^\ctrans \bD\bc+\sum\nolimits_{j=1}^{J}\alpha_j\bc^\ctrans\bA_j\bc\bc^\ctrans\bB_j^\ctrans\bc,
	\eeq
	where $\bD \!\in\! \bbH^{M} $, $\alpha_0, \alpha_{j}\!\!\in\!\bbR$, and either $\bA_j, \bB_j \!\in\! \bbH^{M}$ or $\bA_j\!\!=\!\! \bB_j\!\!\in\!\!\bbC^{M\times M},j\!\!=\!\!1,2,\cdots,J$. Specifically, $\widetilde f(\bc)$ boils down to \eqref{fI} considering $J=3$, and taking $\alpha_0\!=\!1, \alpha_1\!=\!\mu_1, \alpha_2\!=\!1\!-\!\beta, \alpha_3\!=\!-1\!+\!\beta$, $\bD\!=\!\mu_2\bI_M\!+\!\beta\bM_0$, $\bA_1\!=\!\bB_1\!=\!\bI_M$, $\bA_2\!=\!\bM_0,\bB_2\!=\!\bM_2$, and $\bA_3\!=\!\bB_3\!=\!\bM_1$. 
	
	Furthermore, denoting by $\widetilde \bc=[\bc^\trans, 1]^\trans \in \bbC^{M+1}$, $f_{\rmI}(\bc)$ can be homogenized as a real-valued homogeneous complex quartic function with respect to $\widetilde \bc$, namely,  
	\beq
	\begin{aligned}\label{f_Hc}
		f_{_\rmH}(\widetilde \bc) =\sum\nolimits_{j=0}^{J}\alpha_j\widetilde \bc^\ctrans\widetilde\bA_j\widetilde\bc\widetilde\bc^\ctrans\widetilde\bB_j^\ctrans\widetilde\bc,
	\end{aligned}
	\eeq
	where $\widetilde \bA_0\!=\!\left[\!\! {\begin{array}{*{20}{c}}
			\disps {\bD}&\bzero\\
			\bzero&0 \!\!\end{array}} \!\!\right]$, $\widetilde \bB_0\!=\!\left[ {\begin{array}{*{20}{c}}
			\bzero& \bzero\\
			\bzero & \disps 1  \end{array}} \right]$, $\widetilde \bA_j\!=\!\left[\!\! {\begin{array}{*{20}{c}}
			\disps \bA_j\!&\!\bzero\\
			\bzero\!&\!0 \!\!\end{array}} \!\!\right]$, and $\widetilde \bB_j\!=\!\left[\!\! {\begin{array}{*{20}{c}}
			\disps \bB_j& \bzero\\
			\bzero & 0  \end{array}} \!\!\right],j=1,2,\cdots,J$.
	With reference to $\widetilde f(\bc)$, $\widetilde \bA_{j_1}$ and $\widetilde \bB_{j_1}$ are Hermitian for $j_1=0,2$ because $\bM_0$ and $\bM_2$ are Hermitian, and $\widetilde \bA_{j_2}$ is equal to $\widetilde \bB_{j_2}$ for $j_2=1,3$. Based on this, $f_{_{\rmH}}(\widetilde \bc)$ can be equivalently expressed as the following conjugate homogeneous quartic function
	\begin{align}\label{f_CH}
		f_{_\text {CH}}(\widetilde \bc) =&\frac{1}{2}\sum\nolimits_{j=0}^{J}\alpha_j\Big[\widetilde \bc^\ctrans(\widetilde\bA_j+\widetilde\bB_j)\widetilde\bc\widetilde\bc^\ctrans(\widetilde\bA_j+\widetilde\bB_j)^\ctrans\widetilde\bc\nonumber\\
		&\quad\quad\quad\quad-\!\widetilde \bc^\ctrans\widetilde\bA_j\widetilde\bc\widetilde\bc^\ctrans\widetilde\bA_j^\ctrans\widetilde\bc\!-\!\widetilde \bc^\ctrans\widetilde\bB_j\widetilde\bc\widetilde\bc^\ctrans\widetilde\bB_j^\ctrans\widetilde\bc\Big].\!\!
	\end{align}
	
	Consequently, according to point \eqref{coro1_i} of Corollary \ref{coro1}, the fourth order conjugate
	super-symmetric tensor associated with $f_{_\text {CH}}(\widetilde \bc)$ (i.e., $\widetilde f(\bc)$) is given by 
	\beq\label{Fpi}
	\!\!\!\!\!\!\!\!\!\!\calF_\pi \!\!=\!\! \left\{\!\!\begin{array}{lll}
		\!\!\!\displaystyle\frac{b_{iqkl}}{\text{Card}(\Pi_4\!(M\!\!+\!\!1\!\!+\!\!i, \!q, \!M\!\!+\!\!1\!\!+\!\!k, \!l))}, & {\begin{array}{l}
				\!\!\!\!\!\!\!\!{\text {if}}\;\pi \!\in\! \Pi_4\!(M\!\!+\!\!1\!\!+\!\!i, \!q, \!M\!\!+\!\!1\!\!+\!\!k, \!l), \\
				\!\!\!\!\!\!\!\!(i,\!q,\!k,\!l) \!\in\! \{1,\!2,\!\cdots,\!M\!\!+\!\!1\}^4,
		\end{array}} \\
		0, & \!\!\!\!\text{otherwise},
	\end{array}
	\right. 
	\eeq
	where, owing to special structure of the involved matrices, $b_{iqkl} =\disps\frac{1}{2}\sum\limits_{j=0}^{J}\sum\limits_{(i',k')\in \Pi_2(i,k)\atop(q',l')\in \Pi_2(q,l)} \!\!\!\!\!\!\!\!\alpha_j\big(\widetilde\bA_j(i',q')\widetilde\bB_j^*(l',k') +\widetilde\bB_j(i',q')\widetilde\bA_j^*(l',k')\big)$, ensuring that $f_{_\text {CH}}(\widetilde \bc)=\widetilde f(\bc)=\calF\big(\binom{\widetilde \bc}{\widetilde \bc^*}, \binom{\widetilde \bc}{\widetilde \bc^*}, \binom{\widetilde \bc}{\widetilde \bc^*}, \binom{\widetilde \bc}{\widetilde \bc^*}\big)$. Moreover, based on point \eqref{coro1_ii} of Corollary \ref{coro1}, the multi-linear function corresponding to $f_{_\text {CH}}(\widetilde \bc)$ can be expressed as 
	\begin{align}
		\!\!\!\!f_{_\text{ML}}&(\widetilde\bc^1, \widetilde\bc^2, \widetilde\bc^3, \widetilde\bc^4)\nonumber\\
		=& \frac{1}{48} \sum\nolimits_{j=0}^{J} \sum\limits_{(p_1, p_2, p_3, p_4) \atop \in \Pi_4 (1, 2, 3, 4)}\!\!\!\!\!\!\!\! \alpha_j\Big((\widetilde\bc^{p_1})^\ctrans\widetilde \bA_j\widetilde\bc^{p_2}(\widetilde\bc^{p_3})^\ctrans\widetilde \bB_j^\ctrans\widetilde\bc^{p_4}\nonumber\\[-0.8em]
		&\quad\quad\quad\quad\quad\quad\quad\quad +(\widetilde\bc^{p_1})^\ctrans\widetilde \bB_j\widetilde\bc^{p_2}(\widetilde\bc^{p_3})^\ctrans\widetilde \bA_j^\ctrans\widetilde\bc^{p_4}\Big)\nonumber\\[-0.4em]
		=&\frac{1}{24} \sum\nolimits_{j=0}^{J} \sum\limits_{(p_1, p_2, p_3, p_4) \atop \in \Pi_4 (1, 2, 3, 4)}\!\!\!\!\!\!\!\! \alpha_j\Big((\widetilde\bc^{p_1})^\ctrans\widetilde \bA_j\widetilde\bc^{p_2}(\widetilde\bc^{p_3})^\ctrans\widetilde \bB_j^\ctrans\widetilde\bc^{p_4}\Big),
	\end{align}
	where $\widetilde \bc^p=[(\bc^p)^\trans, 1]^\trans$, and the second equality holds true since
	\begin{itemize}
		\item if $\widetilde \bA_j=\widetilde \bB_j$, then $(\widetilde\bc^{p_1})^\ctrans\widetilde \bA_j\widetilde\bc^{p_2}(\widetilde\bc^{p_3})^\ctrans\widetilde \bB_j^\ctrans\widetilde\bc^{p_4}=(\widetilde\bc^{p_1})^\ctrans\widetilde \bB_j\widetilde\bc^{p_2}(\widetilde\bc^{p_3})^\ctrans\widetilde \bA_j^\ctrans\widetilde\bc^{p_4}$;
		\item if $\widetilde \bA_j$ and $\widetilde \bB_j$ are Hermitian, then 
		\begin{align}
			&\!\!\!\!\!\!\!\!\!\!\!\!\sum\limits_{(p_1, p_2, p_3, p_4) \atop \in \Pi_4 (1, 2, 3, 4)}\!\!\!\!\!\!\!\!(\widetilde\bc^{p_1})^\ctrans\widetilde \bA_j\widetilde\bc^{p_2}(\widetilde\bc^{p_3})^\ctrans\widetilde \bB_j^\ctrans\widetilde\bc^{p_4}\nonumber\\[-1.2em]
			&\quad\quad\quad=\sum\limits_{(p_1, p_2, p_3, p_4) \atop \in \Pi_4 (1, 2, 3, 4)}\!\!\!\!\!\!\!\!(\widetilde\bc^{p_1})^\ctrans\widetilde \bA_j^\ctrans\widetilde\bc^{p_2}(\widetilde\bc^{p_3})^\ctrans\widetilde \bB_j\widetilde\bc^{p_4}\nonumber\\[-0.4em]
			&\quad\quad\quad=\sum\limits_{(p_1, p_2, p_3, p_4) \atop \in \Pi_4 (1, 2, 3, 4)}\!\!\!\!\!\!\!\!(\widetilde\bc^{p_3})^\ctrans\widetilde \bA_j^\ctrans\widetilde\bc^{p_4}(\widetilde\bc^{p_1})^\ctrans\widetilde\bB_j \widetilde\bc^{p_2},
		\end{align}
		where the last equality exploits the fact that the summation over all permutations of $(p_1, p_2, p_3, p_4)$ in $\Pi_4 (1, 2, 3, 4)$ makes the sums invariant to the order of indices $p_1, p_2, p_3, p_4$.
	\end{itemize}
	
	Finally, recalling the definition of $\widetilde \bA_j,\widetilde \bB_j,j=0,1,\cdots,J$ as reported below equation \eqref{f_Hc}, the multi-linear function $f_{_\text{ML}}(\widetilde\bc^1, \widetilde\bc^2, \widetilde\bc^3, \widetilde\bc^4)$ can be further simplified as
	\begin{align}
		\!\!\!\widetilde f_{_\text{ML}}&(\bc^1, \bc^2, \bc^3, \bc^4) =\frac{1}{12} \sum\limits_{(p_1, p_2) \atop \in \Pi_2 (1, 2, 3, 4)}\!\!\!\!\!\!\!\! \alpha_0(\bc^{p_1})^\ctrans \bD\bc^{p_2}\nonumber\\[-0.5em]
		&+ \frac{1}{24} \sum\nolimits_{j=1}^{J} \sum\limits_{(p_1, p_2, p_3, p_4) \atop \in \Pi_4 (1, 2, 3, 4)}\!\!\!\!\!\!\!\! \alpha_j(\bc^{p_1})^\ctrans \bA_j\bc^{p_2}(\bc^{p_3})^\ctrans \bB_j^\ctrans\bc^{p_4}.
	\end{align}
	
	Finally, specializing the parameters as reported below equation \eqref{fI}, $\widetilde f_{_\text{ML}}(\bc^1, \bc^2, \bc^3, \bc^4)$ can be expressed as in \eqref{f_cccc}.
	
	\subsection{Derivation of Problem \eqref{P_cp}}\label{proofP_inMBI}
	Let us introduce the following lemmas that will be used to obtain the desired expressions.
	\begin{lemma}\label{lemma1}
		{\rm Let matrix $\bD \in \bbH^{M} $, the blocks $\bx^p \in \bbC^M,p=1,2,3,4$, then the restriction of the function $g_1(\bx^1,\bx^2, \bx^3, \bx^4) = \sum\limits_{(p_1, p_2) \atop \in \Pi_2 (1, 2, 3, 4)}(\bx^{p_1})^\ctrans\bD\bx^{p_2}$ to $\bx^p$ with the other blocks fixed, can be recast as $g_1(\bx^p; \{\bx^{\tilde p}\}_{\tilde p \in \calS_{p}})\!= \!\!\!\!\sum\limits_{(p_1, p_2) \atop \in \Pi_2 (\calS_{p})}\!\!\!(\bx^{p_1}\!)^\ctrans\bD\bx^{p_2}\!+\! 2\!\!\!\sum\limits_{p_1 \in \calS_{p}}\!\!\Re\!\left\{\!(\bx^{p_1})^\ctrans\bD\bx^{p}\!\right\}$, where $\calS_p=\{1,2,3,4\}\setminus \{p\}$.}
	\end{lemma}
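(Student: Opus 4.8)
The plan is to expand the sum defining $g_1$ over the twelve ordered pairs in $\Pi_2(1,2,3,4)$ and sort them according to whether the distinguished block index $p$ occurs. First I would note that any $(p_1,p_2)\in\Pi_2(1,2,3,4)$ belongs to exactly one of three mutually exclusive classes, since the permutations carry no repetition and hence $p_1=p_2=p$ is impossible: (i) neither coordinate equals $p$, i.e. $(p_1,p_2)\in\Pi_2(\calS_p)$; (ii) $p_1=p$ and $p_2\in\calS_p$; or (iii) $p_2=p$ and $p_1\in\calS_p$. Freezing the blocks $\{\bx^{\tilde p}\}_{\tilde p\in\calS_p}$ and grouping the terms of $g_1(\bx^1,\bx^2,\bx^3,\bx^4)$ along this partition gives
\[
g_1(\bx^p;\{\bx^{\tilde p}\}_{\tilde p\in\calS_p})=\!\!\!\sum_{(p_1,p_2)\in\Pi_2(\calS_p)}\!\!\!(\bx^{p_1})^\ctrans\bD\bx^{p_2}+\sum_{p_1\in\calS_p}(\bx^{p})^\ctrans\bD\bx^{p_1}+\sum_{p_1\in\calS_p}(\bx^{p_1})^\ctrans\bD\bx^{p}.
\]

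Next I would invoke the Hermitian symmetry $\bD=\bD^\ctrans$: for each $p_1\in\calS_p$ one has $(\bx^{p})^\ctrans\bD\bx^{p_1}=\overline{(\bx^{p_1})^\ctrans\bD^\ctrans\bx^{p}}=\overline{(\bx^{p_1})^\ctrans\bD\bx^{p}}$, so the second and third sums above are term-by-term complex conjugates of one another. Adding them therefore produces $2\sum_{p_1\in\calS_p}\Re\{(\bx^{p_1})^\ctrans\bD\bx^{p}\}$, which together with the first sum is precisely the asserted expression for the restriction $g_1(\bx^p;\{\bx^{\tilde p}\}_{\tilde p\in\calS_p})$.

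There is no substantive obstacle in this lemma: the whole argument is a bookkeeping exercise. The only points requiring a little care are verifying that the three-way index partition is exhaustive and disjoint (which rests exactly on the no-repetition clause in the definition of $\Pi_2$), and making sure the pairing of the mixed terms uses Hermitianness of $\bD$ rather than mere (transpose) symmetry, as only the former turns $(\bx^{p})^\ctrans\bD\bx^{p_1}$ into the conjugate of $(\bx^{p_1})^\ctrans\bD\bx^{p}$.
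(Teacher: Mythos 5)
Your proposal is correct and follows essentially the same route as the paper: partition the ordered pairs of $\Pi_2(1,2,3,4)$ into those avoiding the index $p$ and those containing it, then use the Hermitian property of $\bD$ to pair $(\bx^{p})^\ctrans\bD\bx^{p_1}$ with its conjugate $(\bx^{p_1})^\ctrans\bD\bx^{p}$, yielding the $2\Re\{\cdot\}$ terms. The only (minor) difference is that you make the exhaustiveness and disjointness of the three index classes explicit, which the paper leaves implicit.
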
	
	\begin{proof}
		Depending on $p_1$ and $p_2$, $g_1(\bx^1,\bx^2, \bx^3, \bx^4)$ can be expressed as 
		\beq
		\begin{aligned}
			&\sum\limits_{(p_1, p_2) \atop
				\in \Pi_2 (\calS_{p})}\!\!\!(\bx^{p_1})^\ctrans\bD\bx^{p_2}\!+\!\!\!\sum\limits_{p_1 \in \calS_{p}}\!\!\big((\bx^{p})^\ctrans\bD\bx^{p_1}\!+\!(\bx^{p_1}\!)^\ctrans\bD\bx^{p}\big)\!\nonumber\\[-0.5em]
			&\quad\quad =\sum\limits_{(p_1, p_2) \atop
				\in \Pi_2 (\calS_{p})}\!\!\!(\bx^{p_1}\!)^\ctrans\bD\bx^{p_2}\!+\! 2\!\!\!\sum\limits_{p_1 \in \calS_{p}}\!\!\Re\!\left\{\!(\bx^{p_1})^\ctrans\bD\bx^{p}\!\right\}\!.
		\end{aligned}
		\eeq
	\end{proof}
	\begin{lemma}\label{lemma2}
		{\rm For either $\bA , \bB \in \bbH^{M}$ or $\bA= \bB\in\bbC^{M\times M}$, $g_2(\bx^1,\bx^2, \bx^3, \bx^4) =  \sum\limits_{(p_1, p_2, p_3, p_4) \atop \in \Pi_4 (1, 2, 3, 4)}(\bx^{p_1})^\ctrans\bA\bx^{p_2}(\bx^{p_3})^\ctrans\bB^\ctrans\bx^{p_4}$ with $\bx^p \in \bbC^M$, $p=1,2,3,4$, the restriction of $g_2(\bx^1,\bx^2, \bx^3, \bx^4)$ to $\bx^p$ with the other blocks fixed, can be recast as $g_2(\bx^p; \{\bx^{\tilde p}\}_{\tilde p \in \calS_{p}})  = 2\!\!\!\!\!\!\sum\limits_{(p_1, p_2, p_3) \atop \in \Pi_3 (\calS_p)}\!\!\!\!\!\!\Re\!\left\{\!\big(\bA\bx^{p_1}(\bx^{p_2})^\ctrans\!\bB^\ctrans\bx^{p_3}\!+\!\bB^\ctrans\bx^{p_1}(\bx^{p_2}\!)^\ctrans\!\bA\bx^{p_3}\big)^\ctrans\!\bx^p\!\right\}$, where $\calS_p=\{1,2,3,4\}\setminus \{p\}$.}
	\end{lemma}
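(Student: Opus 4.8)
The plan is to turn the identity into a bookkeeping exercise over the symmetric group. First I would split the $24$ summands of $g_2(\bx^1,\bx^2,\bx^3,\bx^4)$ according to which of the four ``slots'' holds the distinguished block $\bx^p$: the two $\bA$-slots, $(\bx^{p_1})^\ctrans\bA\bx^{p_2}$, and the two $\bB^\ctrans$-slots, $(\bx^{p_3})^\ctrans\bB^\ctrans\bx^{p_4}$. Since fixing the index carried by one slot leaves $3!=6$ orderings of $\calS_p=\{1,2,3,4\}\setminus\{p\}$ for the remaining three slots, this produces exactly four partial sums $S_1,S_2,S_3,S_4$, each running over $(q_1,q_2,q_3)\in\Pi_3(\calS_p)$, in which $\bx^p$ occurs respectively as the left factor of an $\bA$-slot, the right factor of an $\bA$-slot, the left factor of a $\bB^\ctrans$-slot, and the right factor of a $\bB^\ctrans$-slot; moreover $g_2(\bx^p;\{\bx^{\tilde p}\}_{\tilde p\in\calS_p})=S_1+S_2+S_3+S_4$, because every term of $g_2$ involves all four blocks, so no slot-free (constant) term survives the restriction (this is in contrast with the quadratic Lemma~\ref{lemma1}, where such a constant does appear).

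Next I would show that, after a relabeling of the free indices, the conjugate of each partial sum in which $\bx^p$ sits on the left coincides with one of the partial sums in which $\bx^p$ sits on the right. The only algebraic moves needed are the scalar identities $\overline{(\bx^{a})^\ctrans\bM\bx^{b}}=(\bx^{b})^\ctrans\bM^\ctrans\bx^{a}$ and $(\bM\bx^{a})^\ctrans\bx^p=(\bx^{a})^\ctrans\bM^\ctrans\bx^p$, together with the observation that every permutation of a triple $(q_1,q_2,q_3)$ is again an element of $\Pi_3(\calS_p)$, so the relabelings are bijections of the index set onto itself. At this stage the two admissible hypotheses enter: if $\bA$ and $\bB$ are Hermitian then $\bM^\ctrans=\bM$ for $\bM\in\{\bA,\bB\}$, which lets a conjugated $\bA$-factor be read again as an $\bA$-slot factor and a conjugated $\bB^\ctrans$-factor as a $\bB^\ctrans$-slot factor; if instead $\bA=\bB$, then the conjugate of an $\bA$-slot factor is already a $\bB^\ctrans$-slot factor and conversely. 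Either way, pairing a left-$\bx^p$ term with its conjugate through $z+\overline z=2\Re\{z\}$ collapses $S_1+S_2+S_3+S_4$ into $2\sum_{(q_1,q_2,q_3)\in\Pi_3(\calS_p)}\Re\{z_q\}$ with $z_q=\big(\bA\bx^{q_1}(\bx^{q_2})^\ctrans\bB^\ctrans\bx^{q_3}+\bB^\ctrans\bx^{q_1}(\bx^{q_2})^\ctrans\bA\bx^{q_3}\big)^\ctrans\bx^p$ (the bracketed vector being $\big((\bx^{q_2})^\ctrans\bB^\ctrans\bx^{q_3}\big)\bA\bx^{q_1}+\big((\bx^{q_2})^\ctrans\bA\bx^{q_3}\big)\bB^\ctrans\bx^{q_1}$, since the inner factors are scalars), which is exactly the claimed expression.

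I expect the main obstacle to be organizational rather than conceptual: one must keep careful track of which of $S_1,\dots,S_4$ is matched with the conjugate of which — the correct pairing is $\{S_1,S_2\}$, $\{S_3,S_4\}$ when $\bA,\bB$ are Hermitian, but $\{S_1,S_4\}$, $\{S_2,S_3\}$ when $\bA=\bB$ — and must check in each instance that the accompanying index permutation is a genuine bijection of $\Pi_3(\calS_p)$, so that each partial sum is used exactly once and the count closes at $24$ terms. This is the same device already used, in the one-index quadratic setting, for Lemma~\ref{lemma1}; it merely has to be carried out over three free indices here, and no analytic estimate is involved.
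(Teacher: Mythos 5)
Your proposal is correct and mirrors the paper's own argument: the paper likewise splits the $24$ terms into four partial sums according to which slot carries $\bx^p$ (its displays \eqref{g2_xxxx_1}--\eqref{g2_xxxx_4}) and then pairs them via conjugation and index relabelings within $\Pi_3(\calS_p)$, treating the Hermitian and $\bA=\bB$ cases separately, exactly as you do. Your explicit identification of the pairings ($\{S_1,S_2\},\{S_3,S_4\}$ versus $\{S_1,S_4\},\{S_2,S_3\}$) is a correct, slightly more detailed rendering of the paper's "exchange $p_2,p_3$ in the second sum and $p_1,p_2$ in the fourth" (respectively "$p_1,p_3$ in both") step.
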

	\begin{proof}
		$g_2(\bx^1,\bx^2, \bx^3, \bx^4)$ can be expressed as 
		\begin{subequations}\label{g2_xxxx1}
			\begin{align}
				g_2(\bx^1,\bx^2, \bx^3, \bx^4)=&(\bx^{p})^\ctrans\!\!\!\!\!\!\sum\limits_{(p_1, p_2, p_3) \atop \in \Pi_3 (\calS_p)}\!\!\!\!\!\!\bA\bx^{p_1} (\bx^{p_2}\!)^\ctrans\!\bB^\ctrans\bx^{p_3}\label{g2_xxxx_1} \\
				&+  \! \!\!\!\!\!\sum\limits_{(p_1, p_2, p_3) \atop \in \Pi_3 (\calS_p)}\!\!\!\!\!\! (\bx^{p_2}\!)^\ctrans\!\bB^\ctrans\bx^{p_3}(\bx^{p_1}\!)^\ctrans\!\bA\bx^{p}\label{g2_xxxx_2}\\
				&  +\!(\bx^{p})^\ctrans\!\!\!\!\!\!\sum\limits_{(p_1, p_2, p_3) \atop \in \Pi_3 (\calS_p)}\!\!\!\!\!\!\bB^\ctrans\bx^{p_3}  (\bx^{p_1}\!)^\ctrans\!\bA\bx^{p_2}\label{g2_xxxx_3}\\
				&\!+  \! \!\!\!\!\!\!\!\sum\limits_{(p_1, p_2, p_3) \atop \in \Pi_3 (\calS_p)}\!\!\!\!\!\! (\bx^{p_1}\!)^\ctrans\!\bA\bx^{p_2}(\bx^{p_3}\!)^\ctrans\!\bB^\ctrans\bx^{p}\!.\label{g2_xxxx_4}
			\end{align}
		\end{subequations}
		Before proceeding further, let us observe that, $\bx^{p_1}$, $\bx^{p_2}$, $\bx^{p_3}$ with $p_1,$ $p_2$, $p_3\in \calS_p$ can be arbitrary interchanged due to the symmetry of $g_2(\bx^1, \bx^2, \bx^3, \bx^4)$, as explicitly corroborated also by the summations over all permutations $\Pi_3 (\calS_p)$. Now, two cases for $\bA$ and $\bB$ are discussed to further simplify \eqref{g2_xxxx1}. (i) If $\bA , \bB \in \bbH^{M}$, exchanging the labels $p_2$ and $p_3$ in \eqref{g2_xxxx_2}, as well as $p_1$ and $p_2$ in  \eqref{g2_xxxx_4}, results in \eqref{g2_xxxx_1} being conjugate to \eqref{g2_xxxx_2}, and \eqref{g2_xxxx_3} being conjugate to \eqref{g2_xxxx_4}. Then summing over \eqref{g2_xxxx_1}-\eqref{g2_xxxx_4} gives $g_2(\bx^p; \{\bx^{\tilde p}\}_{\tilde p \in \calS_{p}})$. (ii) If $\bA= \bB\in\bbC^{M\times M}$, interchanging $p_1$ and $p_3$ both in \eqref{g2_xxxx_2} and \eqref{g2_xxxx_4} similarly leads to  $g_2(\bx^p; \{\bx^{\tilde p}\}_{\tilde p \in \calS_{p}})$.
	\end{proof}
	
	According to Lemmas \ref{lemma1} and \ref{lemma2}, and plugging in the actual involved parameters, $\widetilde f_{_\text{ML}}(\bc^p; \{\bc^{\tilde p}_{(n)}\}_{\tilde p \in \calS_p})$ can be expressed as 
	\beq
	\begin{aligned}
		\widetilde f_{_\text{ML}}(&\bc^p; \{\bc^{\tilde p}_{(n)}\}_{\tilde p \in \calS_p}) = \Re\big\{(\bd_{(n)}^p)^\ctrans\bc^p\big\}+d_{(n)}^p,\nonumber\\
	\end{aligned}
	\eeq
	where
	\begin{subequations}\label{dnp}
		\begin{align}
			\!\!\!\!\!\!&\bd_{(n)}^p= \frac{1}{12} \!\!\!\sum\limits_{(p_1, p_2, p_3) \atop \in \Pi_3 (\calS_p)}\!\!\!\!\Big\{2\mu_1 \bc_{(n)}^{p_1}(\bc_{(n)}^{p_2})^\ctrans\bc_{(n)}^{p_3}\nonumber \\[-0.3em]
			& \quad\!+ \! (1\!-\!\beta)\big(\bM_{\!0} \bc_{(n)}^{p_1}(\bc_{(n)}^{p_2}\!)^\ctrans\!\bM_{\!2} \bc_{(n)}^{p_3}\!+\!\! \bM_{\!2}\bc_{(n)}^{p_1}(\bc_{(n)}^{p_2}\!)^\ctrans\!\bM_{\!0}\bc_{(n)}^{p_3}\big)\!\nonumber \\
			& \quad\!- \! (1\!-\!\beta)\big(\bM_{\!1}^\ctrans\bc_{(n)}^{p_1}\! (\bc_{(n)}^{p_2}\!)^\ctrans\!\bM_{\!1}\bc_{(n)}^{p_3}\!\!+\!\! \bM_{\!1} \bc_{(n)}^{p_1}\!(\bc_{(n)}^{p_2}\!)^\ctrans\!\bM_{\!1}^\ctrans \bc_{(n)}^{p_3}\big) \!\Big\}\nonumber \\
			& \quad\!+\! \frac{1}{6} \sum\limits_{p_1\in \calS_p} \big(\mu_2\bc_{(n)}^{p_1}+ \beta \bM_{\!0}\bc_{(n)}^{p_1}\big),\label{dd_p}\\
			&d_{(n)}^p = \disps \frac{1}{12} \!\!\!\!\sum\limits_{(p_1, p_2) \atop \in \Pi_2 (\calS_p )}\!\!\!\!\left(\mu_2(\bc_{(n)}^{p_1})^\ctrans\bc_{(n)}^{p_2} + \beta (\bc_{(n)}^{p_1})^\ctrans\bM_0\bc_{(n)}^{p_2}\right)\!.\!\!\label{d_p}
		\end{align}
	\end{subequations}
	\subsection{Proof of Lemma \ref{lemma3_1}}\label{prooflemma3_1}	
	Note that Problem \eqref{P_cp2} is a convex optimization problem with a non-empty and compact set, thus, it has at least one optimal solution. The main idea of the proof is to establish the existence of an optimal solution to the relaxed Problem \eqref{P_cp2} that satisfies $\|\bc^{p}\|^2\!=\!1$ in Problem \eqref{P_cp1}. To verify this, it is enough to show that for any feasible point $\breve \bc^p$ with $2\Re\{\bc_0^\ctrans\breve\bc^p\}\!=\!2\!-\!\zeta_1, 0\!\leq\!\zeta_1\!\leq\! \zeta$, of Problem \eqref{P_cp2}, it is always possible to find a point with unit energy and $2\Re\{\bc_0^\ctrans\bc^p\}=2-\zeta_1$ leading to an objective value greater than or equal to that of $\breve \bc^p$.	To this end, let us consider $\bc^p_i \!=\! \breve\bc^p \!+\! \sqrt{1\!-\!\|\breve\bc^p\|^2}\disps {(-1)^i\bc_{0}^{\perp}}/{\|\bc_{0}^{\perp}\|},i=1,2$, where $\bc_{0}^{\perp}$ is a non-zero orthogonal vector of $\bc_0$ and $\breve\bc^p$. It is clear that $\bc^p_i$ has unit energy and fulfills $2\Re\{\bc_0^\ctrans\bc_i^p\}=2-\zeta_1$. Besides, either for $i= 1$ or $i = 2$, the following inequalities can be valid $\widetilde f_{_\text{ML}}(\bc^p_i; \{\bc^{\tilde p}_{(n)}\}_{\tilde p \in \calS_p})- \widetilde f_{_\text{ML}}(\breve \bc^p; \{\bc^{\tilde p}_{(n)}\}_{\tilde p \in \calS_p})
	=\sqrt{1-\|\breve\bc^p\|^2}\disps {(-1)^i\Re\big\{( \bd_{(n)}^p)^\ctrans \bc_{0}^{\perp}\big\}}/{\|\bc_{0}^{\perp}\|}\geq 0$ depending on sign of $\Re\big\{( \bd_{(n)}^p)^\ctrans \bc_{0}^{\perp}\big\}$.
	\subsection{Closed-form Solution to Problem \eqref{P_cp2}}\label{proofsolutionP_cp}
	
	The KKT conditions for the convex Problem \eqref{P_cp2} are 
	\begin{subequations}\label{KKTs}
		\begin{eqnarray}
			1-{\left\| {\bc^p} \right\|^2} \ge 0,  &&\label{energy}\\
			{\lambda _1}( {1-{{\left\| {\bc}^p \right\|}^2}  } ) = 0,\label{optlam1}&&\\
			{\lambda _1} \ge 0,&&\label{conlam1}\\
			2\Re \big\{ {{\bc}_0^\ctrans{\bc^p}} \big\} -2+ \zeta \geq 0,&&\label{similar}\\
			{\lambda _2}( 2\Re \big\{ {{\bc}_0^\ctrans{\bc^p}} \big\} -2+ \zeta ) = 0,\label{optlam2}&&\\
			{\lambda _2} \ge 0,&&\label{conlam2}\\
			\bd_{(n)}^p - 2\lambda_1\bc^{p} + 2\lambda_2\bc_0 = \bzero_M,\label{opt1}&&
		\end{eqnarray}
	\end{subequations}
	where $\lambda_1$ and $\lambda_2$ are the Lagrange multipliers. According to Lemma \ref{lemma3_1}, the solution with unit energy, satisfying the KKT conditions in \eqref{KKTs} exists. In the following, four cases for the value of  $\bd^p_{(n)}$ are discussed to obtain the optimal solution. 
	\begin{enumerate}[(a)]
		\item \label{E_case1} If  $\bd_{(n)}^p\!=\!\bzero_M$, the optimal solution can be given by $\bc^{p}\!=\!\bc_0$ because it fulfills the KKT conditions with $\lambda_1\!=\!\lambda_2\!=\!0$; otherwise, consider case \eqref{E_case2}.
		\item \label{E_case2} If $\bd_{(n)}^p\!=-\varrho\bc_0, \varrho>0$, then any point $\bc^{p}={(2-\zeta)} \bc_0/{2} + \sqrt{1-{(2-\zeta)}^2/{4}}{\bc_{0}^{\perp}}/{\|\bc_0^{\perp}\|}$	is an optimal solution with unit norm, because it satisfies $2\Re\{\bc_0^\ctrans\bc^{p}\}=2\!-\!\zeta$ and also the KKT conditions with $\lambda_1=0, \lambda_2=\varrho/2>0$; otherwise, jump to case \eqref{E_case3}. 
		\item\label{E_case3} If $2\Re \big\{ {\bc}_0^\ctrans\bd_{(n)}^p/\|\bd_{(n)}^p\!\| \big\}\!-\!2\!+ \!\zeta \!\geq\! 0$, then $\bc^{p}\!=\!\bd_{(n)}^p/\|\bd_{(n)}^p\!\|$ is the desired optimal solution since it meets the KKT conditions with $\lambda_1\!\!\!>\!\!0, \lambda_2\!\!=\!\!0$; otherwise, consider case \eqref{E_case4}.
		\item\label{E_case4} Following the KKT conditions, the closed-form solution\footnote{In this case,  $\bd_{(n)}^p+2\lambda_2\bc_0\neq\bzero_M$ always holds, otherwise the solution is provided in case \eqref{E_case2}.} $\bc^{p} = \frac{\bd_{(n)}^p+2\lambda_2\bc_0}{\|\bd_{(n)}^p+2\lambda_2\bc_0\|}$ is the optimal solution to Problem \eqref{P_cp2} as long as a non-negative $\lambda_2$ can be found that enables
		\beq\label{similar1}
		2\Re \big\{ {\bc}_0^\ctrans\bc^p \big\}-2+ \zeta= 0.
		\eeq
		Now, let us prove that such $\lambda_2$ can be found. By substituting this $\bc^{p}$ into \eqref{similar1}, it follows that
		\beq \label{eq_lam2}
		2{\Re \big\{ {{\bc}_0^\ctrans(\bd_{(n)}^p+2\lambda_2\bc_0)} \big\}} =(2- \zeta)\|\bd_{(n)}^p+2\lambda_2\bc_0\|.
		\eeq
		Squaring both sides of the equation \eqref{eq_lam2} and using $\|\bc_0\|^2\!=\!1$ gives 
		\beq\label{21lambda2}
		a_{(n)}^p\lambda_2^2 + b_{(n)}^p\lambda_2 + c_{(n)}^p = 0,
		\eeq
		where $\!a_{(n)}^p=\!4[(2-\zeta)^2-4], b_{(n)}^p=\!4\Re\{\bc_0^\ctrans\bd_{(n)}^p\}[(2\!-\!\zeta)^2\!-\!4], c_{(n)}^p\!=\!(2\!-\!\zeta)^2\|\bd_{(n)}^p\|^2-4\big(\Re\{\bc_0^\ctrans\bd_{(n)}^p\}\big)^2$. 
		For\footnote{If $\zeta=0$, the unique feasible point $\bc_0$ is the optimal solution to Problem \eqref{P_cp2}, since the KKT conditions hold true for any $\lambda_2\geq 0$.} $0<\zeta\!\leq\! 2$, $a_{(n)}^p \!<\! 0$, and  $c_{(n)}^p \!>\! 0$ since $2\Re \big\{ {{\bc}_0^\ctrans\bd_{(n)}^p} \big\}\!<\! \|\bd_{(n)}^p\|(2\!-\!\zeta)$ (otherwise the optimal solution is provided by case \eqref{E_case3}). According to the quadratic formula ($a_{(n)}^p<0$), equation \eqref{21lambda2} has two real roots, and the positive one is 
		\beq\label{pos_lam2}
		\!\!\!\lambda_2\!=\!\disps\frac{1}{2}\left(-\! {b_{(n)}^p}/{a_{(n)}^p} \!+\! \sqrt{({b_{(n)}^p}/{a_{(n)}^p})^2 \!-\!4{c_{(n)}^p}/{a_{(n)}^p}}\right).
		\eeq
	\end{enumerate}
	
	\subsection{Proof of Proposition \ref{Propo2}}\label{proof_propo2}
	A corollary for the convexity analysis of a real-valued function is provided\cite{boyd2004convex}. 
	\begin{corollary}\label{corollary1}
		{\rm Given a real-valued function $ \widetilde g\!: \bbC^M\!\rightarrow\!\bbR$, $\bx_0\!\in\! \bbC^M$ and $\by_0\!\in\! \bbC^M$, define the restriction $\psi_{(\bx_0,\by_0)}\!:\bbR\rightarrow\bbR$ as $\psi_{(\bx_0,\by_0)}(t)\!=\!\widetilde g(\bx_0+t\by_0)$. Then, $\widetilde g$ is convex on $\bbC^M$ if and only if for any $\bx_0, \by_0 \in \bbC^M$, $\psi_{(\bx_0,\by_0)}(t)$ is convex on $\bbR$; $\psi''_{(\bx_0,\by_0)}(t)\geq 0$ for all $t\in \bbR$, assuming that the second order derivatives exist.} 
	\end{corollary}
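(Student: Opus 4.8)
The plan is to reduce the statement to the classical one-dimensional characterization of convexity by exploiting the real vector space structure of $\bbC^M$. The crucial preliminary observation is that convexity of a real-valued $\widetilde g$ on $\bbC^M$ is to be understood with respect to the identification $\bbC^M \cong \bbR^{2M}$ via real and imaginary parts: convex combinations $(1-\theta)\bx_1 + \theta \bx_2$ with $\theta \in [0,1]$, line segments, and the map $t \mapsto \bx_0 + t\by_0$ are all real-affine objects. Under this identification the restriction $\psi_{(\bx_0,\by_0)}(t) = \widetilde g(\bx_0 + t\by_0)$ is simply the composition of $\widetilde g$ with a real-affine map $\bbR \to \bbR^{2M}$.

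For the forward implication, I would argue that if $\widetilde g$ is convex then each restriction $\psi_{(\bx_0,\by_0)}$ is convex, since precomposition of a convex function with an affine map preserves convexity. For the reverse implication, I would fix arbitrary $\bx_1, \bx_2 \in \bbC^M$ and $\theta \in [0,1]$, set $\bx_0 = \bx_1$ and $\by_0 = \bx_2 - \bx_1$, and evaluate the (convex by hypothesis) restriction $\psi_{(\bx_0,\by_0)}$ at $t = 0, \theta, 1$; the scalar convexity inequality $\psi(\theta) \leq (1-\theta)\psi(0) + \theta\psi(1)$ then reads exactly as $\widetilde g((1-\theta)\bx_1 + \theta\bx_2) \leq (1-\theta)\widetilde g(\bx_1) + \theta\widetilde g(\bx_2)$, establishing convexity of $\widetilde g$.

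The final equivalence with the second-derivative condition follows from the fact that each $\psi_{(\bx_0,\by_0)}$ is a real function of a single real variable; assuming its second derivative exists, it is convex on $\bbR$ if and only if $\psi''_{(\bx_0,\by_0)}(t) \geq 0$ for all $t$, which is the standard scalar test. Chaining the three equivalences yields the claimed characterization.

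The step requiring the most care is not the one-dimensional convexity argument, which is routine, but the bookkeeping in the complex-to-real reduction, namely verifying that convexity of $\widetilde g$ on $\bbC^M$ genuinely corresponds to convexity in the $\bbR^{2M}$ sense, so that all lines and segments involved are real-affine. Once this identification is made precise, the chain of equivalences is immediate and the result can be cited directly from the real-variable theory in \cite{boyd2004convex}.
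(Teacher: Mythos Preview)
Your proposal is correct and is the standard textbook argument. Note, however, that the paper does not actually prove this corollary: it is stated in Appendix~\ref{proof_propo2} with the attribution ``A corollary for the convexity analysis of a real-valued function is provided~\cite{boyd2004convex}'' and then simply used. So there is no paper proof to compare against; your outline supplies precisely the justification the paper omits by citation, and it matches the argument one finds in the cited reference.
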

	
	Based on the properties of $\widetilde\bA_j$, $\widetilde \bB_j$ (same or Hermitian) for $j=1,\cdots,J$, let us first express $\widetilde g(\bx)\!=\!\mu_1(\bx^\ctrans\bx)^2\!+\!\mu_2(\bx^\ctrans\bx)\!+\!g_{_\rmI}(\bx)$ equivalently as
	\begin{align}
		\widetilde g(\bx) =&\mu_1(\bx^\ctrans\bx)^2\!+\!\mu_2(\bx^\ctrans\bx)+\alpha_0\bx^\ctrans\bD\bx\nonumber\\
		&+\frac{1}{2}\sum\nolimits_{j=1}^{J}\alpha_j\Big[ \bx^\ctrans(\bA_j+\bB_j)\bx\bx^\ctrans(\bA_j+\bB_j)^\ctrans\bx\nonumber\\
		&\quad\quad\quad\quad-\! \bx^\ctrans\bA_j\bx\bx^\ctrans\bA_j^\ctrans\bx\!-\! \bx^\ctrans\bB_j\bx\bx^\ctrans\bB_j^\ctrans\bx\Big].\!\!\!\!
	\end{align}
	
	According to Corollary \ref{corollary1}, to prove that $\widetilde g(\bx)$ is convex, it is necessary to show that $\psi_{(\bh,\by)}(t)\!=\!\widetilde g(\bh+t\by), \forall \bh, \by \in \bbC^M$ is convex. Leveraging the results for conjugate homogeneous quartic function in \cite[Theorem 3.3]{aubry2013ambiguity}, and denoting by $\bz=\bh+t\by, \forall \bh, \by \in \bbC^M$, the second order derivatives of $\psi_{(\bh,\by)}(t)$ with respect to $t$ can be expressed as
	\beq
	\begin{aligned}
		\psi''_{(\bh,\by)}(t)=& 2\mu_1(2\by^\ctrans\by\bz^\ctrans\bz+(\by^\ctrans\bz+\bz^\ctrans\by)^2)\\
		& + 2\by^\ctrans(\mu_2\bI_M+\alpha_0\bD)\by \\
		&+\sum\nolimits_{j=1}^{J}\alpha_j\Big(h( \by,  \bz,  \bA_j\!+\! \bB_j,  \bA_j\!+\! \bB_j)\\
		&\quad\quad\quad\;\;-\!h( \by,  \bz,  \bA_j,  \bA_j)\!-\!h( \by,  \bz,  \bB_j,  \bB_j)\Big),
	\end{aligned}
	\eeq
	where $h(\by, \bz, \cdot, \cdot)$ is given in \eqref{h_yz}. By resorting to the properties of $\widetilde\bA_j$, $\widetilde \bB_j$ again, one can obtain $h( \by,  \bz,  \bA_j\!+\! \bB_j,  \bA_j\!+\! \bB_j)-\!h( \by,  \bz,  \bA_j,  \bA_j)\!-\!h( \by,  \bz,  \bB_j,  \bB_j)=2h( \by,  \bz,  \bA_j,  \bB_j)$.	Then, $\psi''_{(\bh,\by)}(t)$ can be simplified as $\psi''_{(\bh,\by)}(t)= 2\mu_1(2\by^\ctrans\by\bz^\ctrans\bz+(\by^\ctrans\bz+\bz^\ctrans\by)^2) + 2\by^\ctrans(\mu_2\bI_M+\alpha_0\bD)\by +2\sum\limits_{j=1}^{J}\alpha_j h( \by,  \bz,  \bA_j,  \bB_j)$. 
	
	Hence, if for any $\by,\bz\in\bbC^M$, the conditions 
	\begin{align}
		\!\!\!\!\mu_1(2\by^\ctrans\by\bz^\ctrans\bz\!+\!(\by^\ctrans\bz\!+\!\bz^\ctrans\by)^2)\!+\!{\sum\nolimits_{j=1}^{J}\!\alpha_jh(\by, \bz, \bA_j, \bB_j)}\!\geq\!0, \nonumber\\
		\!\!\!\!\!\!\by^\ctrans(\mu_2\bI_M\!+\!\alpha_0\bD)\by\!\geq\!0,\nonumber
	\end{align}
	hold true, then $\psi''_{(\bh,\by)}(t)\!\geq\!0$, that is, $\psi_{(\bh,\by)}(t)$ is convex. The first inequality holds for all $\by, \bz \!\in\! \bbC^M$ when $\mu_1\!\geq\!\mu_1^\star$, where $\mu_1^\star\!=\!\!\arg \max\limits_{\|\by\|^2=1,\|\bz\|^2=1} {- \frac{1}{2}\sum\limits_{j=1}^{J}\alpha_j h(\by, \bz, \bA_j, \bB_j)}$ resorting to \cite[Corollary 3.4]{aubry2013ambiguity}. The second inequality holds true if $\mu_2\bI_M+\alpha_0\bD \succeq 0$, namely, $\mu_2 \!\geq \max \{0, -\lambda_{\min}(\alpha_0\bD)\}$. 
	
	\subsection{Derivation of \eqref{h_up}}\label{proofupbound}
	Denoting by $\bu=[\by^\trans, \bz^\trans]^\trans\in \bbC^{2M}$, let us first represent $\widetilde h(\by, \bz)= h(\by, \bz, \bM_1, \bM_1)- h(\by, \bz, \bM_0, \bM_2)$ as an equivalent function $\breve h(\bu)$ as follows
	\beq
	\breve h(\bu) =  h_1(\bu, \bM_1, \bM_1) -  h_2(\bu, \bM_0, \bM_2),
	\eeq
	where $h_1(\bu, \bM_1, \bM_1) \!=\! |\bu^\ctrans \bM_{1, 1} \bu|^2  \!-\! |\bu^\ctrans \bM_{1, 2} \bu|^2 
	\!-\! |\bu^\ctrans \bM_{1, 3} \bu|^2 \!+\! |\bu^\ctrans \bM_{1, 4} \bu|^2$ 	with $\bM_{1, q_1}\in \bbC^{2M\times 2M}, q_1=1,2,3,4$ given by
	
	\beq 
	\begin{aligned}
		\!\!\!\!\!\!\bM_{1, 1}&\!=\!\!\left[\!\! {\begin{array}{*{20}{c}}
				\!\!\bM_1\!\!\!&\!\!\!\bzero\\
				\bzero\!\!\!&\!\!\!\bM_1 \!\!\end{array}} \!\!\right]\!\!,   \bM_{1, 2}\!=\!\!\left[\!\! {\begin{array}{*{20}{c}}
				\bM_1&\bzero\!\!\\
				\bzero&\bzero\!\! \end{array}} \!\!\right]\!\!,\\
		\bM_{1, 3}&\!=\!\!\left[\!\! {\begin{array}{*{20}{c}}
				\bzero&\bzero\!\!\\
				\bzero&\bM_1\!\! \end{array}} \!\!\right]\!\!, \bM_{1, 4}\!=\!\!\left[\!\! {\begin{array}{*{20}{c}}
				\!\!\bzero\!\!\!&\!\!\!\bM_1\!\!\\
				\!\!\bM_1\!\!\!&\!\!\!\bzero \!\!\end{array}} \!\!\right]\!\!,
	\end{aligned}
	\eeq
	and $h_2(\bu, \bM_0, \bM_2) 
	=  \bu^\ctrans \bM_{0, 1} \bu \bu^\ctrans \bM_{2, 1}^\ctrans \bu + \bu^\ctrans \bM_{0, 2} \bu $ $\bu^\ctrans \bM_{2, 2}^\ctrans \bu + \bu^\ctrans \bM_{0, 3} \bu \bu^\ctrans \bM_{2, 3}^\ctrans \bu$ with  $\bM_{0, q_2}, \bM_{2, q_2}\in \bbH^{2M\times 2M}, q_2=1,2,3$ given by
	\begin{align}
		&\bM_{0, 1}\!=\!\!\left[\! {\begin{array}{*{20}{c}}
				\!\!\bM_0\!&\!\bzero\\
				\bzero\!&\!\bzero \end{array}} \!\!\right]\!\!,   \bM_{0, 2}\!=\!\!\left[\!\!\! {\begin{array}{*{20}{c}}
				\bzero\!\!\!&\!\!\!\bM_0\\
				\!\bM_0\!\!\!&\!\!\!\bzero \end{array}} \!\!\!\right]\!\!, \bM_{0, 3}\!=\!\!\left[\! {\begin{array}{*{20}{c}}
				\!\!\bzero\!&\!\bzero\\
				\!\!\bzero\!&\!\bM_0\!\! \end{array}} \!\right]\!\!, \nonumber \\
		\!\!\!\!\!\!& \bM_{2, 1}\!=\!\!\left[ \!{\begin{array}{*{20}{c}}
				\!\!\bzero\!&\!\bzero\!\!\\
				\!\!\bzero\!&\!\bM_2\!\! \end{array}} \!\right]\!\!, \bM_{2, 2}\!=\!\!\left[ \!\!\!{\begin{array}{*{20}{c}}
				\bzero\!\!\!&\!\!\!\bM_2\!\!\\
				\bM_2\!\!\!&\!\!\!\bzero\!\! \end{array}} \!\!\!\right]\!\!,\bM_{2, 3}\!=\!\!\left[\! {\begin{array}{*{20}{c}}
				\!\!\bM_2\!&\!\bzero\!\!\\
				\!\!\bzero\!&\!\bzero\!\! \end{array}} \!\right]\!\!.
	\end{align}
	
	Now, a lower bound (non-tight) for $ h_2(\bu, \bM_0, \bM_2)$ is discussed by analyzing the range of the eigenvalues of the involved Hermitian matrices. The matrices $\bM_{0, q_2}\succeq 0, \bM_{2, q_2}\succeq 0, q_2=1,3$, because of $\bM_0\succ 0$ and $\bM_2\succ \!0$. Based on $\text{eig}(\bA\otimes \bB)\!=\!\text{eig}(\bA)\otimes \text{eig}(\bB)$\cite{seber2008matrix}, the eigenvalues of $\text{eig}(\bM_{0,2})$ and $\text{eig}(\bM_{2,2})$ can be computed as $\text{eig}(\bM_{0,2})\!=\![1, -1]^\trans\!\otimes \!\text{eig}(\bM_{0})$ and $\text{eig}(\bM_{2,2})\!=\![1, -1]^\trans\!\otimes \!\text{eig}(\bM_{2})$, respectively. This implies that the eigenvalues of $\bM_{0,2}$ lies in the interval  $\big[\!-\!\lambda_{\max}(\bM_0), \lambda_{\max}(\bM_0)\big]$, and the eigenvalue of $\bM_{2,2}$ belongs to $\big[\!-\!\lambda_{\max}(\bM_2), \lambda_{\max}(\bM_2)\big]$. Hence, a lower bound for $ h_2(\bu, \bM_0, \bM_2)$ is $h_2(\bu, \bM_0, \bM_2) > -{\lambda_{\max}}(\bM_0) {\lambda_{\max}}(\bM_2)\|\bu\|^4$.
	
	Therefore, for any $\bu \in \bbC^{2M}$, $\breve h(\bu)$ satisfies 
	\beq\label{up_h1u}
	\!\!\!\!\breve h(\bu) \!<\! \sum\nolimits_{q_1=1}^4w_{q_1}|\bu^\ctrans \bM_{1, q_1} \bu|^2 \!+\! {\lambda_{\max}}(\bM_0) {\lambda_{\max}}(\bM_2)\|\bu\|^4,\!\!\!\!
	\eeq
	where $w_1\!=\!w_4\!=\!1,w_2\!=\!w_3\!=\!-1$. Letting $\bU\!=\!\bu\bu^\ctrans$, the RHS of equation \eqref{up_h1u} is recast as  $\vect(\bU)^\ctrans\bPhi \vect(\bU)$, where 
	\beq
	\begin{aligned} \label{Phi}
		\bPhi=& {{\lambda_{\max}}(\bM_0) {\lambda_{\max}}(\bM_2)}\bI_{4M^2} \\
		&+  \sum\nolimits_{q_1=1}^4 w_{q_1} \vect\!(\bM_{1, q_1})\vect\!(\bM_{1, q_1})^\ctrans. 
	\end{aligned} 
	\eeq
	Therefore, an upper bound of $\breve h(\bu)$ is given by\cite{sun2016majorization}	
	\beq\label{up_bu2}
	\breve h(\bu) < \lambda_{\max}(\bPhi) \vect(\bU)^\ctrans\vect(\bU)=\lambda_{\max}(\bPhi) \|\bu\|^4.
	\eeq
	
	Substituting $\bu=[\by^\trans, \bz^\trans]^\trans$ back into \eqref{up_bu2}, yields $\breve h(\bu)=\widetilde h(\by, \bz)<\lambda_{\max}(\bPhi) (\|\by\|^2+\|\bz\|^2)^2$.
	
	\bibliography{E:/1Reasearch/4Paper/Refs_dataset/TaoWaveformRefs}
	
\end{document}